\providecommand{\algorithmname}{Algorithm}
\numberwithin{figure}{section}
\numberwithin{equation}{section}
\theoremstyle{plain}
\newtheorem{thm}{\protect\theoremname}
\theoremstyle{plain}
\newtheorem{lem}[thm]{\protect\lemmaname}
\theoremstyle{plain}
\newtheorem{question}[thm]{\protect\questionname}
\theoremstyle{definition}
\newtheorem{defn}[thm]{\protect\definitionname}
\theoremstyle{plain}
\newtheorem{prop}[thm]{\protect\propositionname}
\theoremstyle{plain}
\newtheorem{cor}[thm]{\protect\corollaryname}
\theoremstyle{definition}
\newtheorem{rmk}[thm]{\protect\remarkname}
\renewcommand{\paragraph}{%
  \@startsection{paragraph}{4}%
  {\z@}{1.25ex \@plus 1ex \@minus .2ex}{-1em}%
  {\normalfont\normalsize\bfseries}%
}
\providecommand{\definitionname}{Definition}
\providecommand{\lemmaname}{Lemma}
\providecommand{\propositionname}{Proposition}
\providecommand{\remarkname}{Remark}
\providecommand{\questionname}{Question}
\providecommand{\theoremname}{Theorem}
\providecommand{\corollaryname}{Corollary}
\begin{document}
\def\balign#1\ealign{\begin{align}#1\end{align}}
\def\baligns#1\ealigns{\begin{align*}#1\end{align*}}
\def\balignat#1\ealign{\begin{alignat}#1\end{alignat}}
\def\balignats#1\ealigns{\begin{alignat*}#1\end{alignat*}}
\def\bitemize#1\eitemize{\begin{itemize}#1\end{itemize}}
\def\benumerate#1\eenumerate{\begin{enumerate}#1\end{enumerate}}

\newenvironment{talign*}
 {\let\displaystyle\textstyle\csname align*\endcsname}
 {\endalign}
\newenvironment{talign}
 {\let\displaystyle\textstyle\csname align\endcsname}
 {\endalign}

\def\balignst#1\ealignst{\begin{talign*}#1\end{talign*}}
\def\balignt#1\ealignt{\begin{talign}#1\end{talign}}

\let\originalleft\left
\let\originalright\right
\renewcommand{\left}{\mathopen{}\mathclose\bgroup\originalleft}
\renewcommand{\right}{\aftergroup\egroup\originalright}

\def\Gronwall{Gr\"onwall\xspace}
\def\Holder{H\"older\xspace}
\def\Ito{It\^o\xspace}
\def\Nystrom{Nystr\"om\xspace}
\def\Schatten{Sch\"atten\xspace}
\def\Matern{Mat\'ern\xspace}

\def\tinycitep*#1{{\tiny\citep*{#1}}}
\def\tinycitealt*#1{{\tiny\citealt*{#1}}}
\def\tinycite*#1{{\tiny\cite*{#1}}}
\def\smallcitep*#1{{\scriptsize\citep*{#1}}}
\def\smallcitealt*#1{{\scriptsize\citealt*{#1}}}
\def\smallcite*#1{{\scriptsize\cite*{#1}}}

\def\blue#1{\textcolor{blue}{{#1}}}
\def\green#1{\textcolor{green}{{#1}}}
\def\orange#1{\textcolor{orange}{{#1}}}
\def\purple#1{\textcolor{purple}{{#1}}}
\def\red#1{\textcolor{red}{{#1}}}
\def\teal#1{\textcolor{teal}{{#1}}}

\def\mbi#1{\boldsymbol{#1}} 
\def\mbf#1{\mathbf{#1}}
\def\mrm#1{\mathrm{#1}}
\def\tbf#1{\textbf{#1}}
\def\tsc#1{\textsc{#1}}

\def\mbiA{\mbi{A}}
\def\mbiB{\mbi{B}}
\def\mbiC{\mbi{C}}
\def\mbiDelta{\mbi{\Delta}}
\def\mbif{\mbi{f}}
\def\mbiF{\mbi{F}}
\def\mbih{\mbi{g}}
\def\mbiG{\mbi{G}}
\def\mbih{\mbi{h}}
\def\mbiH{\mbi{H}}
\def\mbiI{\mbi{I}}
\def\mbim{\mbi{m}}
\def\mbiP{\mbi{P}}
\def\mbiQ{\mbi{Q}}
\def\mbiR{\mbi{R}}
\def\mbiv{\mbi{v}}
\def\mbiV{\mbi{V}}
\def\mbiW{\mbi{W}}
\def\mbiX{\mbi{X}}
\def\mbiY{\mbi{Y}}
\def\mbiZ{\mbi{Z}}

\def\textsum{{\textstyle\sum}} 
\def\textprod{{\textstyle\prod}} 
\def\textbigcap{{\textstyle\bigcap}} 
\def\textbigcup{{\textstyle\bigcup}} 

\def\reals{\mathbb{R}} 
\def\integers{\mathbb{Z}} 
\def\rationals{\mathbb{Q}} 
\def\naturals{\mathbb{N}} 
\def\complex{\mathbb{C}} 

\def\what#1{\widehat{#1}}

\def\twovec#1#2{\left[\begin{array}{c}{#1} \\ {#2}\end{array}\right]}
\def\threevec#1#2#3{\left[\begin{array}{c}{#1} \\ {#2} \\ {#3} \end{array}\right]}
\def\nvec#1#2#3{\left[\begin{array}{c}{#1} \\ {#2} \\ \vdots \\ {#3}\end{array}\right]} 

\def\maxeig#1{\lambda_{\mathrm{max}}\left({#1}\right)}
\def\mineig#1{\lambda_{\mathrm{min}}\left({#1}\right)}

\def\Re{\operatorname{Re}} 
\def\indic#1{\mbb{I}\left[{#1}\right]} 
\def\logarg#1{\log\left({#1}\right)} 
\def\polylog{\operatorname{polylog}}
\def\maxarg#1{\max\left({#1}\right)} 
\def\minarg#1{\min\left({#1}\right)} 
\def\Earg#1{\E\left[{#1}\right]}
\def\Esub#1{\E_{#1}}
\def\Esubarg#1#2{\E_{#1}\left[{#2}\right]}
\def\bigO#1{\mathcal{O}\left(#1\right)} 
\def\littleO#1{o(#1)} 
\def\P{\mbb{P}} 
\def\Parg#1{\P\left({#1}\right)}
\def\Psubarg#1#2{\P_{#1}\left[{#2}\right]}
\def\Trarg#1{\Tr\left[{#1}\right]} 
\def\trarg#1{\tr\left[{#1}\right]} 
\def\Var{\mrm{Var}} 
\def\Vararg#1{\Var\left[{#1}\right]}
\def\Varsubarg#1#2{\Var_{#1}\left[{#2}\right]}
\def\Cov{\mrm{Cov}} 
\def\Covarg#1{\Cov\left[{#1}\right]}
\def\Covsubarg#1#2{\Cov_{#1}\left[{#2}\right]}
\def\Corr{\mrm{Corr}} 
\def\Corrarg#1{\Corr\left[{#1}\right]}
\def\Corrsubarg#1#2{\Corr_{#1}\left[{#2}\right]}
\newcommand{\info}[3][{}]{\mathbb{I}_{#1}\left({#2};{#3}\right)} 
\newcommand{\staticexp}[1]{\operatorname{exp}(#1)} 
\newcommand{\loglihood}[0]{\mathcal{L}} 


\providecommand{\arccos}{\mathop\mathrm{arccos}}
\providecommand{\dom}{\mathop\mathrm{dom}}
\providecommand{\diag}{\mathop\mathrm{diag}}
\providecommand{\tr}{\mathop\mathrm{tr}}
\providecommand{\card}{\mathop\mathrm{card}}
\providecommand{\sign}{\mathop\mathrm{sign}}
\providecommand{\conv}{\mathop\mathrm{conv}} 
\def\rank#1{\mathrm{rank}({#1})}
\def\supp#1{\mathrm{supp}({#1})}

\providecommand{\minimize}{\mathop\mathrm{minimize}}
\providecommand{\maximize}{\mathop\mathrm{maximize}}
\providecommand{\subjectto}{\mathop\mathrm{subject\;to}}

\def\openright#1#2{\left[{#1}, {#2}\right)}

\ifdefined\nonewproofenvironments\else
\ifdefined\ispres\else
\newenvironment{proof-of-theorem}[1][{}]{\noindent\textbf{Proof of Theorem {#1}}
   \hspace*{1em}}{\qed\\}
 
\fi
\makeatletter
\@addtoreset{equation}{section}
\makeatother
\def\theequation{\thesection.\arabic{equation}}

\newcommand{\cmark}{\ding{51}}

\newcommand{\xmark}{\ding{55}}

\newcommand{\eq}[1]{\begin{align}#1\end{align}}
\newcommand{\eqn}[1]{\begin{align*}#1\end{align*}}
\renewcommand{\Pr}[1]{\mathbb{P}\left( #1 \right)}
\newcommand{\Ex}[1]{\mathbb{E}\left[#1\right]}

\newcommand{\matt}[1]{{\textcolor{Maroon}{[Matt: #1]}}}
\newcommand{\kook}[1]{{\textcolor{blue}{[Kook: #1]}}}
\definecolor{OliveGreen}{rgb}{0,0.6,0}
\newcommand{\sv}[1]{{\textcolor{OliveGreen}{[Santosh: #1]}}}

\global\long\def\on#1{\operatorname{#1}}%

\global\long\def\bw{\mathsf{Ball\ walk}}%
\global\long\def\sw{\mathsf{Speedy\ walk}}%
\global\long\def\har{\mathsf{Hit\text{-}and\text{-}Run}}%
\global\long\def\ino{\mathsf{In\text{-}and\text{-}Out}}%
\global\long\def\gw{\mathsf{Gaussian\ walk}}%
\global\long\def\ps{\mathsf{Proximal\ sampler}}%
\global\long\def\gc{\mathsf{Gaussian\ cooling}}%
\global\long\def\sps{\mathsf{Proximal\ Gaussian \ cooling}}%
\global\long\def\psunif{\mathsf{Prox\text{-}uniform}}%
\global\long\def\psgauss{\mathsf{Prox\text{-}Gaussian}}%

\global\long\def\dw{\mathsf{Dikin\ walk}}%
\global\long\def\rhmc{\mathsf{Riemannian\ Hamiltonian\ Monte\ Carlo}}%
\global\long\def\rlmc{\mathsf{Riemannian\ Langevin\ Monte\ Carlo}}%
\global\long\def\ml{\mathsf{Mirror\ Langevin}}%

\global\long\def\O{\mathcal{O}}%
\global\long\def\Otilde{\widetilde{\mathcal{O}}}%
\global\long\def\Omtilde{\widetilde{\Omega}}%

\global\long\def\E{\mathbb{E}}%
\global\long\def\Z{\mathbb{Z}}%
\global\long\def\P{\mathbb{P}}%
\global\long\def\N{\mathbb{N}}%

\global\long\def\K{\mathcal{K}}%

\global\long\def\R{\mathbb{R}}%
\global\long\def\Rd{\mathbb{R}^{d}}%
\global\long\def\Rdd{\mathbb{R}^{d\times d}}%
\global\long\def\Rn{\mathbb{R}^{n}}%
\global\long\def\Rnn{\mathbb{R}^{n\times n}}%

\global\long\def\psd{\mathbb{S}_{+}^{d}}%
\global\long\def\pd{\mathbb{S}_{++}^{d}}%

\global\long\def\defeq{\stackrel{\mathrm{{\scriptscriptstyle def}}}{=}}%

\global\long\def\cpi{C_{\mathsf{PI}}}%
\global\long\def\clsi{C_{\mathsf{LSI}}}%

\global\long\def\veps{\varepsilon}%
\global\long\def\lda{\lambda}%
\global\long\def\vphi{\varphi}%

\global\long\def\half{\frac{1}{2}}%
\global\long\def\nhalf{\nicefrac{1}{2}}%
\global\long\def\texthalf{{\textstyle \frac{1}{2}}}%

\global\long\def\ind{\mathds{1}}%
\global\long\def\op{\mathsf{op}}%

\global\long\def\chooses#1#2{_{#1}C_{#2}}%

\global\long\def\vol{\on{vol}}%

\global\long\def\law{\on{law}}%

\global\long\def\tr{\on{tr}}%

\global\long\def\diag{\on{diag}}%

\global\long\def\Diag{\on{Diag}}%

\global\long\def\inter{\on{int}}%

\global\long\def\esssup{\on{ess\,sup}}%

\global\long\def\e{\mathrm{e}}%

\global\long\def\id{\mathrm{id}}%

\global\long\def\spanning{\on{span}}%

\global\long\def\rows{\on{row}}%

\global\long\def\cols{\on{col}}%

\global\long\def\rank{\on{rank}}%

\global\long\def\T{\mathsf{T}}%

\global\long\def\bs#1{\boldsymbol{#1}}%

\global\long\def\eu#1{\EuScript{#1}}%

\global\long\def\mb#1{\mathbf{#1}}%

\global\long\def\mbb#1{\mathbb{#1}}%

\global\long\def\mc#1{\mathcal{#1}}%

\global\long\def\mf#1{\mathfrak{#1}}%

\global\long\def\ms#1{\mathscr{#1}}%

\global\long\def\mss#1{\mathsf{#1}}%

\global\long\def\msf#1{\mathsf{#1}}%

\global\long\def\textint{{\textstyle \int}}%
\global\long\def\Dd{\mathrm{D}}%
\global\long\def\D{\mathrm{d}}%
\global\long\def\grad{\nabla}%
 
\global\long\def\hess{\nabla^{2}}%
 
\global\long\def\lapl{\triangle}%
 
\global\long\def\deriv#1#2{\frac{\D#1}{\D#2}}%
 
\global\long\def\pderiv#1#2{\frac{\partial#1}{\partial#2}}%
 
\global\long\def\de{\partial}%
\global\long\def\lagrange{\mathcal{L}}%
\global\long\def\Div{\on{div}}%

\global\long\def\Gsn{\mathcal{N}}%
 
\global\long\def\BeP{\textnormal{BeP}}%
 
\global\long\def\Ber{\textnormal{Ber}}%
 
\global\long\def\Bern{\textnormal{Bern}}%
 
\global\long\def\Bet{\textnormal{Beta}}%
 
\global\long\def\Beta{\textnormal{Beta}}%
 
\global\long\def\Bin{\textnormal{Bin}}%
 
\global\long\def\BP{\textnormal{BP}}%
 
\global\long\def\Dir{\textnormal{Dir}}%
 
\global\long\def\DP{\textnormal{DP}}%
 
\global\long\def\Expo{\textnormal{Expo}}%
 
\global\long\def\Gam{\textnormal{Gamma}}%
 
\global\long\def\GEM{\textnormal{GEM}}%
 
\global\long\def\HypGeo{\textnormal{HypGeo}}%
 
\global\long\def\Mult{\textnormal{Mult}}%
 
\global\long\def\NegMult{\textnormal{NegMult}}%
 
\global\long\def\Poi{\textnormal{Poi}}%
 
\global\long\def\Pois{\textnormal{Pois}}%
 
\global\long\def\Unif{\textnormal{Unif}}%

\global\long\def\bpar#1{\bigl(#1\bigr)}%
\global\long\def\Bpar#1{\Bigl(#1\Bigr)}%

\global\long\def\snorm#1{\|#1\|}%
\global\long\def\bnorm#1{\bigl\Vert#1\bigr\Vert}%
\global\long\def\Bnorm#1{\Bigl\Vert#1\Bigr\Vert}%

\global\long\def\sbrack#1{[#1]}%
\global\long\def\bbrack#1{\bigl[#1\bigr]}%
\global\long\def\Bbrack#1{\Bigl[#1\Bigr]}%

\global\long\def\sbrace#1{\{#1\}}%
\global\long\def\bbrace#1{\bigl\{#1\bigr\}}%
\global\long\def\Bbrace#1{\Bigl\{#1\Bigr\}}%

\global\long\def\Abs#1{\left\lvert #1\right\rvert }%
\global\long\def\Par#1{\left(#1\right)}%
\global\long\def\Brack#1{\left[#1\right]}%
\global\long\def\Brace#1{\left\{  #1\right\}  }%

\global\long\def\inner#1{\langle#1\rangle}%
 
\global\long\def\binner#1#2{\left\langle {#1},{#2}\right\rangle }%

\global\long\def\norm#1{\|#1\|}%
\global\long\def\onenorm#1{\norm{#1}_{1}}%
\global\long\def\twonorm#1{\norm{#1}_{2}}%
\global\long\def\infnorm#1{\norm{#1}_{\infty}}%
\global\long\def\fronorm#1{\norm{#1}_{\text{F}}}%
\global\long\def\nucnorm#1{\norm{#1}_{*}}%
\global\long\def\staticnorm#1{\|#1\|}%
\global\long\def\statictwonorm#1{\staticnorm{#1}_{2}}%

\global\long\def\mmid{\mathbin{\|}}%

\global\long\def\otilde#1{\widetilde{\mc O}(#1)}%
\global\long\def\wtilde{\widetilde{W}}%
\global\long\def\wt#1{\widetilde{#1}}%

\global\long\def\KL{\msf{KL}}%
\global\long\def\dtv{d_{\textrm{\textup{TV}}}}%
\global\long\def\FI{\msf{FI}}%
\global\long\def\tv{\msf{TV}}%

\global\long\def\cov{\mathrm{Cov}}%
\global\long\def\var{\mathrm{Var}}%

\global\long\def\cred#1{\textcolor{red}{#1}}%
\global\long\def\cblue#1{\textcolor{blue}{#1}}%
\global\long\def\cgreen#1{\textcolor{green}{#1}}%
\global\long\def\ccyan#1{\textcolor{cyan}{#1}}%

\global\long\def\iff{\Leftrightarrow}%

\global\long\def\bmu{\bar{\mu}}
\global\long\def\bK{\bar{\mc K}}
 
\global\long\def\textfrac#1#2{{\textstyle \frac{#1}{#2}}}%

\title{R\'enyi-infinity constrained sampling with $d^3$ membership queries}
\author{
Yunbum Kook\thanks{
  School of Computer Science,
  Georgia Institute of Technology, \texttt{yb.kook@gatech.edu}
} \ \ \ \ \ \ \ \ \ \ \ \ \ \ \ \ \ 
Matthew S.\ Zhang\thanks{
  Dept. of Computer Science,
  University of Toronto, and Vector Institute, \texttt{matthew.zhang@mail.utoronto.ca}
}
}
\date{ }
\maketitle

\begin{abstract}
    Uniform sampling over a convex body is a fundamental algorithmic problem, yet the convergence in KL or R\'enyi divergence of most samplers remains poorly understood. 
    In this work, we propose a constrained proximal sampler, a principled and simple algorithm that possesses elegant convergence guarantees. Leveraging the uniform ergodicity of this sampler, we show that it converges in the R\'enyi-infinity divergence ($\EuScript R_\infty$) with no query complexity overhead when starting from a warm start. This is the strongest of commonly considered performance metrics, implying rates in $\{\EuScript R_q, \mathsf{KL}\}$ convergence as special cases.

    By applying this sampler within an annealing scheme, we propose an algorithm which can approximately sample $\varepsilon$-close to the uniform distribution on convex bodies in $\EuScript R_\infty$-divergence with $\widetilde{\mathcal{O}}(d^3\, \text{polylog} \frac{1}{\varepsilon})$ query complexity. This improves on all prior results in $\{\EuScript R_q, \mathsf{KL}\}$-divergences, without resorting to any algorithmic modifications or post-processing of the sample. It also matches the prior best known complexity in total variation distance.
\end{abstract}

\section{Introduction}\label{sec:intro}

Uniform sampling from convex bodies is a fundamental question in computer science. 
Its applications have spanned from differential privacy~\cite{mcsherry2007mechanism, hardt2010geometry,mironov2017renyi} to scientific computing~\cite{cousins2016practical,haraldsdottir2017chrr,kook2022sampling} and machine learning~\cite{bingham2019pyro,stan}.

The standard computational model assumes that, given a convex body $\K \subset \R^d$ which satisfies $B_1(0) \subseteq \K \subseteq B_D(0)$, the algorithm can access a \emph{membership oracle} \cite{grotschel2012geometric} which, given a point $x \in \R^d$, answers \emph{Yes} or \emph{No} to the query ``Is $x \in \K$?''.
This framework possesses two advantages: (i) it is the most general framework under which one can conduct analysis, subsuming other computational models as particular cases,
(ii) it has been thoroughly studied both in optimization and sampling.

Under this computational model, what is the complexity in terms of membership oracle queries to generate a random sample from a convex body $\K$ (i.e., from $\pi = \frac{1}{\vol(\K)}\ind_{\K}$)?
As many of the desired applications require the dimension $d$ to be large, it is typically impractical to generate samples exactly from the uniform distribution. Among other difficulties, this is due to the intractability of deterministically computing the normalizing factor $\vol(\K)$. Instead, one resorts to sampling from an approximate distribution which is $\varepsilon$-close to the uniform in an appropriate sense. Subsequent work has aimed to develop algorithms and analysis with optimal asymptotic dependence both on the dimension $d$ and error tolerance $\varepsilon$.

The seminal work of \cite{dyer1991random} proposed a randomized polynomial time algorithm with oracle complexity $\text{poly}(d,\log\frac{D}{\veps})$.
Since then, a proliferation of improvements to both the algorithm and analysis~\cite{lovasz1990mixing, lovasz1991compute, applegate1991sampling, lovasz1993random,kannan1997random,lovasz2006hit,cousins2018gaussian} have led to a refined understanding of the complexity of this problem. Such algorithms are typically instances of a random walk, which first samples a point $x_0 \sim \mu_0$ for a tractable initial measure $\mu_0$, and then produces $x_1, x_2, \ldots, x_{k_*}$ following a well-specified, possibly random procedure for some number of iterations $k_*$. Under mild assumptions, it is possible to make $\mu_{k_*}$, the distribution of $x_{k_*}$, $\varepsilon$-close to $\pi$ in some desired sense. These iterative schemes are called often Markov chain Monte Carlo (MCMC) methods.

The notion of closeness can take many different forms. Natural choices of metric\footnote{In this text, we will often refer to any of these divergences as a ``metric'' in the sense of performance metric, although only the total variation is a true metric on the space of probability measures.} such as the $\KL$ divergence and total variation ($\tv$) fit into the following hierarchy (see \S\ref{sec:prelim} for definitions):
\[
    \tv^2 \le \msf{KL}\text{ divergence} \leq \eu R_q\text{ divergence} \leq \eu R_\infty\text{ divergence}.
\]
While the complexity of uniformly sampling convex bodies was known to be $\Otilde(d^3\log^2\frac{1}{\veps})$ for $\veps$-$\tv$~\cite{cousins2018gaussian,jia2021reducing}, the best known rates in any other metric had much worse dependence on key parameters. Additionally, they relied on algorithmic modifications which deviated from practical implementations~\cite{pmlr-v65-brosse17a, gurbuzbalaban2022penalized, mangoubi2022sampling}. 
On the other hand, the literature on unconstrained sampling has provided stronger guarantees in $\KL$ or $\chi^2$ \emph{without} overhead in complexity, through refined analyses of Langevin-based algorithms~\cite{vempala2019rapid, chewi2021analysis, chen2022improved}.
Adapting those guarantees to the convex body setting is not trivial, since restricting the distribution to $\mc K$ places a strong constraint on the candidate algorithm. The discrepancies in complexity when going beyond $\tv$ leads us to a fundamental question on the complexity of uniform sampling in a stronger metric:
\begin{center}
\emph{Is it possible to achieve a query complexity of $\Otilde(d^3 \polylog \frac{1}{\varepsilon})$ for uniform sampling in $\{\KL, \eu R_q, \eu R_\infty\}$-divergence?}
\end{center}

In general, almost all algorithms take two phases: ($\msf{P1}$) \emph{warm-start generation} and ($\msf{P2}$) \emph{faster sampling under warm-start}. 
$\msf{P1}$ finds a tractable initial distribution $\mu_0$ towards the target $\pi$, whose closeness to $\pi$ is measured by the $\eu R_\infty$ divergence at initialization, $M = \sup \frac{\D \mu_0}{\D\pi} = \exp (\eu R_\infty(\mu_0 \mmid \pi))$. $\msf{P2}$ then samples from the target by leveraging the initial warmness. 
This approach is also taken by~\cite{cousins2018gaussian}, the best known uniform-sampling algorithm in $\tv$. However, it fails to go beyond $\tv$, as a warm-start generated by the algorithm holds only in a \emph{weak sense}, causing ``$\tv$-collapse'' (we return to this shortly). The resulting sampler can only have guarantees in $\tv$.

Ignoring $\msf{P1}$ for the moment, one hopes that, towards stronger guarantees, there should be a sampler which converges in $\eu R_q$ in $\msf{P2}$. 
Recently, \cite{kook2024inout} proposed an algorithm with query complexity $\Otilde(qMd^2 \norm{\Sigma}_{\op} \polylog \frac{1}{\varepsilon})$ for $\Sigma$ covariance of $\pi$, which seamlessly extends the best complexity of $\Otilde(Md^2\norm{\Sigma}_{\op} \polylog \frac{1}{\varepsilon})$ in $\tv$ that is achieved by $\bw$. 
However, this immediately raises a problem: 
it requires a $\eu R_\infty$ guarantee at initialization in order to achieve $\eu R_q$-convergence. 
One might want to sample from $\pi$ directly from some simple feasible start, such as a uniform distribution on the unit ball. Unfortunately, such choices potentially incur exponential warmness in dimension (i.e., $M\gtrsim \exp(d)$). This is where $\msf{P1}$ demonstrates its importance, allowing us to address this issue through an \emph{annealing scheme}.

\paragraph{Annealing algorithms}
The strategy of annealing for volume computation and uniform sampling appeared in~\cite{dyer1991random}.
It proceeds with a sequence of distributions $\{\mu_k \}_{k \leq k_*}$ which steadily evolve from $\mu_0 \propto \ind_{B_1(0)}$, the uniform distribution on a unit ball, towards $\mu_{k_*} = \pi$ the uniform distribution on the entire body. This is done by fattening $\mu_k$, steadily increasing its variance, until it is close enough to $\pi$ to be sampled directly with MCMC.\footnote{In the original paper, this is done by a sequence of uniform distributions, but this strategy has since been superseded by considering a sequence of truncated Gaussians with appropriate variance.} This general schema, wherein a tractable, well-understood distribution is gradually converted to a generic one, has been called \emph{annealing}.

This scheme has been refined by annealing via exponential distributions~\cite{lovasz2006simulated} and via Gaussians.
The latter, whose use of $\{\mc N(0, \sigma^2_k I_d)|_{\mc K}\}_{k \leq k_*}$ was pioneered by the $\gc$ algorithm~\cite{cousins2018gaussian}, turns out to be a robust choice for $\{\mu_k\}_{k \leq k_*}$ under an appropriate update schedule for the variance $\sigma_k^2$. 
These intermediate distributions are ``stepping stones'' towards the desired uniform distribution, in the sense that each distribution is $\O(1)$ close to the subsequent one in $\eu R_\infty$, and no more than $k_* = \Otilde(d)$ annealing stages are required in total.

$\gc$ can guarantee $M = \O(1)$ in a weak sense with the total query complexity $\Otilde(C^2 d^3 \polylog \frac{1}{\varepsilon})$ under a canonical set-up where $\K$ is \emph{well-rounded} (i.e., $\E_{X \sim \pi}[\norm{X}^2] \leq C^2 d$ for some constant $C$) \cite{kannan1997random,lovasz2006simulated,lovasz2006fast,cousins2018gaussian}.
Later work~\cite{jia2021reducing} claims that one can find an affine transformation which maps a skewed convex body to a well-rounded one with $C=\O(1)$, via a randomized algorithm with query complexity $\Otilde(d^3)$. 

\paragraph{TV-collapse}
Such annealing schemes are interwoven with an \emph{approximate} sampler, typically an MCMC subroutine such as $\bw$ and $\har$, requiring the sampler to approximately sample from evolving measures. 
Hence, the warmness for a next target $\mu_{k+1}$ is actually measured with respect to a distribution $\tilde{\mu}_{k}$, approximately close to $\mu_k$.
Standard constrained samplers like $\bw$ or $\har$ are only able to provide guarantees (between $\tilde{\mu}_{k}$ and $\mu_k$) in $\tv$ for truncated Gaussian. However, due to failure of the triangle inequality through $\tv$ and $\eu R_q$ in general, the warmness of $\tilde{\mu}_{k}$ with respect to $\mu_{k+1}$ cannot be bounded. Even worse, it is essential that the warm-start for these samplers is in some R\'enyi divergence, preferably the $\eu R_{\infty}$ divergence. As a result, the annealing schemes fail to carry a strong enough notion of warmness across the evolving measures. Instead, it passes a \emph{weakened form of warmness} to the last step via a coupling argument, causing any sampling guarantees based on this warmness to degrade to $\tv$ distance.

Given this challenge, we state a refinement of ($\msf{P1}$):
\begin{question}
\begin{centering}
    \emph{Can we generate a \textbf{true} warm-start $M=\O(1)$ with $\Otilde(d^3)$ queries?}
\end{centering}
\end{question}
The answer to this hinges upon the existence of a sampler which converges in $\eu R_\infty$. As the triangle inequality holds for $\eu R_\infty$, this would enable us to anneal the distribution \emph{without compromising} $\eu R_\infty$-closeness throughout! Examining this sends us back to a more specific form of ($\msf{P2}$):
\begin{question}\label{q:warm-question}
\begin{centering}
{\em Given $M=\O(1)$ and uniform or truncated Gaussian target, is there a sampler with $\eu R_\infty$ guarantees \textbf{without} any overhead in query complexity?}
\end{centering}
\end{question}

\paragraph{R\'enyi divergence: theoretical gap and privacy}
From our earlier discussion, we cannot help but circle back to the study of R\'enyi divergence in constrained sampling. This is theoretically important in its own right, since it leads us to examine a current gap between unconstrained and constrained sampling.
More specifically, the improvement in metric from $\tv$ to $\{\KL, \eu R_q\}$ nicely parallels the work in unconstrained sampling, where convergence in weaker metrics (Wasserstein, $\tv$)~\cite{dalalyan2012sparse, dalalyan2017further} paved the way for a characterization of rates in $\{\KL, \eu R_q\}$~\cite{cheng2018convergence, mou2022improved, durmus2019analysis, vempala2019rapid, ma2021there, chewi2021analysis, chen2022improved,zhang2023improved,altschuler2023faster}. This research program led to many fruitful insights into algorithmic properties and mathematical techniques, and inspires us to attempt the same in the constrained setting. See \S\ref{sec:related} for detail.

In purely practical terms, the $q$-R\'enyi divergence, particularly $\eu R_\infty$, has appeared in the literature most often in the context of differential privacy (DP)~\cite{mironov2017renyi}. $\eu R_q$ can be used to establish $(\epsilon, \delta)$-DP algorithms~\cite{hardt2010geometry}, while $\eu R_\infty$ can directly establish $\epsilon$-DP algorithms. This connection has been well-explored in other sampling applications~\cite{ganesh2020faster, gopi2022private, gopi2023private}. However, convergence in $\eu R_\infty$ has remained largely out of reach, and the existing algorithms are both suboptimal in complexity and limited in scope~\cite{hardt2010geometry, mangoubi2022sampling}.

\subsection{Contributions}
We close a gap in our current understanding of uniform sampling via a principled approach.
We first leverage recent results~\cite{kook2024inout} for sampling on convex bodies, proposing a generalization of~\cite{kook2024inout} that we call the $\ps$. Using uniform ergodicity results from the study of Markov chains~\cite{del2003contraction}, we show that it is possible to achieve convergence in the \emph{very strong} notion of $\eu R_\infty$ for both uniform and truncated Gaussian distributions on $\mc K$ from an $\O(1)$ warm-start without suffering any loss in complexity. By using the $\ps$ inside an annealing scheme we call $\sps$, we show that it is possible to sample uniformly from a well-rounded convex body (i.e., $\E_{X \sim \pi}[\norm{X}^2] \leq C^2 d$) with $\Otilde(C^2 d^3\polylog  \frac{1}{\varepsilon})$ queries in expectation. This convergence takes place in $\eu R_\infty$, much stronger than the prior best results in $\tv$. 

This framework is principled, and the essential benefits of the annealing scheme and sampler are made clearly evident. 
Furthermore, the approach is simple, with no need to resort to algorithmic modifications as seen in other work~\cite{hardt2010geometry, pmlr-v65-brosse17a, mangoubi2022sampling}. We present our improvements in further detail below.

\paragraph{Result 1: $\eu R_\infty$ convergence from a warm-start}

We show that, given an $\O(1)$ warm-start, there exists an algorithm which samples approximately with $\varepsilon$-accuracy in $\eu R_{\infty}$ for any $\varepsilon > 0$. The target distribution is either the uniform distribution on $\mc K$ or a truncated Gaussian of the form $\mc N(0, \sigma^2 I_d)|_{\K}$. The oracle complexity is given by the following theorem.
\begin{thm}[{Complexity of $\ps$ from a warm-start, informal version of Theorems~\ref{thm:ps-unif-Rinfty} and~\ref{thm:prox-sampler-truncated-gsn-final}}]
    Consider a target of either of the forms $\pi \propto \ind_{\mc K}$, $\pi_{\sigma^2} = \mc N(0, \sigma^2 I_d)|_{\mc K}$, where $B_1(0) \subseteq \mc K \subseteq B_D(0)$. Then, given $\pi_0$ with $\eu R_\infty(\pi_0 \mmid \pi) = \O(1)$, the $\ps$ algorithm with probability $1-\eta$ succeeds in sampling from $\nu$, $\eu R_\infty(\nu \mmid \pi) \leq \varepsilon$, requiring no more than
\[
    \O\Bigl(d^2 \msf{D}^2 \log^{\O(1)} \frac{D}{\eta \epsilon}\Bigr)\,, \qquad \msf{D}^2 := \begin{cases}
    \frac{1}{d} \vee \sigma^2 & \text{if the target is $\pi_{\sigma^2}$}\,,\\
    D^2 & \text{if the target is $\pi$}\,,
\end{cases}
\]
membership queries in expectation.
\end{thm}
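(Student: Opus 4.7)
The plan is to establish geometric contraction of the $\ps$ in $\eu R_\infty$ by combining a Doeblin-type minorization of the transition kernel with a careful balance of step size against per-iteration cost. First, I would compute the explicit one-step density after marginalizing the auxiliary Gaussian: for the uniform target this gives $p_h(x,x') \propto \exp(-|x-x'|^2/(4h))$ on $\mc K$, and for $\pi_{\sigma^2}$ one obtains a truncated Gaussian with effective variance $h\sigma^2/(h+\sigma^2)$. Pointwise bounds on these kernels relative to $\pi$ (upper-bounding the normalizer by $\vol(\mc K)$ or the ambient Gaussian mass, and lower-bounding the exponential using the relevant diameter scale $\msf{D}$) yield a minorization $P_h(x,\cdot) \ge \beta\,\pi(\cdot)$ uniformly in $x \in \mc K$, with $\beta \asymp \exp(-\Theta(\msf{D}^2/h))$.

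Given the minorization, the standard Del Moral decomposition $P_h(x,\cdot) = \beta\,\pi + (1-\beta)\,Q(x,\cdot)$ implies that the relative density contracts in the $L^\infty(\pi)$ norm:
\[
\Bnorm{\frac{\D(\mu P_h)}{\D\pi} - 1}_\infty \le (1-\beta)\,\Bnorm{\frac{\D\mu}{\D\pi} - 1}_\infty\,.
\]
Iterating this yields geometric contraction of $e^{\eu R_\infty(\mu_k\mmid\pi)}-1$ at rate $1-\beta$, so an $\O(1)$ warm-start reaches $\eu R_\infty(\mu_k\mmid\pi)\le\varepsilon$ within $\Otilde(\beta^{-1}\log(1/\varepsilon))$ iterations. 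This is precisely the uniform-ergodicity framework of \cite{del2003contraction} lifted to $\eu R_\infty$.

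The main obstacle is the tension in the choice of $h$: taking $h\asymp\msf{D}^2$ gives $\beta=\Theta(1)$ and only $\Otilde(1)$ iterations, but the backward rejection step becomes prohibitively expensive; while taking $h\asymp 1/d$ as in \cite{kook2024inout} gives $\Otilde(d)$ queries per iteration but a useless single-step minorization constant $\exp(-\Theta(d\,\msf{D}^2))$. The resolution is a multi-step minorization: run $\Otilde(d\,\msf{D}^2)$ iterations at $h=\Theta(1/d)$ to reach $\eu R_q(\mu_k\mmid\pi)\le\O(1)$ for some large $q=\Theta(\log(1/\varepsilon))$ using the $\eu R_q$-contraction estimates of \cite{kook2024inout}, and then exploit ultracontractivity of the Gaussian backward kernel on the bounded domain $\mc K$ to upgrade to $\eu R_\infty$ in a handful of additional steps. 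The technically delicate piece is making this $\eu R_q\to\eu R_\infty$ upgrade quantitative within the convex-body constraint, which amounts to showing that the backward Gaussian kernel restricted to $\mc K$ maps $L^q(\pi)$ into $L^\infty(\pi)$ with a norm growing only polylogarithmically in the inverse step size. Combining, the total oracle complexity is $\Otilde(d^2\,\msf{D}^2\,\polylog(D/(\eta\varepsilon)))$, and the truncated-Gaussian case follows by substituting $\msf{D}^2\leftarrow(1/d)\vee\sigma^2$ and using that the effective Gaussian spread restricts the relevant diameter to the scale $\sigma$.
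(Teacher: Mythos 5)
Your overall framing—geometric $L^\infty(\pi)$ contraction of the proximal kernel, established via a uniform-ergodicity argument—is the right one, and you correctly diagnose why a naive single-step Doeblin minorization fails: with the step size $h$ that makes each backward rejection step cheap, the per-step minorization constant is $\exp(-\Theta(d^{\O(1)}))$ and hence useless. Where your plan diverges from a working proof, and where it has a genuine gap, is in the proposed resolution.

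Your resolution is to run at small $h$ to make $\eu R_q$ small for $q = \Theta(\log\frac1\veps)$ and then ``exploit ultracontractivity of the Gaussian backward kernel on $\mc K$'' to upgrade $L^q(\pi)$ control of the density ratio to $L^\infty(\pi)$ control. This step is never substantiated, and it is exactly the hard part. The backward kernel $\pi^{X\mid Y}(\cdot\mid y)$ is a Gaussian \emph{truncated} to $\mc K$ with a $y$-dependent normalizer; there is no clean $L^q(\pi)\to L^\infty(\pi)$ operator bound for it with only polylogarithmic dependence on $h^{-1}$, and any such bound would have to account for points $y$ whose proposal Gaussian places almost no mass in $\mc K$. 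Relatedly, your claimed closed form $p_h(x,x')\propto\exp(-\|x-x'\|^2/4h)$ on $\mc K$ is not the one-step kernel of $\ps$: the $y$-integral carries the normalizer $\int_{\mc K}\mc N(z;y,hI)\,\D z$ in the denominator, so the kernel is not a function of $x-x'$ alone and the density-ratio bounds you want from it do not follow directly.

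What the paper actually does sidesteps this entirely, and is worth internalizing. It uses the operator-norm inequality (Proposition 3.23 of \cite{Rudolf2011ExplicitEB}; Theorem~\ref{thm:boosting} above)
\[
\bnorm{P^{n}-1_{\pi}}_{L^{\infty}\to L^{\infty}} \leq 2\,\esssup_{x}\norm{P^{n}(\cdot\mid x)-\pi}_{\tv}\,,
\]
which converts a \emph{uniform-in-$x$ TV bound after $n$ steps} directly into $L^\infty$-contraction after $n$ steps, with no minorization constant and no hypercontractivity. The uniform TV bound is then obtained from the proximal sampler's $\KL$ decay under the target's \eqref{eq:lsi}: from any $x\in\mc K$, one step of $\ps$ produces warmness at most $\exp(\O(d+h^{-1}D^2))$, so the initial $\KL$ is $\O(d+h^{-1}D^2)$, and exponential $\KL$ decay at rate $h/\clsi(\pi)$ gives $\esssup_x\norm{\delta_xP^n-\pi}_{\tv}\leq\veps/M$ after $n=\Otilde(h^{-1}\clsi(\pi)\log\frac{M(d+h^{-1}D^2)}{\veps})$ iterations. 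The crucial point you would have needed is this double-logarithmic dependence on the worst-case warmness, which is a property of $\KL$ (not of large-$q$ R\'enyi): it is what makes the any-start overhead merely polylogarithmic. You instead take $q$ large, which inflates the iteration count by $q$ and still leaves the $\eu R_q\to\eu R_\infty$ gap open. A smaller issue: the step size is $h=\widetilde\Theta(d^{-2})$ in the paper, not $\Theta(d^{-1})$ as you wrote, with $\Otilde(M)$ (not $\Otilde(d)$) membership queries per iteration.
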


The algorithm is a generalization of the $\ino$ sampler~\cite{kook2024inout} (Algorithm~\ref{alg:prox-uniform}). We call this $\ps$ (Algorithm~\ref{alg:prox-uniform}, \ref{alg:prox-gaussian}), in analogy with algorithms in unconstrained sampling~\cite{lee2021structured}.  The chief benefit of this scheme is its analytic simplicity, which allows us to extend known guarantees with little effort. More specifically, it is naturally connected to isoperimetric properties such as the log-Sobolev constant of the target. This fact will be key in establishing convergence in $\eu R_\infty$. 

\begin{algorithm}
\hspace*{\algorithmicindent} \textbf{Input:} initial point
$x_0 \sim \mu_{0}$, convex body $\mc K\subset\R^{d}$, iterations $T$, threshold $N$, and $h>0$.

\hspace*{\algorithmicindent} \textbf{Output:} $x_{T+1}$.

\begin{algorithmic}[1] \caption{$\msf{Prox\text{-}uniform}$} 
\label{alg:prox-uniform}
\FOR{$i=0,\dotsc,T$}

\STATE Sample $y_{i+1}\sim
\mc N(x_{i},hI_{d})$.\label{line:forward}

\STATE 
Repeat: Sample $x_{i+1}\sim\mc N(y_{i+1},hI_{d})$ until $x_{i+1}\in\mc K$ or \#attempts$_{i}$ $\geq N$ (declare \textbf{Failure}).\label{line:backward} 
\ENDFOR
\end{algorithmic}
\end{algorithm}

\begin{algorithm}
\hspace*{\algorithmicindent} \textbf{Input:} initial point
$x_0 \sim \mu_{0}$, convex body $\mc K\subset\R^{d}$, iterations $T$, threshold $N$, and $h>0$.

\hspace*{\algorithmicindent} \textbf{Output:} $x_{T+1}$.

\begin{algorithmic}[1] \caption{$\msf{Prox\text{-}Gaussian}$} 
\label{alg:prox-gaussian}
\FOR{$i=0,\dotsc,T$}

\STATE Sample $y_{i+1}\sim
\mc N(x_{i},hI_{d})$.\label{line:forward-gsn}

\STATE 
Repeat: Sample $x_{i+1}\sim\mc N\bpar{\frac{1}{1+h\sigma^{-2}}y,\frac{h}{1+h\sigma^{-2}}I_{d}}\cdot\ind_{\mc K}$ until $x_{i+1}\in\mc K$ or \#attempts$_{i}$ $\geq N$ (declare \textbf{Failure}).\label{line:backward-gsn} 
\ENDFOR
\end{algorithmic}
\end{algorithm}

The surprising fact is that this convergence takes place essentially in the strongest possible sense of $\eu R_{\infty}$ without any overhead. To show this, we also demonstrate a strong mixing property of $\ps$ given any point mass within $\mc K$ as the initial distribution. While we are not the only work to obtain rates in this divergence~\cite{hardt2010geometry,mangoubi2022sampling}, we are the first to show $\eu R_\infty$ guarantees with efficient rates and without resorting to algorithmic modifications.

Furthermore, the results for the truncated Gaussian are the first guarantees in $\eu R_q$ and $\eu R_\infty$ for the non-uniform, constrained case with $d^2 \sigma^2$ query complexity. While nowhere close to a comprehensive guarantee for sampling from constrained distributions $\tilde \pi \propto e^{-f}|_{\mc K}$, our results hint that this generic problem can be tackled using the same analytic framework.

\paragraph{Result 2: Warm-start generation and $\Otilde(d^3)$ uniform sampler in $\eu R_\infty$}
As a major application of the previous result, we show that, for any well-rounded convex body $\mc K$, there is an annealing algorithm with $\Otilde(d^3\polylog\frac{1}{\veps})$ complexity that achieves $\varepsilon$-accuracy in $\eu R_\infty$. Compared to $\gc$, this algorithm uses $\ps$ as a subroutine for sampling at each stage. As a result, we call it $\sps$ (Algorithm~\ref{alg:sps}). This allows us to strengthen the previous guarantees from~\cite{cousins2018gaussian} to $\eu R_\infty$-divergence, which encompasses all other divergences.

\begin{thm}[{$\eu R_{\infty}$ guarantees for uniform sampling, informal version of Theorem~\ref{thm:main-result}}]\label{thm:main-result-informal}
    Let $\pi \propto \ind_{\mc K}$ for $\mc K$ well-rounded, i.e. $\E_{X \sim \pi} [\norm{X}^2] \leq C^2 d$ while $B_1(0) \subseteq \mc K$. Then, $\sps$
    succeeds in sampling from $\nu$ with probability $1-\eta$, where $\eu R_\infty(\nu \mmid \pi) \leq \varepsilon$, requiring no more than $\O\bigl(C^2 d^3 \log^{\O(1)}\frac{1}{\eta \varepsilon}\bigr)$
    expected number of membership queries.
\end{thm}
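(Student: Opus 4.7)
The plan is to thread $\ps$ through an annealing sequence of truncated Gaussians that interpolates between a tractable warm start and the uniform target $\pi$, exploiting the fact that $\eu R_\infty$ obeys the triangle inequality. Concretely, I would construct a schedule $\sigma_0^2 < \sigma_1^2 < \cdots < \sigma_{k_*}^2$ with intermediate targets $\pi_k \defeq \mc N(0, \sigma_k^2 I_d)|_{\mc K}$, taking $\sigma_0^2 = \Theta(1/d)$ small enough that $\mc N(0, \sigma_0^2 I_d)$ concentrates inside $B_1(0) \subseteq \mc K$ (so that $\mu_0 \propto \ind_{B_1(0)}$ is an $\O(1)$-warm start for $\pi_0$ in $\eu R_\infty$) and $\sigma_{k_*}^2 = \Theta(C^2 d)$. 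Well-roundedness together with Jensen's inequality yields $\vol(\mc K)/Z_{\sigma_{k_*}^2} = 1/\E_\pi\bigl[\exp(-\norm X^2/(2 \sigma_{k_*}^2))\bigr] = \O(1)$, so $\eu R_\infty(\pi_{k_*} \mmid \pi) = \O(1)$ and a final $\psunif$ step can bridge to the uniform target.

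Between consecutive stages, I would calibrate the multiplicative step so that $\eu R_\infty(\pi_k \mmid \pi_{k+1}) = \O(1)$. The density ratio $\pi_k/\pi_{k+1}$ factors as $(Z_{k+1}/Z_k)\exp\bigl(\norm{x}^2 (\sigma_{k+1}^{-2} - \sigma_k^{-2})/2\bigr)$: the exponential factor is uniformly bounded on $\mc K$ when the multiplicative step is small, and the normalizing-constant ratio can be controlled via the MGF of $\norm X^2$ under $\pi_k$ at the same scale. Induction via the $\eu R_\infty$ triangle inequality,
\[
\eu R_\infty(\tilde \pi_k \mmid \pi_{k+1}) \leq \eu R_\infty(\tilde \pi_k \mmid \pi_k) + \eu R_\infty(\pi_k \mmid \pi_{k+1}),
\]
then preserves the $\O(1)$ warmness of the outputs $\tilde \pi_k$ of successive $\psgauss$ calls into the next stage. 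Note that, without the triangle inequality for $\eu R_\infty$, this inductive propagation of warmness---the whole reason to run $\ps$ instead of $\bw$ or $\har$ as the inner subroutine---would collapse.

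With a warm initialization available at every phase, I would invoke the preceding $\ps$ complexity theorem with per-stage tolerance $\varepsilon' \defeq \varepsilon/(k_*+1)$ and per-stage failure probability $\eta' \defeq \eta/(k_*+1)$, paying $\Otilde(d^2 \msf{D}_k^2 \polylog \tfrac{1}{\eta \varepsilon})$ queries per stage, where $\msf{D}_k^2 = 1/d \vee \sigma_k^2$. A final $\psunif$ call from $\tilde \pi_{k_*}$ tightens the output to within $\varepsilon$ of $\pi$ in $\eu R_\infty$; under well-roundedness the effective diameter in its complexity bound is controlled by $\|\Sigma_\pi\|_{\op} \leq \O(C^2 d)$ rather than the enclosing radius $D$. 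Repeated triangle inequality yields $\eu R_\infty(\tilde\pi_{\mathrm{final}} \mmid \pi) \leq \varepsilon$, a union bound gives total failure probability $\leq \eta$, and the geometric growth of $\sigma_k^2$ makes $\sum_k d^2 \msf{D}_k^2 = \Otilde(C^2 d^3)$ in aggregate.

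The main obstacle is the scheduling itself: each increment must keep $\eu R_\infty(\pi_k \mmid \pi_{k+1}) = \O(1)$ (requiring careful MGF control of the log of the normalizing-constant ratio) while simultaneously ensuring that $\sum_k \sigma_k^2$ does not exceed $\Otilde(\sigma_{k_*}^2)$ by more than polylogarithmic factors. A two-regime split---tight multiplicative steps while $\sigma_k^2 < 1/d$ (where the per-step cost saturates at $\Otilde(d)$ thanks to the $1/d$ floor in $\msf{D}_k^2$) and looser, concentration-driven steps once $\pi_k$ exhibits Gaussian-type concentration of $\norm X^2$ around $d \sigma_k^2$---resolves this, but the bookkeeping between the regimes is the most delicate technical ingredient of the proof. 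The remaining pieces---error aggregation, union bound, and the final $\psunif$ call---are then routine.
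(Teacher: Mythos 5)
Your overall plan---anneal through truncated Gaussians $\mc N(0, \sigma_k^2 I_d)|_{\mc K}$ starting at $\sigma_0^2 = \Theta(1/d)$, relay $\O(1)$ warmness across stages via the $\eu R_\infty$ triangle inequality, use a two-regime variance schedule with geometric growth, and finish with $\psunif$---matches the structure of $\sps$ and Theorem~\ref{thm:main-result} in the paper.

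However, there is a genuine gap in the final step. You assert that ``under well-roundedness the effective diameter in the [$\psunif$] complexity bound is controlled by $\|\Sigma_\pi\|_{\op} \leq \O(C^2 d)$ rather than the enclosing radius $D$.'' This is false for the $\eu R_\infty$ guarantee. The boosting of Theorem~\ref{thm:boosting} requires any-start mixing in $\tv$ with at most polylogarithmic overhead in the worst-case initialization, and as explained in \S\ref{scn:tv-to-renyi-boosting} this is only achievable via the log-Sobolev inequality (which controls the doubly-logarithmic dependence on the initial warmness). By Lemma~\ref{eq:isoperimetry-dist}, the LSI constant of the uniform distribution on a convex body scales as $C_{\msf{LSI}}(\pi) \lesssim D^2$, with $D$ the \emph{diameter}; the Poincar\'e constant is $\norm{\Sigma}\log d$, but routing through Poincar\'e alone incurs an additional $\Omega(d)$ overhead in the boosting (again \S\ref{scn:tv-to-renyi-boosting}). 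Well-roundedness bounds $\tr\Sigma_\pi \leq C^2 d$ but places no bound on $D$: a well-rounded body can have an arbitrarily thin spike, so $D$ can be exponential in $d$. Plugging $D^2$ directly into the $\psunif$ complexity thus gives $\Otilde(M d^2 D^2)$, not $\Otilde(C^2 d^3)$. The missing ingredient is the truncation $\bK := \mc K \cap B_{L\sqrt d}(0)$ with $L = C\log\frac{1}{\veps}$, which the paper introduces explicitly. Truncation forces a bounded diameter $\O(L\sqrt d)$, and well-roundedness together with the concentration of $\norm X$ under $\pi$ (via \cite[Theorem 5.17]{lovasz2007geometry}) shows $\vol(\mc K\setminus\bK) \lesssim \veps \vol(\mc K)$, so the incurred $\eu R_\infty$ error between $\bar\pi$ and $\pi$ is $\O(\veps)$ and can be absorbed. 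Without this truncation step the dominant phase of your argument has no bounded complexity.

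A couple of smaller points: the $\eu R_\infty$-warmness of $\msf{Unif}(B_1(0))$ with respect to $\mc N(0, \tfrac1d I_d)|_{\bK}$ works out to $\O(\sqrt d)$, not $\O(1)$ (Lemma~\ref{lem:phase-i-bounds}); this is harmless since Phase I costs $\Otilde(d^{3/2})$. Your per-stage tolerance $\varepsilon/(k_*+1)$ and the ``repeated triangle inequality'' aggregation are unnecessary: accuracy does not accumulate across stages, since each $\ps$ call produces a fresh bound relative to its own target regardless of how close the warm start was; only the warmness (bounded by $\O(1)$, and $\log 2$ suffices) propagates, and the final $\psunif$ alone delivers the $\veps$-accuracy. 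Finally, your regime split at $\sigma_k^2 < 1/d$ (citing the $1/d$ floor in $\msf D_k^2$) is off since the schedule starts at $\sigma_0^2 = 1/d$; the actual split in the paper is at $\sigma^2 = 1$, separating a conservative step $\sigma_{i+1}^2 = \sigma_i^2(1+\tfrac1d)$ from the accelerating step $\sigma_{i+1}^2 = \sigma_i^2(1+\tfrac{\sigma_i^2}{L^2 d})$, and the warmness bound in the first regime relies on a change-of-variables argument rather than an MGF estimate.
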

We emphasize that our rate of $\Otilde(d^3)$ matches that of~\cite{cousins2018gaussian} while being in a much stronger metric ($\eu R_\infty$). To compare, the expected rate of the proximal sampler for unconstrained distributions $\nu$ is $\Otilde(\beta C_{\msf{PI}}(\pi) d^{3/2})$ from a feasible start~\cite[Remark 5.5]{altschuler2023faster}, where $\beta$ is the Lipschitz constant of $\nabla \log \pi$. By contrast, Theorem~\ref{thm:main-result-informal} states that, in exchange for letting $\beta = \infty$, we need to pay only an extra multiplicative factor of $d^{3/2}C_{\msf{PI}}^{-1}$ when sampling from a well-rounded convex body.

\subsection{Techniques and challenges}\label{scn:techniques}
\subsubsection{Constrained samplers}
\paragraph{Uniform sampling from a warm-start}
Before worrying about the problem of obtaining a warm start, it is necessary to establish the best possible oracle complexity for the sampling routine in $\eu R_\infty$ when given an $\O(1)$ warm-start.

Previous works such as~\cite{lovasz2006simulated,cousins2018gaussian} relied on $\har, \bw$ as underlying samplers with their annealing schemes. These possess several disadvantages. Firstly, the method of analysis is rather complicated. As for $\bw$, the best analysis (in terms of complexity) makes statements about $\sw$, which is a biased variant of $\bw$ \cite{kannan1997random,kannan2006blocking}. To implement $\bw$ with $\sw$, an additional correction step for debiasing the stationary measure is required.  This degrades rate estimates for $\sw$ to $\tv$, making the rate estimates for $\bw$ only in $\tv$ as a result. $\har$, on the other hand, is only known to mix in $\{\tv, \chi^2\}$ for uniform and exponential distributions~\cite{lovasz2006hit} and in $\tv$ for general log-concave distributions~\cite{lovasz2006fast}.

By contrast, the recent work of~\cite{kook2024inout} demonstrated that $\ino$ (Algorithm~\ref{alg:prox-uniform}) could achieve $\varepsilon$-mixing in any $\eu R_q$, requiring $\Otilde(qM d^2 \norm{\Sigma}_{\op} \polylog \textfrac{1}{\veps})$ queries in expectation from an $M$-warm start.

\paragraph{$\ino$ and the $\ps$}
The algorithm of $\ino$ (or $\msf{Prox\text{-}uniform}$) is remarkably simple. It alternately samples from two distributions, first getting $y$ from a Gaussian centred at the current point, then $z$ from a Gaussian centred at $y$ but truncated to $\mc K$ (Lines~\ref{line:forward} and~\ref{line:backward} in Algorithm~\ref{alg:prox-uniform}). This turns out to correspond to simulating the heat equation applied to the current measure, and then simulating a notion of time reversal of the heat equation (Lemma~\ref{lem:extend}). These dynamics cause any $\eu R_q$ divergence to decay exponentially, with the speed of decay dependent on the isoperimetry of the target distribution.

The only remaining challenge is to bound the wasted queries in Line~\ref{line:backward}. This follows a local conductance argument, where one quantifies the probability of $y$ landing in a set that has a hard time returning to $\mc K$. This step requires both that $\mu_0$ be warm (for reasons to be explained later) and that the variance $h =\Otilde(d^{-2})$ is appropriately small.

\paragraph{Gaussian sampling with $\ps$}
Since we will end up annealing with Gaussians, we need to demonstrate that the guarantees for $\ino$ are compatible with (truncated) Gaussian distributions $\mc N(0, \sigma^2 I_d)|_{\mc K}$. This is not necessarily trivial. While the rate estimates for $\bw$ can be adapted for Gaussians \cite{cousins2018gaussian}, those for $\har$ cannot be~\cite{lovasz2006fast}. To use $\har$ in an annealing scheme, \cite{lovasz2006simulated} resorts to exponential distributions, which have worse complexity.

In this work, we generalize the $\ino$ algorithm for truncated Gaussians on $\mc K$. This can be done by adapting the forward and backward heat flow interpretations to incorporate a potential function in the target, in analogy with works in unconstrained sampling~\cite{chen2022improved}. By the same analogy, we call our algorithm the (constrained) $\ps$.

While the calculations for the Gaussian case are somewhat involved, the approach to bounding the query complexity does not require any algorithmic modifications instead. The number of iterations needed to mix in $\eu R_q$ for the $\ps$, $\Otilde(q(d \vee d^2 \sigma^2))$, already follows as a consequence of a generic lemma shown in~\cite{kook2024inout} (restated in Lemma~\ref{lem:contraction-prox-gauss}). As for the query complexity, a careful bound on the local conductance following the same approach as~\cite{kook2024inout} (Lemma~\ref{lem:num-membership-queries}) shows that $\Otilde(M)$ membership queries suffices in each iteration. This makes $\Otilde(qM(d \vee d^2 \sigma^2))$ queries in total. The complexity matches that of~\cite{cousins2018gaussian} for Gaussians in $\tv$, and the ease of derivation underscores another strength of the $\ps$/$\ino$ framework.

We note that this is the first time that $d^3$ guarantees have been shown in $\eu R_q$ (and, as we shall see, in $\eu R_\infty$ as well) for truncated Gaussians, even if just for the specific subclass of those that are centered and isotropic. The techniques here suggest that, with more effort, the analysis could be conducted more generally for other distributions of the form $\pi = e^{-f}|_{\K}$.

\subsubsection{Warm starts and R\'enyi divergence}
\paragraph{Importance of warm starts}
We briefly digress to explain why warm starts are necessary in each of these contexts. The above samplers repeat a subroutine wherein a point is proposed, either from uniform distribution in a ball for $\bw$, and a Gaussian for $\ino$. However, na\"ively using each of these points will bias the sampling algorithm. Thus, the subroutines must contain rejection steps which correct the distribution of the proposal. To preserve our complexity bounds, we must ensure that the number of rejections is small. Otherwise, the algorithm will waste the majority of its queries without moving. 

How to control the number of rejections? If $\ino$ is initialized at the target $\pi \propto \ind_{\mc K}$ with suitable algorithmic parameters, then one can bound the expected number of rejections by $\Otilde(1)$. 
Instead, with $\exp (\eu R_\infty) \leq M$ at initialization, a straightforward calculation bounds the number of rejections by that experienced by $\pi$, multiplied by a factor of $M$. This explicitly reveals the importance of starting warm when bounding the number of rejections.

Given this context, we denote by $\msf{M} \to \msf{N}$ when a sampler starting with warmness in a metric $\msf{M}$ has convergence guarantees in a possibly weaker metric $\msf{N}$.
The guarantees of the previous samplers can be stated for the truncated Gaussian as $\bw, \har: \eu R_\infty \to \tv$ and $\ps: \eu R_\infty \to \eu R_q$. The previous discussion implies that we would require $\eu R_\infty \to \eu R_\infty$ to relay $\eu R_\infty$-guarantees across the annealing routine. Of the aforementioned works, only $\ps$ is close to achieving this. Nonetheless, closing this gap from $\eu R_q$ to $\eu R_\infty$ is far from trivial and requires the introduction of new analytical techniques, as we subsequently explain.

\paragraph{The difficulties of $\eu R_\infty$}
One would expect that convergence in $\eu R_\infty$ is extremely difficult to show. By comparison, even for a continuous-time process such as the Langevin dynamics, convergence in $\eu R_\infty$ is not known outside a few specific cases. Previous sampling guarantees in $\eu R_q$ usually involve a complexity at least linear in $q$~\cite{chewi2021analysis, chen2022improved,zhang2023improved,altschuler2023faster,kook2024inout}, which render them useless for $\eu R_\infty$.

The techniques for analyzing $q$-R\'enyi divergence usually involve constructing a Markov semigroup whose interpolation in discrete time yields the sampler of interest. The time evolution equations for this semigroup can then be used to derive exponential convergence of $\eu R_q$, with a constant depending on the isoperimetry of the stationary measure for the semigroup. This approach fails miserably for $\eu R_\infty$. In its standard representation, $\eu R_\infty$, being the supremum of the density ratio, cannot be written as the expectation of a continuous quantity, and its decay properties are difficult to establish. Instead, one must look for proof strategies that use more information about the semigroup.

\paragraph{Any-start implies $\eu R_\infty$}
It is well known in the Markov chain literature that, if a Markov chain mixes rapidly in $\tv$ from any deterministic starting point (a property known as \emph{uniform ergodicity}), then the Markov chain causes the density to contract towards that of $\pi$ in any arbitrarily strong norm. In particular, choosing the $L^\infty$-norm, it is possible to translate statements in a weak norm like ``the algorithm converges to within $\varepsilon$ in $\tv$ distance of $\pi$ within $K$ iterations, given \emph{any starting point} in the support of $\pi$'' to an unconditional statement in an \emph{extremely strong} norm --- ``the algorithm mixes in $L^\infty$ distance within $K$ iterations'' (Theorem~\ref{thm:boosting}). We call this analytical technique \textbf{boosting} for its remarkable strengthening of the performance metric.

This result had thus far been difficult to apply in most sampling settings. Consider for instance sampling from an unconstrained standard Gaussian, via some MCMC method like the Langevin diffusion. The $\tv$ can always be bounded by the $\KL$-divergence, which contracts exponentially. However, because $\pi$ is supported on $\R^n$, it is always possible to pick a starting point arbitrarily far away from the Gaussian's mode. This causes $\KL$ at initialization to be arbitrarily large. The number of iterations that are needed to converge from a point-mass at $x$ is roughly $\gtrsim \log \norm{x}$, which is not bounded for all starting points $x \in \R^d$. Thus, no finite number of iterations of the Markov chain will be sufficient to show uniform ergodicity across the entirety of $\R^d$, even though the dynamics converge exponentially quickly in any $\eu R_q$.

In passing, we also mention another notion of boosting from $\tv$ to $\eu R_\infty$. This is done via a post-processing step that can be found in the work of~\cite{mangoubi2022sampling}. Examining the guarantees of this approach, however, shows that the query complexity of the original algorithm unavoidably increases by a factor of $\poly(d)$. Thus, even if one could obtain warm-start at every step, this approach would give at minimum $\Otilde(d^5)$ query complexity\footnote{One can use the $\gc$ algorithm followed by the post-processing of \cite{mangoubi2022sampling}. The claimed complexity follows from the $\gc$'s complexity of $\Otilde(d^3\log^2\frac{1}{\epsilon})$ along with $\epsilon \lesssim \exp(-d)\veps$. See \S\ref{sec:related} for detail.} for the entire algorithm. Our approach, in contrast, does not require any algorithmic modifications, but rather extracts a convergence guarantee that is already latent in the algorithm.

\paragraph{Proximal sampler mixes from any start}
Thus, it suffices to establish convergence guarantees given any deterministic starting point within $\mc K$. When attempting to do this, a major issue arises in the analysis of the classic $\gc$, which uses $\bw$. The mixing of $\bw$ is established using a conductance-oriented proof. If the $\bw$ is analyzed directly via $s$-conductance, however, then the warmness of the initial distribution shows up directly in the number of $\bw$ steps (rather than just the query complexity) \cite{lovasz2007geometry}. This is problematic, since if we started at a bad initial point (for example, the tip of a cone), the resulting number of $\bw$ steps can be exponentially large in dimension $d$, making it impossible to invoke uniform ergodicity without degrading the rate. The other way of implementing $\bw$ through $\sw$~\cite{kannan1997random,kannan2006blocking}, requires a non-Markovian correction step, which rules out the boosting technique. Another candidate, $\har$, uses a weaker isoperimetric property (more precisely, a Cheeger inequality), so it depends poly-logarithmically on the warmness~\cite{lovasz2006hit,lovasz2006fast}. Hence, this disqualifies $\har$ as a sampler due to additional overhead of $\text{poly}(d)$ from any feasible start.

For these discrete-time samplers, one could try to directly prove a \emph{modified logarithmic-Sobolev inequality} (MLSI) of their corresponding Markov kernel (as opposed to for the stationary distribution $\pi$). While this approach can potentially bound the complexity overhead of these samplers from any feasible start by $\polylog(d)$ (or worse), they require an involved analysis of the kernel as well as the target $\pi$. In contrast, $\ps$ can be interpreted in terms of continuous forward and backward heat flows, so its mixing can be characterized by functional inequalities of just the target $\pi$. Thus, $\ps$ enjoys substantially simpler analysis, with richer connections to established functional inequalities for log-concave measures.

Using $\ps$ (Algorithm~\ref{alg:prox-uniform} and~\ref{alg:prox-gaussian}), we can indeed achieve any-start results without overhead, thereby obtaining $\eu R_\infty$ guarantees via the boosting.  
Leveraging the log-Sobolev inequality for the target, the number of iterations needs only depend doubly-logarithmically on the warmness. From a point-mass, it can be established that the warmness (after one iteration of $\ps$) is exponential in dimension, so this allows one to pay only a logarithmic factor when establishing mixing from any point-mass within $\mc K$.  These factors combined allow $\ps$ to do what $\bw, \har$ could not: obtain $\eu R_\infty$ guarantees without any computational overhead or post-processing. To our knowledge, our paper is the first work in MCMC to accomplish this, and we hope that our methods would serve as a blueprint for obtaining similar guarantees, at least in the constrained sampling setting.

\subsubsection{Proximal Gaussian cooling}

At its core, our approach in the annealing part does not deviate from~\cite{cousins2018gaussian}. The core idea is that, for a family of distributions $\{\mu_i\}_{i \in \{0, \ldots, k_*\}}$, one must balance two factors: (i) Each distribution must be $\O(1)$-warm with respect to the succeeding distribution, and (ii) the total number of distributions should not be larger than $\Otilde(d)$. The choice of Gaussian distributions gives more flexibility in this respect, since one is able to update the covariance with accelerating speed. The $\sps$ algorithm, including the choices of covariance, is given in Algorithm~\ref{alg:sps}. 
We note that $\gc$ with $\bw$ sidestepped the issue of warmness by coupling together approximate samplers. This strategy, however, reduces all the bounds down to $\tv$, and cannot be applied when examining a stronger metric.

With our approach, by combining all the aforementioned technical ingredients, we obtain a $\Otilde(d^3)$ algorithm in $\eu R_\infty$ for a well-rounded convex body. One additional benefit of our scheme is its high-level simplicity: we only need to implement the $\ps$ for different targets, which can be viewed as instances of a unified algorithm. Using this, we then follow the Gaussian-based annealing strategy.

\begin{algorithm}[H]
\hspace*{\algorithmicindent} \textbf{Input:} convex body $\mc K\subset\R^{d}$ with $0\in \K$ and $\E_{\K}[\norm{X}^2] \leq C^2d$.

\hspace*{\algorithmicindent} \textbf{Output:} approximately uniform sample $z$.

\begin{algorithmic}[1] \caption{$\sps$} \label{alg:sps}
\STATE Let $\bar{\K} = \K \cap B_{L\sqrt{d}}(0)$ with $L=C\log\frac{1}{\veps}$, and $\sigma^2:= 1/d$. Denote $\pi_{\sigma^2}:= \mc N(0,\sigma^2I_d)|_{\bar{\K}}$.
\STATE Sample $z \sim \textsf{Unif}\,(B_1(0))$.

\STATE Get $z \sim \psgauss$ with initial point $z$, target dist. $\pi_{\sigma^2}$, target accuracy $\log 2$ (in $\eu R_\infty$), and success probability $\eta/\Otilde(L^2d)$.

\WHILE{$1/d\leq \sigma^2\leq d$}
\STATE Get $z$ from $\psgauss$ with the same setup. Then update 
\[
\sigma^2 \gets \begin{cases}
    \sigma^2 \bpar{1+\frac{1}{d}} & \text{if } \frac{1}d\leq \sigma^2 \leq 1\,.\\
    \sigma^2 \bpar{1+\frac{\sigma^2}{L^2d}}& \text{if } 1\leq \sigma^2 \leq L^2d\,.
\end{cases}
\]
\ENDWHILE
\STATE Return $z\sim \psunif$ with initial point $z$, target dist. $\pi^{\bar{\K}}$, target accuracy $\varepsilon$, and success probability $\eta/\Otilde(L^2d)$
\end{algorithmic}
\end{algorithm}

\subsection{Related work}\label{sec:related}

\paragraph{Constrained samplers}

The previous approaches to constrained sampling via random walks date to~\cite{dyer1991random}. The most well-studied algorithms are $\bw$~\cite{lovasz1993random, kannan1997random} and $\har$~\cite{smith1984efficient, lovasz1999hit}. $\bw$ is a simple scheme which samples $y$ from a ball around the current point, accepting $y$ if $y \in \mc K$ and otherwise just remaining at the current point. Although it is possible to analyze $\bw$ directly~\cite{lovasz2007geometry}, usually $\bw$ refers to the the algorithm given by composing $\sw$ with rejection sampling. The best known guarantee in~\cite{kannan2006blocking} gives a complexity of $\Otilde(Md^2 D^2 \log \frac{1}{\varepsilon})$ with $M = \exp \eu R_\infty(\mu_0 \mmid \pi)$. 

Another standard algorithm is $\har$, which samples uniformly at random from a chord $\K\cap \ell$, where $\ell$ is a random line passing through the current point.
Its query complexity for uniform and exponential distributions is $\Otilde(d^2 \tr\Sigma \log \frac{M}{\varepsilon})$, where $\Sigma = \E_\pi[(X- \E X)^{\otimes 2}]$. See \S\ref{scn:techniques} for the benefits and drawbacks of these two approaches.

Both $\bw$ and $\har$ can assume that $\mc K$ is well-rounded, in the sense that $\E_\pi[\norm{X}^2] \leq C^2 d$ for a known constant $C$. A general convex body $\mc K$ can be converted to a well-rounded body via an affine transformation. Finding this transformation requires an additional algorithmic ingredient called \emph{rounding}. It has been claimed in the literature~\cite{jia2021reducing} that a randomized rounding algorithm exists with $\Otilde(d^3)$ complexity. Since this adds no computational overhead, we make the same assumption in the present work.

Apart from these general-purpose random walks, there also exist several samplers which exploit geometric information about $\mc K$, or which use stronger oracle models. The $\dw$ algorithm makes use of a self-concordant barrier function $\phi$ to draw samples from an anisotropic Gaussian (instead of isotropic as in $\bw, \ps$), and converges in $\Otilde(md)$ iterations for a convex body specified by $m$ linear constraints \cite{kannan2012random, kook2024gaussian}. Apart from this, Riemannian algorithms such as $\rhmc$ or $\rlmc$ equip $\mc K$ with a Riemannian metric, and then run a random walk using this geometry~\cite{girolami2011riemann, lee2018convergence, li2020riemannian, kook2022condition, cheng2022theory, gatmiry2023sampling}. 
Likewise, $\ml$~\cite{zhang2020wasserstein, jiang2021mirror, ahn2021efficient, li2022mirror, srinivasan2023fast} alters the geometry so that one can apply methods from unconstrained sampling, and can obtain $\Otilde(d)$ complexity with some additional assumptions. 

Constrained samplers like~\cite{bubeck2018sampling, lehec2023langevin} borrow algorithms from the unconstrained setting, and interweave them with projection steps onto $\mc K$. The best known overall complexity is $\Otilde(\frac{d^2 D^3}{\varepsilon^4})$ in terms of projection oracle queries. 
Finally, \cite{pmlr-v65-brosse17a, gurbuzbalaban2022penalized} use ``soft'' penalties rather than projections. All of the aforementioned techniques, however, require either additional assumptions on $\mc K$ or the oracle model, or are otherwise inefficient terms of complexity.

\paragraph{Annealing strategies}
Annealing as a computational tool dates back to \cite{kirkpatrick1983optimization}. Its original application was to combinatorial optimization, where the landscape of solutions is highly non-convex, and it has served as an efficient strategy for diverse problems~\cite{henderson2003theory, kalai2006simulated,delahaye2019simulated}.

The history of annealing for volume computation / uniform sampling dates back to the inception of constrained sampling, from the earliest works of~\cite{dyer1991random} through a long line of further improvements~\cite{lovasz1991compute, lovasz1990mixing, applegate1991sampling, lovasz1993random, kannan1995isoperimetric, kannan1997random}. The original incarnation of this algorithm, which remained unchanged through the listed references, samples uniformly from a sequence of bodies $\{\mc K_k\}_{1 \leq k \leq k_*}$, with $\mc K_k = \mc K \cap 2^{k/d} B_1(0)$, with $k_* = \Otilde(d)$ iterations. The distribution at each prior phase is used as a warm start for the subsequent phase. Combined with a coupling argument for the approximate sampler, the law of the sample at iteration $k_*$ can be viewed as a warm start for the uniform distribution on $\K$. While the number of phases is not severe, the bodies become increasingly difficult to sample as $k$ increases. For instance, the complexity with $\ps$ moves from $\Otilde(d^2)$ in the earliest phases to $\Otilde(d^3)$ in the latest phases, with the total complexity being $\Otilde(d^4)$ as a result.

To rectify this, \cite{lovasz2006simulated} proposed a similar multi-phase sampling strategy, but successively sampling with the exponential distributions $\mu_k \propto e^{-a_k^\T x}|_{\mc K}$ where $\norm{a_k} = 2d(1-d^{-1/2})^k$. This needs only $k_* = \Otilde(\sqrt{d})$ phases to converge. However, these exponential distributions become increasingly ill-conditioned as $k$ increases, and the warmness between $\mu_k, \mu_{k+1}$ is only in $\chi^2$. As a result, the best known complexity for obtaining a single sample with this approach is $\Otilde(d^{7/2})$. 

\cite{cousins2018gaussian} proposes~$\gc$, which avoids the pitfalls of both earlier approaches. Each stage $\mu_k$ consists of a truncated Gaussian with variance $\sigma_k^2$. As opposed to the previous annealing schemes, it is possible to accelerate the schedule (in the sense that $\sigma_k/\sigma_{k-1}$ is increasing in $k$) while still maintaining $\eu R_\infty$-warmness between $\mu_k, \mu_{k+1}$. As a result, while the total number of phases is $\Otilde(d)$, the overall complexity remains at $\Otilde(d^3)$.

\paragraph{Warm starts and R\'enyi divergence}
The convergence of samplers in R\'enyi divergences has been well studied in unconstrained sampling, beginning with results in continuous time~\cite{cao2019exponential, vempala2019rapid}. These were followed by R\'enyi guarantees of discrete time algorithms under log-Sobolev inequalities~\cite{ganesh2020faster, erdogdu2022convergence, chewi2021analysis}, Poincar\'e inequalities~\cite{lehec2023langevin, chewi2021analysis} and beyond~\cite{mousavi2023towards}. These algorithms are vital for warm starts in the unconstrained field as well, particularly for Metropolis adjusted algorithms~\cite{dwivedi2018log, chen2018fast, chewi2021optimal,chen2023does}. In fact, the study of $\eu R_q$ divergences and their connection to warm starts directly led to the present state-of-the-art algorithms in unconstrained sampling~\cite{altschuler2023faster}.

Within constrained sampling, 
the work of~\cite{hardt2010geometry} proposed one method for obtaining guarantees in $\eu R_\infty$ on transformations of the unit ball. This approach is based on grid walk, and unrelated to that in the current work. It has been extended by~\cite{bassily2014private} to more general distributions. Secondly,~\cite{mangoubi2022sampling} show a boosting scheme which mollifies the convex body by convolving it with a ball. Such a technique uses a sampler with $\O(e^{-d} \varepsilon)$-guarantees in $\tv$ as an inner loop of the routine, which adds at least $\poly(d)$ overhead when applied to a sampler with $\polylog \frac{1}{\varepsilon}$ dependence. 

The technique of proving $L^\infty$ bounds using uniform ergodicity was well known in the study of Markov chains. Its history dates perhaps back to the works of Markov himself~\cite{markov1906extension}, among other venerable works~\cite{yosida1941operator,doob1953stochastic, meyn2012markov}. It is connected with Doeblin's minorization condition and other fundamental properties of Markov chains~\cite[\S16]{meyn2012markov}, and has perhaps been most succinctly stated in~\cite[Proposition 3.23]{Rudolf2011ExplicitEB}. The uniform ergodicity property, however, is difficult to establish, barring some exceptions such as exact Gibbs sampling or samplers on the lattice~\cite{lindsten2015uniform, wang2017geometric}. As far as we know, we are the first to apply it in the constrained sampling setting. 

\subsection{Organization}
The remainder of this paper is organized as follows. In \S\ref{sec:prelim}, we introduce key notions that will be used in the paper, and in \S\ref{scn:tv-to-renyi-boosting} elucidate the boosting technique, which converts any-start guarantees in $\tv$ to a contraction in $\eu R_\infty$. We then illustrate how it can be applied for uniform distributions on $\mc K$, as a teaser for \S\ref{scn:trunc-gsn}, where we develop guarantees for truncated Gaussians. Finally, we put these all together in \S\ref{scn:successive-prox} to obtain our final, $\Otilde(d^3)$ uniform sampling guarantees before concluding.

\section{Preliminaries}\label{sec:prelim}
\subsection{Notation}
Unless otherwise specified, $\norm{\cdot}$ denotes the $2$-norm on $\R^d$ and the operator norm on $\R^{d \times d}$. The notation $a= \mc{O}(b), a \lesssim b$ will be used to signify that $a \leq cb$ for $c > 0$ a universal constant, while $a \gtrsim b, a = \Omega(b)$ will denote $a \geq cb$. $a = \Theta(b)$ is used when $a = \Omega(b), a= \O(b)$ simultaneously. The notation $\Otilde(b)$ will mean $a = \mc{O}(b\, \polylog(b))$, and likewise for $\widetilde{\Theta}, \widetilde{\Omega}$. Finally, we conflate a measure with its density where there is no confusion. 

\subsection{Basic notions}
Before proceeding, we reiterate our computational model.
\begin{defn}[Membership oracle]\label{def:membership-oracle}
    We are given a convex body $\mc K$ which has $B_1(0) \subseteq \mc K \subseteq B_D(0) \subset \R^d$ for some $D > 0$. We assume access to a \emph{membership oracle}, which, given a point $x \in \R^d$, answers \emph{Yes} or \emph{No} to the query ``Is $x \in \mc K$?''
\end{defn}
Where not otherwise specified, we will write $\pi = \frac{1}{\vol(\mc K)} \ind_{\mc K}$ for the uniform distribution on $\mc K$. We introduce the following metrics between probability measures.
\begin{defn}[Distance and divergence] \label{def:p-dist}
    Let $\mu, \nu$ be two probability measures on $\R^d$. Their \emph{total variation} distance is given by
    \[
        \norm{\mu-\nu}_{\msf{TV}} := \sup_{B \in \mc B(\R^d)} \abs{\mu(B) - \nu(B)}\,,
    \]
    where $\mc B(\R^d)$ is the collection of all Borel measurable subsets of $\R^d$. The \emph{$q$-R\'enyi divergence} is defined as
    \[
        \eu R_q(\mu \mmid \nu) := \frac{1}{q-1} \log \int \bigl(\frac{\D \mu}{\D \nu}\bigr)^q\, \D \nu\,,
    \]
    if $\mu \ll \nu$, and $ \eu R_q(\mu \mmid \nu) := \infty$ otherwise. In the limit $q \to 1$, it converges to the $\KL$ divergence,
    \[
        \KL(\mu \mmid \nu) := \int \log \frac{\D \mu}{\D \nu} \, \D \mu\,,
    \]
    if $\mu \ll \nu$, and again $\KL(\mu \mmid \nu) := \infty$ otherwise. The $\chi^2$-divergence is defined as $\chi^2(\mu \mmid \pi) := \exp(\eu R_2(\mu \mmid \nu)) - 1$. Finally, the limit $\lim_{q \to \infty} \eu R_q(\mu \mmid \nu)$ can be written as
    \[
        \eu R_\infty(\mu \mmid \nu) := \esssup_{\nu} \log \frac{\D \mu}{\D \nu}\,.
    \]
\end{defn}
The $\eu R_\infty$ distance is especially important for us, through the concept of warmness.
\begin{defn}[Warmness]
    For $\mu \ll \nu$, we denote the warmness $M$ of $\mu$ with respect to $\nu$ as 
    \[
        M := \esssup_\nu \frac{\D \mu}{\D \nu} = \exp \bigl(\eu R_\infty(\mu \mmid \nu)\bigr)\,.
    \]
    Alternatively, if the above holds, then we also say that $\mu$ is an $M$-warm start for $\nu$.
\end{defn}

The following inequality for R\'enyi divergences will also be useful.
\begin{lem}[Data-processing inequality]\label{lem:dpi}
    Let $\mu, \nu$ be probability measures, $P$ a Markov kernel, and $q \geq 1$. Then,
    \[
        \eu R_q(\mu P \mmid \nu P) \leq \eu R_q (\mu \mmid \nu)\,.
    \]
\end{lem}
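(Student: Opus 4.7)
My plan is to reduce the inequality to conditional Jensen applied to the convex function $t \mapsto t^q$. The key observation is that the Radon-Nikodym derivative $\tfrac{\D(\mu P)}{\D(\nu P)}$ can be represented as a conditional expectation under a cleverly chosen joint law, after which the result becomes essentially automatic.

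Concretely, I would handle the main case $q \in (1,\infty)$ first. Assume $\mu \ll \nu$ (otherwise the right-hand side is $+\infty$ and there is nothing to prove), and write $f = \tfrac{\D\mu}{\D\nu}$. Form the joint measure $\Pi(\D x, \D y) := \nu(\D x)\,P(x,\D y)$ on the product space, whose first marginal is $\nu$ and whose second marginal is $\nu P$. Then for $(\nu P)$-a.e.\ $y$,
\[
    \frac{\D(\mu P)}{\D(\nu P)}(y) \;=\; \frac{\int f(x)\,\nu(\D x)\,P(x,\D y)}{\int \nu(\D x)\,P(x,\D y)} \;=\; \mathbb{E}_{\Pi}[\,f(X) \mid Y = y\,].
\]
Applying the conditional Jensen inequality (since $t \mapsto t^q$ is convex for $q \geq 1$) and then integrating against $\nu P$,
\[
    \int \Bigl(\tfrac{\D(\mu P)}{\D(\nu P)}\Bigr)^q \D(\nu P) \;\leq\; \mathbb{E}_\Pi\bbrack{\mathbb{E}_\Pi[f(X)\mid Y]^q}\cdot \text{(trivially)} \;\leq\; \mathbb{E}_\Pi\bbrack{f(X)^q} \;=\; \int f^q\,\D\nu.
\]
Taking $\tfrac{1}{q-1}\log(\cdot)$ of both sides (a monotone transformation for $q > 1$) yields the claim.

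For the boundary cases, $q = 1$ follows by taking $q \downarrow 1$ (or equivalently from the standard $\KL$ data-processing via convexity of $t\log t$), and $q = \infty$ admits an even cleaner argument: if $f \leq M$ $\nu$-a.e., then for every Borel set $B$,
\[
    (\mu P)(B) \;=\; \int f(x)\,\nu(\D x)\,P(x,B) \;\leq\; M\,(\nu P)(B),
\]
so $\tfrac{\D(\mu P)}{\D(\nu P)} \leq M$ $(\nu P)$-a.e., giving $\eu R_\infty(\mu P \mmid \nu P) \leq \eu R_\infty(\mu \mmid \nu)$. I do not anticipate any serious obstacle; the only mild technical care is ensuring the disintegration $\Pi = \nu(\D x) P(x,\D y)$ is well-defined (standard under the usual regularity assumptions on Markov kernels) and verifying absolute continuity so that the Radon-Nikodym derivatives exist in the first place.
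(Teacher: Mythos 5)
The paper states this data-processing inequality as a standard fact and does not include a proof, so there is nothing to compare against line by line. Your argument is correct and is essentially the textbook proof: the identity $\tfrac{\D(\mu P)}{\D(\nu P)}(Y) = \mathbb{E}_\Pi[f(X)\mid Y]$ under $\Pi(\D x,\D y)=\nu(\D x)\,P(x,\D y)$ is verified by checking that the right-hand side satisfies the defining property of the Radon--Nikodym derivative, and conditional Jensen with the convex function $t\mapsto t^q$ then gives $\int(\tfrac{\D(\mu P)}{\D(\nu P)})^q\,\D(\nu P)\le\int f^q\,\D\nu$; the prefactor $\tfrac{1}{q-1}\log(\cdot)$ is monotone for $q>1$, so the inequality survives. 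Your treatment of $q=1$ (convexity of $t\log t$, or continuity as $q\downarrow 1$) and of $q=\infty$ (pushing the a.e.\ bound $f\le M$ through the kernel) are both fine and complete the boundary cases. One cosmetic remark: the first $\leq$ in your chain followed by ``$\cdot$ (trivially)'' should really be an equality sign, since $Y\sim\nu P$ under $\Pi$ makes $\int(\tfrac{\D(\mu P)}{\D(\nu P)})^q\,\D(\nu P) = \mathbb{E}_\Pi\bigl[\mathbb{E}_\Pi[f(X)\mid Y]^q\bigr]$ exactly; only the second step is Jensen. Also note that you do not need a regular conditional distribution of $X$ given $Y$ for this argument---the abstract conditional expectation furnished by Radon--Nikodym suffices---so the ``mild technical care'' you flag is even milder than you suggest.
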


\subsection{Functional inequalities for constrained distributions}
The following functional inequalities, also known as isoperimetry inequalities, on the target distribution are vital for the analysis of sampling algorithms, being equivalent to the exponential mixing of a broad class of Markov chains.

\begin{defn}[Log-Sobolev inequality]
    A probability measure $\nu$ satisfies a \emph{log-Sobolev inequality} ($\msf{LSI}$) with parameter $C_{\msf{LSI}}(\nu)$ if for all smooth functions $f: \R^d \to \R$,
    \begin{equation}\tag{LSI}\label{eq:lsi}
        \Ent_\nu(f^2) \leq 2 C_{\msf{LSI}}(\nu)\, \E_\nu[\norm{\nabla f}^2]\,,
    \end{equation}
    with $\Ent_\nu(f^2) := \E_\nu[f^2 \log f^2] - \E_\nu[f^2] \log \E_\nu[f^2]$.
\end{defn}

A weaker form of isoperimetry that is implied by the above is the Poincar\'e inequality.
\begin{defn}[Poincar\'e inequality]
    A probability measure $\nu$ satisfies a \emph{Poincar\'e inequality} ($\msf{PI}$) with parameter $C_{\msf{PI}}(\nu)$ if for all smooth functions $f: \R^d \to \R$,
    \begin{equation}\tag{PI}\label{eq:pi}
        \Var_\nu f \leq C_{\msf{PI}}(\nu)\, \E_\nu[\norm{\nabla f}^2]\,,
    \end{equation}
    with $\Var_\nu f := \E_\nu[(f-\E_\nu f)^2]$.
\end{defn}

The following lemma summarizes the functional inequalities which are satisfied by (truncated) Gaussians and uniform distributions on $\mc K$, due to \cite{klartag2023logarithmic, bakry2014analysis}. We refer readers to \cite[Appendix C]{kook2024inout}.

\begin{lem}\label{eq:isoperimetry-dist}
    Let $\mc K\subset \Rd$ be a convex body with diameter $D$. Then, if $\pi \propto \ind_{\mc K}$ is the uniform distribution on $\mc K$,
    \[  
        C_{\msf{LSI}}(\pi) \lesssim D^2 \qquad \text{and} \qquad C_{\msf{PI}}(\pi) \lesssim \norm{\Sigma} \log d\,,
    \]
    where $\Sigma = \E_{X \sim \pi}[(X- \E_{\pi} X)(X- \E_{\pi} X)^\T]$ is the covariance of $\pi$. For a Gaussian $\pi_{\sigma^2} = \mc N(0, \sigma^2 I_d)|_{\mc K}$, we have
    \[
        C_{\msf{PI}}(\pi_{\sigma^2})\leq \clsi(\pi_{\sigma^2}) \leq \sigma^2\,.
    \]
\end{lem}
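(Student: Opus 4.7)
My plan is to establish each of the three inequalities separately by invoking standard isoperimetric results for log-concave measures; the only genuine depth comes from the recent near-resolution of the KLS conjecture, which is needed for the Poincar\'e constant of the uniform measure.

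For the log-Sobolev bound $C_{\msf{LSI}}(\pi) \lesssim D^2$, I observe that $\pi = \frac{1}{\vol(\mc K)}\ind_{\mc K}$ is log-concave and supported on a set of diameter at most $D$. The desired inequality then follows directly from the general LSI bound for log-concave measures with bounded support, which I would invoke as a black box from~\cite{klartag2023logarithmic} (the same diameter-squared scaling is also classical, going back to Bobkov).

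For the Poincar\'e bound $C_{\msf{PI}}(\pi) \lesssim \norm{\Sigma}\log d$, my strategy is to reduce to the isotropic case. First I translate so that $\E_\pi X = 0$, which leaves both $\Sigma$ and $C_{\msf{PI}}$ unchanged, and then push $\pi$ forward by the linear map $x \mapsto \Sigma^{-1/2} x$ to obtain an isotropic log-concave measure $\tilde\pi$. The state-of-the-art bound on the KLS constant then yields $C_{\msf{PI}}(\tilde\pi) \lesssim \log d$. Pushing back through $x \mapsto \Sigma^{1/2} x$, whose Lipschitz norm is $\norm{\Sigma}^{1/2}$, and using the standard contraction property $C_{\msf{PI}}(T_{\#}\mu) \leq \mathrm{Lip}(T)^2 \cdot C_{\msf{PI}}(\mu)$ (obtained by testing the Poincar\'e inequality for $\mu$ against $f \circ T$), I conclude $C_{\msf{PI}}(\pi) \leq \norm{\Sigma}\cdot C_{\msf{PI}}(\tilde\pi) \lesssim \norm{\Sigma}\log d$.

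For the truncated Gaussian $\pi_{\sigma^2} \propto \exp(-\norm{x}^2/(2\sigma^2))\,\ind_{\mc K}$, my plan is to apply the Bakry-\'Emery criterion in its version for log-concave measures on convex domains. The potential $V(x) = \norm{x}^2/(2\sigma^2)$ is $\sigma^{-2}$-strongly convex on $\mc K$, and by approximating the convex indicator $\ind_{\mc K}$ by smooth convex penalizations and passing to the limit, the standard Bakry-\'Emery theorem~\cite{bakry2014analysis} yields $C_{\msf{LSI}}(\pi_{\sigma^2}) \leq \sigma^2$. The inequality $C_{\msf{PI}}(\pi_{\sigma^2}) \leq C_{\msf{LSI}}(\pi_{\sigma^2})$ then follows from the standard linearization of LSI around $f \equiv 1$ (setting $f = 1 + \varepsilon g$ and letting $\varepsilon \to 0$).

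The principal obstacle is the Poincar\'e bound for the uniform measure: its $\log d$ prefactor rests on the deep recent progress on the KLS conjecture, and without that input the best bound from classical techniques would be polynomial in $d$ rather than logarithmic. The LSI bound for $\pi$ and the two inequalities for $\pi_{\sigma^2}$ are comparatively routine, being consequences of well-established tools (diameter-based LSI estimates and Bakry-\'Emery, respectively).
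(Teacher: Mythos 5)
Your proposal is correct and follows the same route the paper intends: the paper states this lemma as a citation to~\cite{klartag2023logarithmic, bakry2014analysis} and defers the details to Appendix~C of~\cite{kook2024inout}, and your three arguments (the classical $D^2$ log-Sobolev bound for bounded log-concave measures, the reduction to isotropic position plus the KLS estimate followed by the Lipschitz pushforward contraction, and Bakry--\'Emery with linearization to get Poincar\'e) are precisely the standard proofs underlying those citations. No gaps.
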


\section{Total variation to R\'enyi infinity via LSI}\label{scn:tv-to-renyi-boosting}

In this section, we first review the work of~\cite{Rudolf2011ExplicitEB}, which allows us to bound the distance between the iterates of a Markov chain and their stationary measure in $L^\infty$ (and so in $\eu R_\infty$) by the worst-case distance in $\msf{TV}$ from \textbf{any start}. Then we reveal a connection between \eqref{eq:lsi} and convergence from any start (uniform ergodicity), demonstrating a way to leverage this powerful boosting technique.
As a concrete example, we provide a $\eu R_\infty$-guarantee of $\ps$ for uniform sampling over a convex body \textbf{without incurring} additional factors to the state-of-the-art complexity.

\subsection{Strong contraction of a Markov chain}

We recall standard notion in the theory of Markov semigroups.

\begin{defn}[Markov kernel]
    Let $(\Omega, \mc F)$ be a measurable space. A \emph{Markov kernel} $P: \Omega \times \mc F \to [0, 1]$ satisfies
    \begin{enumerate}
        \item for each $x \in \Omega$, the map $P(\cdot | x):=P(x, \cdot)$ from $\mc F$ to $[0,1]$ is a probability measure on $(\Omega, \mc F)$.
        \item for each $E \in \mc F$, the map $P(\cdot, E)$ from $\Omega$ to $[0,1]$ is $\mc F$-measurable.
    \end{enumerate}
\end{defn}

Given a probability measure $\mu$ over $\Omega$ and function $f:\Omega\to\R$, a Markov kernel $P$ acts on $\mu$ and $f$ to yield objects $\mu P$ and $Pf$ defined by 
\begin{align*}
\mu P(\cdot) &:= \int_{\Omega}P(\cdot|x)\,\mu(\D x)\,,\\
Pf(x) &:= \E_{Y\sim P(\cdot|x)}[f(Y)]=\int_\Omega f(y)\, P(\D y|x)\,.
\end{align*}
A probability measure $\pi$ is called \emph{stationary} for $P$ if $\pi P = \pi$.

In Markov semigroup theory, it is of major importance to study the \emph{contractivity} of a Markov kernel, since it quantifies the convergence rate of its corresponding Markov chain toward the stationary distribution $\pi$. This contractivity can be captured via the contraction coefficient of $P$ defined by
\[
\norm P_{L^{p}\to L^{p}}:= \sup_{0\neq f\in L^p_0}\frac{\norm{Pf}_{L^p}}{\norm f_{L^p}}\,,
\]
where $\norm{f}_{L^p} := \norm{f}_{L^p(\pi)} = (\E_\pi[|f|^p])^{1/p}$ and $L^p_0:= \{f: \E_\pi[|f|^p] <\infty, \E_\pi f=0\}$.

The most classical setting that has been studied is the $L^2(\pi)$ space, whose contraction coefficient is given by $\gamma$ if the \emph{spectral gap} of the Markov kernel $P$ is $1-\gamma$.
By substituting $f=\frac{\D \mu}{\D\pi}-1$, it is also possible to quantify the convergence rate of a Markov chain in $\chi^2$-divergence with reference to the same constant $\gamma$.

This classical setting, however, is not sufficient for understanding $\eu R_\infty$, since we only have the one-sided bound $\eu R_2 = \log(1+\chi^2) \leq \eu R_\infty =\log\esssup |\frac{\D \mu}{\D \pi}| = \log \norm{\frac{\D \mu}{\D \pi}}_{L^\infty}$. Instead, it is natural to study the contraction coefficient in $L^\infty(\pi)$, defined by
\[
\norm P_{L^{\infty}\to L^{\infty}}:=\sup_{0\neq f\in L^\infty_0}\frac{\norm{Pf}_{L^\infty}}{\norm f_{L^\infty}}\,,
\]
where $\norm{f}_{L^\infty} := \norm{f}_{L^\infty(\pi)} = \esssup_{\pi}\abs{f}$ and $L^{\infty}_0:=\{f:\esssup_{\pi}\abs{f}<\infty, \E_\pi f=0\}$.
One observes that $L^\infty\to L^\infty$ contractivity implies the \emph{uniform ergodicity} of a Markov chain, and that the opposite inequality also holds due to~\cite[Proposition 3.23]{Rudolf2011ExplicitEB}.

\begin{prop}[{\cite[Proposition 3.23]{Rudolf2011ExplicitEB}}] 
Let $P$ be a Markov kernel that is reversible with respect to the stationary distribution $\pi$. Then,
\[
\norm{P^{n}-1_{\pi}}_{L^{\infty}\to L^{\infty}}\leq 2 \esssup_{x}\norm{P^{n}(\cdot|x)-\pi}_{\tv}\,,
\]
where $1_{\pi}$ is the operator defined by $1_{\pi}(f)=\E_{\pi}[f]$.
\end{prop}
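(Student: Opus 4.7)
The plan is to bound the operator norm directly from its variational definition. Fix any nonzero $f\in L^\infty_{0}$, so that $\E_{\pi}f=0$ and hence $(P^{n}-1_{\pi})f=P^{n}f$. The target is the pointwise ($\pi$-a.e.) bound
\[
|(P^{n}f)(x)|\leq 2\,\norm{f}_{L^{\infty}(\pi)}\,\norm{P^{n}(\cdot\mid x)-\pi}_{\tv}\,,
\]
from which the proposition follows by taking $\esssup_{x}$ and then the supremum over normalized $f\in L^\infty_{0}$.

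First I would use that $\pi$ is stationary for $P^{n}$ --- which follows from $\pi P=\pi$, a consequence of reversibility --- so $\int f\,\D\pi=0$ can be subtracted inside the integral defining $P^{n}f$, giving
\[
(P^{n}f)(x)=\int f(y)\,\bigl[P^{n}(\cdot\mid x)-\pi\bigr](\D y)\,.
\]
Next I would apply the standard duality between $L^{\infty}$ and total variation. Writing the signed measure $P^{n}(\cdot\mid x)-\pi$ via its Jordan decomposition $\eta^{+}-\eta^{-}$, the mass balance $\eta^{+}(\Omega)=\eta^{-}(\Omega)=\norm{P^{n}(\cdot\mid x)-\pi}_{\tv}$ combined with the trivial estimate $|\int g\,\D\eta|\leq\norm{g}_{\infty}\,(\eta^{+}(\Omega)+\eta^{-}(\Omega))$ yields the claimed factor-two bound. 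Taking $\esssup_{x}$ in $x$ and then supremum over $f$ closes the argument.

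There is no real obstacle here; the argument is essentially immediate once the normalization $\E_{\pi}f=0$ is exploited. The only minor subtlety is measurability: for $f\in L^{\infty}(\pi)$ the value $\int f\,\D P^{n}(\cdot\mid x)$ depends on the chosen measurable representative of $f$, but both sides of the final inequality are invariant under $\pi$-null modifications of $f$, so this does not affect the conclusion. I also remark that reversibility of $P$ is not strictly needed for the upper bound shown here; it enters the companion lower bound in the original reference via self-adjointness of $P$ on $L^{2}(\pi)$, a direction not required for the applications in this paper.
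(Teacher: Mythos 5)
Your proof is correct, and it fills a gap in the paper, which cites~\cite[Proposition~3.23]{Rudolf2011ExplicitEB} without reproducing the argument. Your steps are the natural ones: use $\E_\pi f = 0$ to write $(P^n - 1_\pi)f = P^n f$, use stationarity of $\pi$ under $P^n$ to insert $-\int f\,\D\pi$, and then invoke the duality between the $L^\infty$ norm and the total-variation norm of a signed measure. With the paper's convention $\norm{\mu-\nu}_{\tv}=\sup_B|\mu(B)-\nu(B)|$, the Jordan decomposition of $P^n(\cdot\mid x)-\pi$ has total mass $\eta^+(\Omega)+\eta^-(\Omega)=2\norm{P^n(\cdot\mid x)-\pi}_{\tv}$, which yields exactly the stated factor of two. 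Your observation that reversibility is not actually needed for this upper bound (stationarity suffices) is also accurate; reversibility enters in the converse direction in the original reference, which you correctly identify as unused here.

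One small point worth tightening is the measurability remark. Your bound initially involves $\norm{f}_{\infty}$ (true supremum over the representative), while the conclusion needs $\norm{f}_{L^\infty(\pi)}$ (essential supremum). You assert that both sides are invariant under $\pi$-null modifications of $f$. For the right-hand side this requires that, for $\pi$-a.e.\ $x$, the measure $P^n(\cdot\mid x)$ assigns zero mass to $\pi$-null sets --- which does follow from stationarity (a set with $\pi(A)=0$ has $\int P^n(A\mid x)\,\pi(\D x)=0$, so $P^n(A\mid x)=0$ for $\pi$-a.e.\ $x$), together with a choice of a representative of $f$ that achieves the essential supremum. This step is standard but deserves the one extra sentence, since the exchange of ``a.e.\ in $x$ for each fixed $A$'' for ``$P^n(\cdot\mid x)\ll\pi$ for a.e.\ $x$'' does lean on the ambient space being reasonable (standard Borel). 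In the present setting ($\K\subset\R^d$) this is automatic, so the argument goes through.
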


We can now deduce an explicit $L^\infty$-convergence result for a reversible Markov chain as follows. 

\begin{thm}\label{thm:boosting}
Consider a Markov chain with kernel $P$, initial distribution $\mu$ and stationary distribution $\pi$. Then,
\[
\bnorm{\frac{\D\mu P^{n}}{\D\pi}-1}_{L^{\infty}}\leq\bnorm{\frac{\D\mu}{\D\pi}-1}_{L^{\infty}}\cdot2\esssup_{x}\norm{P^{n}(\cdot|x)-\pi}_{\tv}\,
\]
In particular, $\eu R_\infty(\mu P^n \mmid \pi) \leq \bnorm{\frac{\D\mu}{\D\pi}-1}_{L^{\infty}}\cdot2\esssup_{x}\norm{P^{n}(\cdot|x)-\pi}_{\tv}$.
\end{thm}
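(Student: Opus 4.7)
The plan is to reduce Theorem~\ref{thm:boosting} directly to the cited Proposition 3.23 of~\cite{Rudolf2011ExplicitEB} by rephrasing density evolution as the action of $P^n$ on a zero-mean observable. Concretely, I would set $g := \frac{\D\mu}{\D\pi}-1$, observe that $\E_\pi g = 0$ so that $g \in L^\infty_0(\pi)$, and record that $\|g\|_{L^\infty} = \bnorm{\frac{\D\mu}{\D\pi}-1}_{L^\infty}$. The key identity to establish next is
\[
\frac{\D\mu P^{n}}{\D\pi}-1 \;=\; P^{n} g\,,
\]
which follows from $\pi$-reversibility of $P$: for any bounded test function $\varphi$, the change-of-variables $\int \varphi\,\D(\mu P^n) = \int (P^n\varphi)\,\D\mu = \int (P^n\varphi)(g+1)\,\D\pi = \int \varphi\,(P^n g + 1)\,\D\pi$, where the last step uses self-adjointness of $P^n$ in $L^2(\pi)$ and $P^n 1 = 1$. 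Thus the Radon--Nikodym derivative of $\mu P^n$ with respect to $\pi$ equals $P^n g + 1$, as claimed.

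Having identified $\frac{\D\mu P^n}{\D\pi} - 1$ with $P^n g$, I would then use $\E_\pi g = 0$ to write $P^n g = (P^n - 1_\pi) g$, since $1_\pi g = \E_\pi g = 0$. Applying the operator-norm inequality to this identity gives
\[
\bnorm{\tfrac{\D\mu P^n}{\D\pi}-1}_{L^\infty} \;=\; \norm{(P^n - 1_\pi)g}_{L^\infty} \;\leq\; \norm{P^n - 1_\pi}_{L^\infty\to L^\infty}\,\norm{g}_{L^\infty}\,,
\]
and the cited Proposition 3.23 bounds the operator norm by $2\,\esssup_{x}\norm{P^{n}(\cdot|x)-\pi}_{\tv}$. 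Substituting back $\|g\|_{L^\infty} = \bnorm{\frac{\D\mu}{\D\pi}-1}_{L^\infty}$ yields exactly the first inequality of the theorem.

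For the R\'enyi-infinity corollary, I would unpack the definition $\eu R_\infty(\mu P^n \mmid \pi) = \log\esssup_\pi \frac{\D\mu P^n}{\D\pi} = \log\bpar{1 + \esssup_\pi\bpar{\frac{\D\mu P^n}{\D\pi}-1}}$ and use the elementary bound $\log(1+t)\leq t$ for $t > -1$, together with $\esssup_\pi(\cdot) \leq \|\cdot\|_{L^\infty(\pi)}$, to convert the already-established $L^\infty$ density-ratio bound into the desired $\eu R_\infty$ bound. No further estimation is needed since both factors on the right-hand side are already in the form stated.

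The only subtle point (and hence the part I would be most careful about) is the reversibility assumption implicit in invoking Proposition 3.23 and in the density-evolution identity $\frac{\D\mu P^n}{\D\pi}-1 = P^n g$; in the non-reversible case one would instead need to work with the $L^2(\pi)$-adjoint $P^*$ and argue that its $L^\infty\to L^\infty$ norm is dual to the $L^1\to L^1$ norm of $P$. Since the proximal samplers considered downstream are $\pi$-reversible, this minor caveat costs nothing in applications, and I would simply note the reversibility hypothesis at the start of the proof.
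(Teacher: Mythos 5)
Your proposal takes essentially the same approach as the paper: set $g = \frac{\D\mu}{\D\pi}-1$, use $\pi$-reversibility to identify $\frac{\D\mu P^n}{\D\pi}-1 = P^n g = (P^n - 1_\pi)g$, apply the $L^\infty\to L^\infty$ operator-norm bound from Proposition~3.23 of~\cite{Rudolf2011ExplicitEB}, and finish with $\log(1+t)\leq t$. Your remark that reversibility should be stated as a hypothesis (it is implicit in the paper's theorem but needed both for the cited proposition and for the density-evolution identity) is a fair and correct observation.
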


\begin{proof}
Setting $f=\frac{\D\mu}{\D\pi}-1$, we have 
\[
\norm{P^{n}-1_{\pi}}_{L^{\infty}\to L^{\infty}}\geq\frac{\norm{P^{n}f-1_{\pi}(f)}_{L^{\infty}}}{\bnorm{\frac{\D\mu}{\D\pi}-1}_{L^{\infty}}}=\frac{\norm{P^{n}f}_{L^{\infty}}}{\bnorm{\frac{\D\mu}{\D\pi}-1}_{L^{\infty}}}=\frac{\bnorm{\frac{\D\mu P^{n}}{\D\pi}-1}_{L^{\infty}}}{\bnorm{\frac{\D\mu}{\D\pi}-1}_{L^{\infty}}}\,,
\]
where the last equality follows from $P(\frac{\D\mu}{\D\pi}) = \frac{\D(\mu P)}{\D\pi}$ due to the reversibility of $P$. Therefore, 
\[
\bnorm{\frac{\D\mu P^{n}}{\D\pi}-1}_{L^{\infty}}\leq\bnorm{\frac{\D\mu}{\D\pi}-1}_{L^{\infty}}\cdot2\esssup_{x}\norm{P^{n}(\cdot|x)-\pi}_{\tv}\,.
\]
The R\'enyi-infinity bound immediately follows from $\log(1+x)\leq x$.
\end{proof}

\subsection{LSI to uniform ergodicity without overhead}

We now need to control the $\tv$-distance \textbf{uniformly} over any initial point $x \in \Omega$. That is, one should find the iteration number $n$ of the Markov chain such that $\norm{\delta_xP^n - \pi}_{\tv} \lesssim \varepsilon$ for almost every $x\in \Omega$, where $\delta_x$ denotes the Dirac measure at $x$. One expects that it is impossible in general to bound this quantity for arbitrary Markov chains and stationary measures, but we can get around this in our current setting, which only considers probability measures with compact support.

Mixing rates of many convex bodies samplers have been studied when $\pi$ satisfies a Cheeger isoperimetric inequality (which is equivalent to a Poincar\'e inequality for log-concave distributions). For comparison, standard choices of Markov kernel in the unconstrained setting (such as the Langevin or underdamped Langevin dynamics) relate Poincar\'e inequalities for $\pi$ to the convergence of the sampler in $\chi^2$, and have theoretical guarantees typically given by
\[
\chi^2(\delta_x P^n \mmid \pi) \lesssim \exp\bpar{-\frac{n}{\cpi(\pi)}}\, \chi^2(\delta_x P^1 \mmid \pi)\,.
\]
As $2\,\tv^2\leq \chi^2$, one can achieve $\veps$-$\tv$ in $\cpi(\pi) \log(\chi^2(\delta_x P^1 \mmid \pi) /\veps^2) \lesssim \cpi(\pi) \log(\norm{\frac{\D(\delta_x P^1)}{\D\pi}}_{L_\infty} /\veps)$ iterations. However, the initial $\chi^2$ is typically $\exp(\Omega(d))$, so under \eqref{eq:pi} this approach ends up incurring an additional overhead of $\Omega(d)$ to the mixing rate. 

While it would already be impressive to boost a mixing rate from the weakest metric ($\tv$) to the strongest metric ($\eu R_\infty$) under \eqref{eq:pi} at the cost of additional $\poly(d)$, one can achieve this with only additional $\polylog(d)$ factors under \eqref{eq:lsi}. Just as a Poincar\'e inequality for $\pi$ is normally sufficient to imply the exponential convergence of a corresponding Markov process to $\pi$ in $\chi^2$, the log-Sobolev inequality is equivalent to exponential convergence of many process in entropy (or in $\KL$). Hence, under \eqref{eq:lsi}, theoretical guarantees of Markov chains are typically of the form
\[
\KL(\delta_x P^n \mmid \pi) \lesssim \exp\bpar{-\frac{n}{\clsi(\pi)}}\, \KL(\delta_x P^1 \mmid \pi)\,.
\]
Since we know that $2\,\tv^2 \leq \KL$ (CKP inequality), $\veps$-$\tv$ can be achieved after $\clsi(\pi) \log(\KL(\delta_x P^1 \mmid \pi)/\veps) \lesssim  \clsi(\pi) \log(\frac{1}{\veps}\log\norm{\frac{\D(\delta_x P^1)}{\D\pi}}_{L_\infty})$ iterations. As noted earlier, this initial distance is at worst $\exp(d^{\O(1)})$, which only results in additional $\polylog(d)$ factors after evaluating the double logarithm. Therefore, one can boost from $\tv$ to $\eu R_\infty$ with only logarithmic overhead through \eqref{eq:lsi}.

\subsubsection{R\'enyi-infinity guarantees for uniform sampling under warm start}

We demonstrate the effectiveness of this boosting technique using \eqref{eq:lsi} for the uniform distribution. Thereby we achieve a $\eu R_\infty$-guarantee of uniform sampling without compromising the well-known mixing rate $\Otilde(d^2D^2\log\frac{1}{\veps})$ of $\bw$ and $\har$ (given in terms of $\tv$ and $\chi^2$ respectively).

To this end, it is useful to work with a sampler whose mixing is well understood under several functional inequalities such as \eqref{eq:pi} and \eqref{eq:lsi}. Recently, \cite{kook2024inout} studies $\ps$\footnote{In the original work, it is called the $\ino$, inspired by the geometric behavior of the sampler. We call it $\ps$ instead, since this geometric behaviour is not as clear when working with arbitrary target distributions.} for uniform distributions under those functional inequalities, via calculations following the heat flow and its $\pi$-adjoint.
In particular, \cite{kook2024inout} already establishes $\eu R_q$-guarantees (with $q<\infty$) of the $\ps$ with respect to the uniform distribution $\pi$, given any starting measure.

\begin{lem}[{\cite[Corollary 28]{kook2024inout}}]\label{lem:prox-unif-any-start}
For any $\veps\in(0,1)$ and convex body $\mc K\subset\Rd$ with diameter $D$, let $P$ be the Markov kernel of the $\ps$ with variance $h$. For given $x\in\mc K$, let $\mu_{x}^{n}:=\delta_{x}P^{n}$ be the law of the $n$-th iterate of the $\ps$, and $\pi$ be the uniform distribution over $\mc K$. Then, $\eu R_{q}(\mu_{x}^{n}\mmid\pi)\leq\veps$ for $n=\Otilde\bpar{qh^{-1}\clsi(\pi)\log\frac{d+h^{-1}D^{2}}{\veps}}$.
\end{lem}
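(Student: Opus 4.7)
The plan is to combine two ingredients: an exponential per-iteration contraction of the R\'enyi divergence for $\ps$ under \eqref{eq:lsi} of the target, together with a one-step warmness bound that renders $\eu R_q(\mu_x^1\mmid\pi)$ finite even though $\eu R_q(\delta_x\mmid\pi)=\infty$. For the per-iteration contraction, I would follow the simultaneous heat-flow interpretation of $\ps$: the Markov kernel $P$ factors as a forward heat flow followed by its $\pi$-adjoint, which, coupled with \eqref{eq:lsi} for $\pi$, yields
\[
    \eu R_q(\mu P\mmid\pi)\leq\Bigl(1+\tfrac{h}{q\,\clsi(\pi)}\Bigr)^{-1}\eu R_q(\mu\mmid\pi)
\]
for any $\mu\ll\pi$. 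Iterated, this gives exponential contraction at rate $\exp(-\Theta(h/(q\,\clsi(\pi))))$ per step.

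Since $\eu R_q(\delta_x\mmid\pi)=\infty$, I would analyze the first iteration by hand. Writing $y$ for the intermediate Gaussian sample and $z$ for the output, the density of $\mu_x^1$ at $z\in\mc K$ is
\[
    q_1(z)=\int_{\R^d}\mc N(y;x,hI_d)\,\frac{\mc N(z;y,hI_d)}{M(y)}\,\D y,\qquad M(y):=\int_{\mc K}\mc N(w;y,hI_d)\,\D w.
\]
Using the Gaussian product identity $\mc N(y;x,hI_d)\,\mc N(z;y,hI_d)=\mc N\bpar{y;\tfrac{x+z}{2},\tfrac{h}{2}I_d}\,\mc N(z;x,2hI_d)$, this factors as $q_1(z)=\mc N(z;x,2hI_d)\,\E_{y\sim\mc N((x+z)/2,(h/2)I_d)}[1/M(y)]$. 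To bound the density ratio $\vol(\mc K)\,q_1(z)$, I would use that $B_1(0)\subseteq\mc K$ provides a uniform lower bound on $M(y)$ for $y$ near $\mc K$, while $\|z-x\|\le D$ controls the Gaussian factor. This yields an estimate of the form $\eu R_q(\mu_x^1\mmid\pi)\lesssim q(d+h^{-1}D^2)$, uniformly in $x\in\mc K$.

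Applying the per-step contraction from iteration $1$ through $n$ together with this one-step bound gives
\[
    \eu R_q(\mu_x^n\mmid\pi)\leq\exp\Bigl(-\tfrac{(n-1)h}{2q\,\clsi(\pi)}\Bigr)\eu R_q(\mu_x^1\mmid\pi),
\]
and solving for the smallest $n$ making the right-hand side at most $\veps$ yields $n=\Otilde\bpar{qh^{-1}\clsi(\pi)\log\frac{d+h^{-1}D^2}{\veps}}$, as claimed.

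The main obstacle is the one-step warmness bound: the reciprocal $1/M(y)$ can blow up as $y$ drifts far from $\mc K$, so a naive pointwise bound on the density ratio fails. The delicate step is showing that $\E_{y\sim\mc N((x+z)/2,(h/2)I_d)}[1/M(y)]$ stays polynomial in the problem parameters for every $z\in\mc K$. This relies on both inclusions $B_1(0)\subseteq\mc K\subseteq B_D(0)$: the former guarantees a uniform lower bound on $M(y)$ when $y$ lies near $\mc K$, and the latter, together with the Gaussian tails of the weight $\mc N(y;(x+z)/2,(h/2)I_d)$, suppresses the contribution from $y$ far from $\mc K$. Once this is in hand, the rest is a routine combination with the contraction inequality.
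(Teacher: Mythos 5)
Your proposal takes essentially the same route as the cited source and as the paper's own proof of the analogous Gaussian statement (Lemma~\ref{lem:prox-gauss-anystart}): establish a pointwise bound on the density ratio $\D\mu_x^1 / \D\pi$ of the one-step distribution from a point mass, convert this to a finite $\eu R_q$ bound, and then apply the exponential contraction of $\eu R_q$ under \eqref{eq:lsi}. Note that in this paper the lemma is invoked as \cite[Corollary~28]{kook2024inout} without a re-derivation, so there is no ``paper's own proof'' in this document; the fair comparison is to Lemma~\ref{lem:prox-gauss-anystart}, which carries out exactly your strategy for truncated Gaussians.

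Two small imprecisions, neither of which changes the $\Otilde$ conclusion. First, the one-step contraction factor from Proposition~\ref{prop:prox-contraction} is $(1+h/\clsi(\pi))^{-2/q}$, not $(1+h/(q\,\clsi(\pi)))^{-1}$; the two agree to leading order but you should cite the correct exponent. Second, the one-step warmness bound should read $\eu R_q(\mu_x^1\mmid\pi)\lesssim d + h^{-1}D^2$ with no extra factor of $q$: the estimate proceeds by bounding $\eu R_\infty(\mu_x^1\mmid\pi)=\log\esssup\frac{\D\mu_x^1}{\D\pi}$ directly from the pointwise density ratio, and $\eu R_q\leq\eu R_\infty$ for all $q$. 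Your factorization $q_1(z)=\mc N(z;x,2hI_d)\,\E_{y\sim\mc N((x+z)/2,(h/2)I_d)}[1/M(y)]$ is the right object, and the key observations are correct: $(x+z)/2\in\mc K$ by convexity so the Gaussian weight is centered inside the body, $B_1(0)\subseteq\mc K$ gives $M(y)\gtrsim(2\pi h)^{-d/2}\vol(B_1)\exp(-(\|y-(x+z)/2\|+D+1)^2/(2h))$, and completing the square shows the Gaussian weight kills the growth of $1/M(y)$, leaving $\eu R_\infty(\mu_x^1\mmid\pi)=\O(d+h^{-1}D^2)$ uniformly in $x\in\mc K$. That is precisely the shape of bound used in Lemma~\ref{lem:prox-gauss-anystart}.
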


Combining this with the boosting technique, we obtain a $\eu R_\infty$-guarantee for uniform sampling:

\begin{lem}\label{lem:prox-sampler-unif-final}
For a convex body $\mc K\subset\Rd$ with diameter $D$, assume that an initial distribution $\mu$ is $M$-warm with respect to the uniform distribution $\pi$ over $\K$. 
For any $\veps\in(0,1)$, the $\ps$ with variance $h$ achieves $\eu R_{\infty}(\mu P^{n}\mmid\pi)\leq\veps$ (or $1-\veps\leq \frac{\D\mu P^{n}}{\D\pi}\leq1+\veps$ on $\mc K$) after $n=\Otilde\bpar{h^{-1}D^{2}\log\frac{M(d+h^{-1}D^{2})}{\veps}}$ iterations.
\end{lem}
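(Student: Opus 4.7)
The plan is to combine the boosting technique of Theorem~\ref{thm:boosting} with the any-start $\eu R_q$-contraction guarantee for $\ps$ from Lemma~\ref{lem:prox-unif-any-start}. Since $\mu$ is $M$-warm, one has $\bnorm{\frac{\D\mu}{\D\pi}-1}_{L^\infty} \leq M$, so Theorem~\ref{thm:boosting} reduces the task to driving the worst-case $\tv$-distance from a point mass below $\veps/(2M)$, i.e.\ ensuring
\[
\esssup_{x\in\mc K} \bnorm{\delta_x P^n - \pi}_{\tv} \leq \frac{\veps}{2M}\,.
\]

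To obtain this $\tv$-bound, I would trade $\tv$ for a R\'enyi divergence that Lemma~\ref{lem:prox-unif-any-start} directly controls. Pinsker together with the monotonicity $\KL \leq \eu R_2$ yields $\bnorm{\delta_x P^n - \pi}_{\tv} \leq \sqrt{\eu R_2(\delta_x P^n \mmid \pi)/2}$, so it suffices to guarantee $\eu R_2(\delta_x P^n \mmid \pi) \leq \veps^2/(2M^2)$ uniformly over $x \in \mc K$. Invoking Lemma~\ref{lem:prox-unif-any-start} with $q = 2$ and target accuracy $\veps^2/(2M^2)$, and substituting the log-Sobolev bound $\clsi(\pi) \lesssim D^2$ from Lemma~\ref{eq:isoperimetry-dist}, gives the claimed iteration count
\[
n = \Otilde\Bpar{h^{-1}D^2 \log \frac{M^2(d+h^{-1}D^2)}{\veps^2}} = \Otilde\Bpar{h^{-1}D^2 \log \frac{M(d+h^{-1}D^2)}{\veps}}\,.
\]
The two-sided density bound $1-\veps \leq \frac{\D\mu P^n}{\D\pi} \leq 1+\veps$ then comes for free from the $L^\infty$-bound established in the proof of Theorem~\ref{thm:boosting}, with $\eu R_\infty \leq \veps$ following via $\log(1+\veps) \leq \veps$.

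There is no real technical obstacle here: the lemma is essentially a bookkeeping composition of the two tools just established. What is worth emphasizing in the writeup is why the warmness $M$ enters only logarithmically. The boosting bound is multiplicative in the product $M \cdot \tv$ rather than in $M^2$ or worse, while $\tv$ itself contracts exponentially in $n$ via the log-Sobolev inequality on $\pi$, so $\log M$ gets absorbed inside a single logarithm. Even when $M \asymp \exp(\poly(d))$ in the worst case, this costs only a $\polylog(d)$ overhead; this is precisely the payoff of routing through \eqref{eq:lsi} rather than through a Poincar\'e-based $\chi^2$-rate, as discussed earlier in the section.
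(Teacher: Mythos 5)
Your proof is correct and follows essentially the same route as the paper: both apply Theorem~\ref{thm:boosting} after using Lemma~\ref{lem:prox-unif-any-start} together with $\clsi(\pi)\lesssim D^2$ and Pinsker to drive $\sup_{x\in\mc K}\norm{\delta_xP^n-\pi}_{\tv}$ below $\Theta(\veps/M)$. The only cosmetic difference is that you invoke the lemma at $q=2$ and pass through $\KL\leq\eu R_2$, whereas the paper takes $q\downarrow 1$ and uses $2\tv^2\leq\KL$ directly; both absorb into the same $\Otilde$ bound.
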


\begin{proof}
Using $2\norm{\cdot- \pi}_{\tv}^{2}\leq\KL(\cdot\mmid\pi)=\lim_{q\downarrow1}\eu R_{q}(\cdot\mmid\pi)$ and $\clsi(\pi)=\O(D^2)$, one obtains that after $n\gtrsim h^{-1}D^{2}\log\frac{M(d+h^{-1}D^{2})}{\veps}$ iterations,
\[
\sup_{x\in\mc K}\norm{\mu_{x}^{n}-\pi}_{\tv}\leq\frac{\veps}{M}\,.
\]
By Theorem~\ref{thm:boosting} with $\bnorm{\frac{\D\mu}{\D\pi}-1}_{L^{\infty}}\leq M$, we have $\bnorm{\frac{\D\mu P^{n}}{\D\pi}-1}_{L^{\infty}}\leq\veps$ and $\eu R_{\infty}(\mu P^{n}\mmid\pi)\leq\veps$.
\end{proof}

\begin{rmk}
    We note that the above guarantee bounds the \emph{iteration number} for mixing in $\eu R_\infty$, not the query complexity, through any-start guarantees in $\tv$. The query complexity needed to attain $\varepsilon$ $\tv$-distance from any start in our implementation can potentially be exponential in $d$. This, however, will not be relevant to our results since the kernel $P$ only captures the accepted proposals. This way, we can view the result above as merely extracting a latent property of the algorithm, which is not dependent on details of its implementation. For Metropolized algorithms such as $\bw$ however, the kernel $P$ will need to take the number of rejections into account, and the dependence on $d$ for any-start guarantees can potentially scale poorly as a result.
\end{rmk}

Lastly, combined with the query complexity of implementing each step of the $\ps$ for uniform distributions, we obtain a guarantee on the query complexity for achieving $\eu R_\infty$-mixing for uniform sampling.

\begin{thm}\label{thm:ps-unif-Rinfty}
    For any $\eta,\varepsilon \in (0,1)$, $n\in \mathbb{N}$ defined below, and convex body $\K$ given by a well-defined membership oracle, the $\ps$ (Algorithm~\ref{alg:prox-uniform}) with $h = (2d^2\log\frac{9nM}{\eta})^{-1}$, $N = \Otilde(\frac{nM}{\eta})$, and initial distribution $\mu_0$ $M$-warm with respect to $\pi$ the uniform distribution over $\K$ achieves $\eu R_\infty(\mu_n \mmid \pi)\leq \veps$ after $n = \Otilde(d^2 D^2 \log^2 \frac{M}{\eta \veps})$ iterations, where $\mu_n$ is the law of the $n$-th iterate.
    With probability $1-\eta$, the algorithm iterates this many times without failure, using $\Otilde(M d^2D^2\log^6\frac{1}{\eta \veps})$ expected number of membership queries in total.
\end{thm}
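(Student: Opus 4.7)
The plan is to combine the $\eu R_\infty$-mixing bound of Lemma~\ref{lem:prox-sampler-unif-final} with a per-step bound on the expected number of membership queries, and then handle the failure probability via a Markov-type argument. First, substituting $h^{-1} = 2d^2 \log\tfrac{9nM}{\eta}$ into Lemma~\ref{lem:prox-sampler-unif-final} gives $\eu R_\infty(\mu_n \mmid \pi) \leq \veps$ after
\[
n \gtrsim h^{-1}D^2 \log \frac{M(d+h^{-1}D^2)}{\veps}
\]
iterations. Plugging in $h^{-1}$ and absorbing the resulting $\log n$ into polylogarithmic factors, solving implicitly for $n$ yields $n = \Otilde(d^2 D^2 \log^2 \tfrac{M}{\eta\veps})$, matching the claimed iteration count.

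Next, I would bound the expected number of rejections in Line~\ref{line:backward} of Algorithm~\ref{alg:prox-uniform}. The data-processing inequality (Lemma~\ref{lem:dpi}) extended to $\eu R_\infty$, applied iteratively to the stationary kernel $P$, implies that each intermediate law $\mu_k = \mu_0 P^k$ remains $M$-warm with respect to $\pi$. Combined with the choice $h \lesssim d^{-2}$, the local-conductance argument of~\cite{kook2024inout} then bounds the expected number of rejections per iteration by $\Otilde(M)$.

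For the failure analysis, let $X_k$ denote the number of rejections in iteration $k$ of the ``ideal'' coupled process without threshold, so $\E[X_k] = \Otilde(M)$. With $N = \Otilde(nM/\eta)$, Markov's inequality gives $\P(X_k \geq N) = \O(\eta/n)$, and a union bound over the $n$ iterations controls the total failure probability by $\eta$. Conditioning on the no-failure event inflates the expected total queries by at most $(1-\eta)^{-1} = \O(1)$, giving expected total queries $\Otilde(nM) = \Otilde(Md^2D^2 \log^{\O(1)}\tfrac{1}{\eta\veps})$; collecting polylogarithmic factors from the iteration bound, $h^{-1}$, and the local-conductance bound yields the stated $\log^6\tfrac{1}{\eta\veps}$.

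The hard part is the mildly circular dependence between $h$ (logarithmic in $nM/\eta$) and $n$ (depending on $h^{-1}$): one must verify that the implicit solution $n = \Otilde(d^2D^2 \log^2 \tfrac{M}{\eta\veps})$ is consistent with the prescribed $h$. This is routine because the coupling is only through double logarithms. All other ingredients are direct consequences of Lemma~\ref{lem:prox-sampler-unif-final}, DPI for $\eu R_\infty$, and Markov's inequality.
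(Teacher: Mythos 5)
Your overall structure matches the paper's: combine Lemma~\ref{lem:prox-sampler-unif-final} for the iteration count with a per-iteration bound on queries and failure probability. The paper obtains the latter by citing \cite[Lemma 14]{kook2024inout} as a black box (the analogue for the uniform case of Lemma~\ref{lem:num-membership-queries}), whereas you attempt to rederive it. The iteration-count derivation, including the resolution of the mild circularity between $h$ and $n$, is fine, and the observation that the data-processing inequality keeps every intermediate $\mu_k$ $M$-warm is also correct and is indeed the reason a single $M$-warmness hypothesis controls all $n$ iterations.

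However, your failure analysis contains a genuine flaw. You take $X_k$ to be the number of rejections of the \emph{uncapped} process (``without threshold'') and claim $\E[X_k] = \Otilde(M)$, then apply Markov's inequality. This premise is false: conditional on $y$, the number of trials is geometric with mean $1/\ell(y)$, and for the uniform target one has the exact identity $\pi^Y(y) = \ell(y)/\vol(\mc K)$, so
\[
\E_{\pi^Y}\Bigl[\frac{1}{\ell}\Bigr] = \int \frac{1}{\ell(y)}\cdot\frac{\ell(y)}{\vol(\mc K)}\,\D y = \frac{1}{\vol(\mc K)}\int_{\R^d}\D y = \infty\,,
\]
and by warmness $\E_{\mu_h}[1/\ell] \ge M^{-1}\E_{\pi^Y}[1/\ell]$ is also infinite, so Markov's inequality gives nothing. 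The $\Otilde(M)$ bound established by the local-conductance argument in \cite{kook2024inout} (and in the paper's own Gaussian analogue, Lemma~\ref{lem:num-membership-queries}) is only for the \emph{capped} quantity $\E[\min(1/\ell, N)]$, and the failure probability is bounded there not by Markov but by directly controlling $\E_{\mu_h}[(1-\ell)^N] \le M\,\E_{\pi^Y}[(1-\ell)^N]$ and splitting the integral over the effective domain $R$ and its complement. Both the capping of the expectation and the direct treatment of $(1-\ell)^N$ are essential; your write-up omits exactly the step where the integral is decomposed to handle the bad region where $\ell$ is exponentially small.
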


\begin{proof}
    By \cite[Lemma 14]{kook2024inout}, if one takes variance $h = (2d^2\log\frac{9nM}{\eta})^{-1}$ and threshold $N = \Otilde(\frac{nM}{\eta})$, then for each iteration the expected number of membership queries is $\Otilde(M\log\frac{1}{\eta\veps})$, and the failure probability is at most $\eta/n$. By Lemma~\ref{lem:prox-sampler-unif-final}, the $\ps$ should iterate $n=\Otilde(d^2D^2\log\frac{M}{\eta\veps})$ times to output a sample whose law is $\veps$-close to $\pi$ in $\eu R_\infty$. Therefore, throughout the $n$ outer iterations, the failure probability of the $\ps$ is at most $\eta$ by a union bound, and the total expected number of queries is $\Otilde(Md^2D^2\log\frac{M}{\eta\veps})$.
\end{proof}

\section{Uniform ergodicity of proximal sampler for truncated Gaussians}\label{scn:trunc-gsn}
In the $\gc$ algorithm  \cite{cousins2018gaussian}, a truncated Gaussian $\pi_{\sigma^2}:=\mc N(0, \sigma^2 I_d)|_{\K}$ for a convex body $\K$ serves as an annealing distribution, where the variance $\sigma^2$ is gradually increased across phases. In particular, $\gc$ uses a Metropolized $\bw$ to sample such a truncated Gaussian. Its convergence rate is quantified in $\tv$ through a Cheeger isoperimetry of the truncated Gaussian (namely, a Poincar\'e inequality).

In order for us to properly carry the warmness across phases, we must need a $\eu R_\infty$-guarantee for sampling truncated Gaussians. Just as we established $\eu R_\infty$ guarantees for uniform sampling via  Theorem~\ref{thm:boosting} and \eqref{eq:lsi} for the uniform distribution, it is natural to propose a sampler for a truncated Gaussian whose convergence rate can be concisely related to $\clsi(\pi_{\sigma^2})$. Thus, in \S\ref{sec:prox-gauss-warm} we analyze the $\ps$ for $\pi_{\sigma^2}$, providing its convergence rate in terms of $\clsi(\pi_{\sigma^2})$ through a heat flow perspective just as in \cite{kook2024inout}. Then in \S\ref{sec:prox-gauss-anystart} we establish the uniform ergodicity of the $\ps$ for $\pi_{\sigma^2}$, deducing the $\eu R_\infty$ guarantee in Theorem~\ref{thm:ps-unif-Rinfty}.

\subsection{Convergence analysis}\label{sec:prox-gauss-warm}

Compared with the $\ps$ for uniform distributions, the $\ps$ for truncated Gaussians (Algorithm~\ref{alg:prox-gaussian}) requires a different backward step (i.e., the implementation of RGO) while using the same forward step (i.e., Gaussian step).

One iteration of the $\ps$ for $\mc N(0,\sigma^{2}I_{d})|_{\mc K}$ consists of two steps:
\begin{itemize}
\item (Forward step) $y\sim\pi^{Y|X}(\cdot|x)=\mc N(x,hI_{d})$.
\item (Backward step) 
\[x\sim\pi^{X|Y}(\cdot|y)\propto\exp\bigl(-\frac{1}{2\sigma^{2}}\norm x^{2}-\frac{1}{2h}\norm{y-x}^{2}\bigr)\cdot\ind_{\mc K}(x)\propto\mc N\Bpar{\frac{1}{1+h\sigma^{-2}}y,\frac{h}{1+h\sigma^{-2}}I_{d}}\Big|_{\K}\,.
\]
\end{itemize}
To implement the backward step, we use rejection sampling with the proposal $\mc N\bpar{\frac{1}{1+h\sigma^{-2}}y,\frac{h}{1+h\sigma^{-2}}I_{d}}$, accepting if the proposal lies inside of $\K$.
Then the expected number of trials for the first success is
\[
\frac{1}{\ell(y)}:=\frac{\bpar{2\pi(h^{-1}+\sigma^{-2})^{-1}}^{d/2}}{\int_{\mc K}\exp\bpar{-\frac{\sigma^{-2}+h^{-1}}{2}\norm{x-h^{-1}(\sigma^{-2}+h^{-1})^{-1}y}^{2}}\,\D x}\,.
\]
We can write down the density of $\pi^{Y}$ as follows:
\begin{align}
\pi^Y(y) 
& =\frac{\int_{\mc K}\exp\bpar{-\frac{1}{2\sigma^{2}}\norm x^{2}-\frac{1}{2h}\norm{x-y}^{2}}\,\D x}{(2\pi h)^{d/2}\int_{\mc K}\exp\bpar{-\frac{1}{2\sigma^{2}}\norm x^{2}}\,\D x}\nonumber \\
& =\frac{\int_{\mc K}\exp\bpar{-\frac{1}{2}(\sigma^{-2}+h^{-1})\norm{x-h^{-1}(\sigma^{-2}+h^{-1})^{-1}y}^{2}}\,\D x}{(2\pi h)^{d/2}\int_{\mc K}\exp\bpar{-\frac{1}{2\sigma^{2}}\norm x^{2}}\,\D x}\,\exp\bpar{-\frac{1}{2(h+\sigma^{2})}\norm y^{2}}\label{eq:pi-y}\\
& =\frac{(1+h\sigma^{-2})^{-d/2}\,\ell(y)}{\int_{\mc K}\exp\bpar{-\frac{1}{2\sigma^{2}}\norm x^{2}}\,\D x}\,\exp\bpar{-\frac{1}{2(h+\sigma^{2})}\norm y^{2}}\,.\label{eq:density-ell}
\end{align}

\subsubsection{Mixing analysis}

The two steps within $\ps$ have continuous interpolation via the forward and backward heat flow, so their mixing guarantees can be naturally associated with functional inequalities (e.g.~\eqref{eq:pi} and~\eqref{eq:lsi}) for a target distribution. Such a mixing result for the $\ps$ has already been established for unconstrained distributions $\pi^X\propto\exp(-V)$~\cite{chen2022improved}, for which the $\ps$ can be generalized as follows: for $\pi(x,y)\propto \exp\bpar{-V(x)-\frac{1}{2h}\norm{x-y}^2}$, repeat (i) $y_{i+1} \sim \pi^{Y|X=x_i}=\mc N(x_i,hI_d)$ and (ii) $x_{i+1} \sim \pi^{X|Y=y_{i+1}}$.

\begin{prop}[{\cite[Theorem 3]{chen2022improved}}]\label{prop:prox-contraction}
Let $\mu_{k}^{X}$ be the law of the $k$-th output of $\ps$ with initial distribution $\mu_{0}^{X}$. Then, for any $q\geq1$, 
\[
\eu R_{q}(\mu_{k}^{X}\mmid\pi^{X})\leq\frac{\eu R_{q}(\mu_{0}^{X}\mmid\pi^{X})}{\bpar{1+h/\clsi(\pi^X)}^{2k/q}}\,.
\]
\end{prop}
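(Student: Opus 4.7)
The plan is to decompose one step of $\ps$ into its forward Gaussian substep ($x_k\mapsto y_{k+1}$) and its backward conditioning substep ($y_{k+1}\mapsto x_{k+1}$), bound R\'enyi contraction at each, and iterate. Introduce the joint distributions $\mu_k^{X,Y}(x,y):=\mu_k^X(x)\,\mc N(y;x,hI_d)$ and $\pi^{X,Y}(x,y):=\pi^X(x)\,\mc N(y;x,hI_d)$; since both share the same conditional $\pi^{Y|X}$, the R\'enyi chain rule gives $\eu R_q(\mu_k^{X,Y}\mmid\pi^{X,Y})=\eu R_q(\mu_k^X\mmid\pi^X)$. For the backward substep, both $\mu_{k+1}^X$ and $\pi^X$ arise by applying the common Markov kernel $\pi^{X|Y}$ to $\mu_k^Y$ and $\pi^Y$ respectively, so data-processing (Lemma~\ref{lem:dpi}) gives $\eu R_q(\mu_{k+1}^X\mmid\pi^X)\leq\eu R_q(\mu_k^Y\mmid\pi^Y)$. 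The problem thus reduces to quantifying the forward contraction from $\eu R_q(\mu_k^X\mmid\pi^X)$ to $\eu R_q(\mu_k^Y\mmid\pi^Y)$.

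To extract the quantitative per-step factor $(1+h/\clsi(\pi^X))^{2/q}$, I would interpolate along the simultaneous heat flow: set $\rho_t:=\mu_k^X*\mc N(0,tI_d)$ and $\bar\pi_t:=\pi^X*\mc N(0,tI_d)$ for $t\in[0,h]$, both solving $\partial_t u=\tfrac12\lapl u$, with $\rho_0=\mu_k^X$, $\rho_h=\mu_k^Y$, and analogously for $\bar\pi_t$. Define $\mc E_q(t):=\int(\rho_t/\bar\pi_t)^q\,\D\bar\pi_t=\exp\bpar{(q-1)\,\eu R_q(\rho_t\mmid\bar\pi_t)}$; a Fokker--Planck computation with integration by parts yields a de~Bruijn-type identity
\[
\mc E_q'(t)=-\tfrac{q(q-1)}{2}\int\bpar{\rho_t/\bar\pi_t}^{q-2}\bnorm{\grad(\rho_t/\bar\pi_t)}^2\,\D\bar\pi_t\,,
\]
whose right-hand side is a $q$-weighted relative Fisher information. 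Applying \eqref{eq:lsi} for $\bar\pi_t$ to the test function $f=(\rho_t/\bar\pi_t)^{q/2}$ converts this Fisher information back into a multiple of $\mc E_q(t)\log\mc E_q(t)$, producing a Gr\"onwall-type ODE whose integration over $[0,h]$ yields the per-step bound $\eu R_q(\mu_k^Y\mmid\pi^Y)\leq\eu R_q(\mu_k^X\mmid\pi^X)/(1+h/\clsi(\pi^X))^{2/q}$. Composing with the backward-step inequality and iterating $k$ times delivers the claim.

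The main obstacle is this differential-inequality step: one must carefully justify differentiating $\eu R_q$ along the heat flow, cleanly identify the $q$-weighted Fisher information, and convert it back to R\'enyi divergence via the $f^{q/2}$-substitution form of LSI, arranging constants so the resulting ODE integrates to exactly $(1+h/\clsi)^{2/q}$ rather than a looser rate. A subtlety is the LSI constant of $\bar\pi_t$ along the flow; since $\bar\pi_t$ is the Gaussian convolution of $\pi^X$, classical stability results bound $\clsi(\bar\pi_t)\leq\clsi(\pi^X)+t$, which integrates cleanly into the clean $\clsi(\pi^X)$-based rate stated in the proposition. The remaining bookkeeping---the chain rule, data-processing, and iteration of the one-step contraction---is comparatively routine.
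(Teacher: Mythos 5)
There is a genuine gap in your plan, and it is precisely the key insight of \cite{chen2022improved}. You handle the backward substep ($y_{k+1}\mapsto x_{k+1}$) by data-processing alone, which only gives non-expansion $\eu R_q(\mu_{k+1}^X\mmid\pi^X)\leq\eu R_q(\mu_k^Y\mmid\pi^Y)$, and then you try to extract the full per-step factor $(1+h/\clsi)^{2/q}$ from the forward heat flow. That cannot work: the forward simultaneous heat flow contributes only $(1+h/\clsi)^{1/q}$. To see this, note that along $\rho_t=\mu_k^X*\mc N(0,tI_d)$, $\bar\pi_t=\pi^X*\mc N(0,tI_d)$ the de~Bruijn/LSI computation gives $\partial_t\eu R_q(\rho_t\mmid\bar\pi_t)\lesssim -\eu R_q(\rho_t\mmid\bar\pi_t)/(q\,\clsi(\bar\pi_t))$, and integrating $\int_0^h \frac{\D t}{q(\clsi(\pi^X)+t)} = \frac{1}{q}\log(1+h/\clsi(\pi^X))$ yields exactly a $(1+h/\clsi)^{-1/q}$ factor --- one power of $1/q$, not two.

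The missing ingredient is that the backward substep is not merely a Markov kernel composition: it is itself an interpolation by the \emph{time-reversed} heat flow, precisely the equation $\partial_t\mu_t^\leftarrow=-\Div\bpar{\mu_t^\leftarrow\grad\log(\nu P_{h-t})}+\tfrac12\lapl\mu_t^\leftarrow$ appearing in Lemma~\ref{lem:extend} of this paper. Along that reverse flow the reference measure $\bar\pi_{h-t}$ again satisfies an LSI with constant $\clsi(\pi^X)+(h-t)$, and the same de~Bruijn/LSI argument yields a second factor $(1+h/\clsi)^{-1/q}$; composing the forward and reverse flows gives the claimed $(1+h/\clsi)^{-2/q}$ per iteration. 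Until you replace your data-processing step with this quantitative reverse-flow contraction, your accounting falls short by a full power of $1/q$ in the exponent. (For context: the paper itself does not prove this proposition; it imports it from \cite[Theorem 3]{chen2022improved}, and uses the reverse-flow statement from \cite[Lemma 22]{kook2024inout} to transport the result to constrained targets in Lemma~\ref{lem:contraction-prox-gauss}.)
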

This has been further extended to constrained distributions, including the truncated Gaussian, under only mild additional assumptions. 
\begin{lem}[{\cite[Lemma 22]{kook2024inout}}]\label{lem:extend}
Let $\nu$ be a measure, absolutely continuous with respect to the uniform distribution over $\K$, and $\mu_0$ an arbitrary measure. The forward and backward heat flow solutions given by
\begin{align*}
    \partial_t \mu_t &= \frac{1}{2} \Delta \mu_t \,,\\
    \partial_t \mu_{t}^{\leftarrow} &= -\Div\bigl(\mu_t^\leftarrow \nabla \log (\nu P_{h-t})\bigr)+ \frac{1}{2} \Delta \mu_t^\leftarrow\quad\text{with }\mu_0^\leftarrow = \mu_h\,,
\end{align*}
admit solutions on $(0,h]$, and the weak limit $\lim_{t \to h} \mu_t^\leftarrow = \mu_h^\leftarrow$ exists for any initial measure $\mu_0$ with bounded support. Moreover, for any $q$-R\'enyi divergence with $q \in (1, \infty)$,
\[
    \eu R_q(\mu_{h}^\leftarrow \mmid \nu) \leq \lim_{t \downarrow 0} \eu R_q(\mu_{h-t}^\leftarrow \mmid \nu_t)\,.
\]
\end{lem}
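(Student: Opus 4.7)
My plan is to dispatch the three substatements separately: existence on $(0, h)$, existence of the weak limit at $t=h$, and the R\'enyi inequality. The forward equation is trivially solved by $\mu_t = \mu_0 * \mc N(0, t I_d)$, which is $C^\infty$ on $(0,h]$. For the backward equation, I would reinterpret the PDE as the Fokker--Planck equation of the SDE
\[
    dX_s = \nabla\log(\nu P_{h-s})(X_s)\,ds + dB_s, \qquad X_0 \sim \mu_h\,.
\]
Since $\nu P_{h-s}$ is the heat-convolution of $\nu$ with a nondegenerate Gaussian, it is smooth and strictly positive on $\R^d$ for every $h - s > 0$. Hence the drift is $C^\infty$ on every sub-interval $[0, h-\delta]$, and standard strong-solution theory yields a unique solution on $(0, h)$ whose time-$t$ marginals equal $\mu_t^\leftarrow$.

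Extending to $t = h$ is the crux. The key observation is that the reversed SDE is the time-reversal of Brownian motion started at $\nu$ and run for time $h$, which identifies the expected terminal marginal as the Bayes-posterior pushforward
\[
    \mu_h^\leftarrow \;=\; \mu_h K\,, \qquad K(y, dx) \;=\; \frac{\nu(dx)\,\mc N(y;x, h I_d)}{(\nu P_h)(y)}\,,
\]
which is a bona fide Markov kernel thanks to the bounded support of $\nu$ and $\nu P_h > 0$ on $\R^d$. To verify $\mu_t^\leftarrow \rightharpoonup \mu_h K$ as $t \uparrow h$, I would regularize by replacing $\nu$ with the mollification $\nu_\varepsilon := \nu * \mc N(0, \varepsilon^2 I_d)$: the corresponding backward drift is now smooth all the way up to $t=h$, so the analogous limit $\mu_t^{\leftarrow,\varepsilon} \rightharpoonup \mu_h K_\varepsilon$ is immediate. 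Passing $\varepsilon \downarrow 0$ with the tightness of $\{\mu_t^{\leftarrow,\varepsilon}\}$ (whose supports lie in uniform fattenings of $\K$) together with the data-processing inequality to control R\'enyi divergences yields convergence to $\mu_h K$. This mollification step is the principal technical obstacle: $\nabla\log(\nu P_{h-t})$ blows up near $\partial\K$ as $t\uparrow h$, so working directly with the singular SDE is painful, whereas the posterior-kernel description only involves the smooth density $\nu P_h$ and the bounded-support measure $\nu$, sidestepping the singularity.

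The R\'enyi bound now follows from joint lower semi-continuity of $\eu R_q$ under weak convergence (a standard consequence of its convex-duality representation, valid for $q \in (1, \infty)$). Since $\mu_{h-t}^\leftarrow \rightharpoonup \mu_h^\leftarrow$ by the previous step and $\nu_t = \nu P_t \rightharpoonup \nu$ by continuity of the heat semigroup, we obtain
\[
    \eu R_q(\mu_h^\leftarrow \mmid \nu) \;\leq\; \liminf_{t \downarrow 0} \eu R_q(\mu_{h-t}^\leftarrow \mmid \nu_t)\,,
\]
which is at most $\lim_{t\downarrow 0}\eu R_q(\mu_{h-t}^\leftarrow \mmid \nu_t)$ whenever the latter exists. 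In fact, one expects $t \mapsto \eu R_q(\mu_{h-t}^\leftarrow \mmid \nu_t)$ to be monotone, since both $\mu_s^\leftarrow$ and $\nu_{h-s}$ solve the same backward PDE with different initial data; a direct derivative-along-the-flow computation, combined with an isoperimetric inequality for $\nu_{h-s}$, confirms monotonicity and so makes the limit exist automatically.
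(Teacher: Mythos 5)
You should first note that the paper does not prove Lemma~\ref{lem:extend}; it cites it verbatim from \cite[Lemma 22]{kook2024inout}, and the only internal occurrence of this kind of reasoning is the sketch used to prove Lemma~\ref{lem:contraction-prox-gauss} (mollify, invoke Proposition~\ref{prop:prox-contraction}, then use lower semi-continuity and let $\epsilon\to 0$). So there is no in-paper proof to compare against; what follows is an assessment of your argument on its own merits.

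Your overall architecture is sound: the identification of $\mu_h^\leftarrow$ with the posterior-kernel pushforward $\mu_h K$, the mollification $\nu \rightsquigarrow \nu_\varepsilon$ to tame the singular drift near $\partial \mc K$, and the joint lower semi-continuity of $\eu R_q$ under weak convergence are exactly the right ingredients, and they suffice to get the inequality once the weak limit is in hand. Two points deserve correction or more care, though. First, the tightness claim ``whose supports lie in uniform fattenings of $\mc K$'' is not literally correct: for $t<h$ the measures $\mu_t^{\leftarrow,\varepsilon}$ are heat-convolved and so supported on all of $\R^d$. Tightness instead follows from moment bounds (the drift $\nabla\log(\nu_\varepsilon P_{h-t})(y) = -(y-\E[X\mid Y=y])/(h-t)$ is confining with linear growth, and $\mc K$ is bounded), and this is what makes the diagonal passage $t\uparrow h$, $\varepsilon\downarrow 0$ work. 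This is the one step where the argument is genuinely left incomplete: you flag the mollification passage as ``the principal technical obstacle'' without carrying it out, and some quantitative uniformity is needed to interchange the two limits (e.g., a uniform-in-$\varepsilon$ moment bound plus pointwise convergence of the posterior kernels $K_\varepsilon \to K$). Second, the closing remark that the limit in $t$ exists because of a derivative-along-the-flow calculation together with an isoperimetric inequality for $\nu_{h-s}$ is overcomplicated. Both $s \mapsto \mu_s^\leftarrow$ and $s \mapsto \nu_{h-s}$ evolve under the same time-inhomogeneous Fokker--Planck semigroup, which is a bona fide Markov kernel on any $[0,h-\delta]$; the data-processing inequality (Lemma~\ref{lem:dpi}) alone gives that $t\mapsto \eu R_q(\mu_{h-t}^\leftarrow\mmid\nu_t)$ is non-decreasing, so the limit as $t\downarrow 0$ exists and equals the infimum. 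No isoperimetry is needed for this lemma; that ingredient enters only afterward when one wants a quantitative contraction rate (as in Proposition~\ref{prop:prox-contraction}).
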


We will set $\nu=\mc N(0,\sigma^2 I_d)|_{\K}$ in above. It turns out that the solutions to the two equations above give precisely the laws of $\mu_k^Y, \mu_{k+1}^X$ when starting at $\mu_k^X$. Secondly, it is well-known in \cite{bakry2014analysis} that a truncated Gaussian $\nu$ has $\clsi(\nu) \leq \sigma^{2}$, as truncation to a convex set only improves the log-Sobolev constant.
Then we can derive a contraction result of the $\ps$ for $\nu$ in $\eu R_q$ for any $q \geq 1$, emulating the proof for uniform distributions in \cite[Lemma 23]{kook2024inout}. First, convolve the truncated Gaussian with Gaussian with very small variance $\epsilon$, which generates a smooth and unconstrained distribution. Then invoke the contraction result in Proposition~\ref{prop:prox-contraction}, and use lower semicontinuity of the $\eu R_q$ divergence to conclude when sending $\varepsilon \to 0$.

\begin{lem}\label{lem:contraction-prox-gauss}
    Let $\mu^X_k$ be the law of the $k$-th output of the $\ps$ with initial distribution $\mu_0^X$ and target $\pi^X = \mc N(0,\sigma^2 I_d)|_{\K}$.
    Then, for any $q \geq 1$,
    \[
    \eu R_q(\mu_k^X\mmid \pi^X) \leq \frac{\eu R_q(\mu_0^X \mmid \pi^X)}{(1+h\sigma^{-2})^{2k/q}}\,.
    \]
\end{lem}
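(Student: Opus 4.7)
The plan is to mimic the proof strategy of \cite[Lemma 23]{kook2024inout}: reduce the constrained target to a smooth unconstrained surrogate by mollification, apply the unconstrained contraction result of Proposition~\ref{prop:prox-contraction}, and then pass to the limit via lower semicontinuity of the R\'enyi divergence.

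First, I would fix $\epsilon > 0$ and introduce the mollified target $\nu_\epsilon := \pi^X * \mc N(0,\epsilon I_d)$, which is smooth and fully supported on $\R^d$. The key observation is that $\pi^X$ is $\sigma^{-2}$-strongly log-concave (viewing its effective potential $\tfrac{1}{2\sigma^2}\norm{x}^2 + \iota_{\mc K}(x)$ as a convex function on $\R^d$), and convolution with the $\epsilon^{-1}$-strongly log-concave Gaussian $\mc N(0,\epsilon I_d)$ yields a $(\sigma^2+\epsilon)^{-1}$-strongly log-concave measure. By Bakry--\'Emery, $\clsi(\nu_\epsilon) \leq \sigma^2 + \epsilon$.

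Next, consider the proximal sampler targeting $\nu_\epsilon$ with the same step size $h$, initialized at $\mu_0^X$; call its iterates $\mu_{k,\epsilon}^X$. Since $\nu_\epsilon$ is smooth and unconstrained, Proposition~\ref{prop:prox-contraction} applies directly and gives
\[
    \eu R_q(\mu_{k,\epsilon}^X \mmid \nu_\epsilon) \leq \frac{\eu R_q(\mu_0^X \mmid \nu_\epsilon)}{\bpar{1 + h/(\sigma^2+\epsilon)}^{2k/q}}\,.
\]

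Finally, I would pass to the limit $\epsilon \downarrow 0$. By construction $\nu_\epsilon \to \pi^X$ weakly, and by propagating the continuous dependence of the backward conditional kernel on $\epsilon$ inductively across $k$ steps, $\mu_{k,\epsilon}^X \to \mu_k^X$ weakly as well. Combining lower semicontinuity of $\eu R_q$ with the continuity of the initial divergence $\eu R_q(\mu_0^X \mmid \nu_\epsilon) \to \eu R_q(\mu_0^X \mmid \pi^X)$ yields
\[
    \eu R_q(\mu_k^X \mmid \pi^X) \leq \liminf_{\epsilon \downarrow 0} \eu R_q(\mu_{k,\epsilon}^X \mmid \nu_\epsilon) \leq \frac{\eu R_q(\mu_0^X \mmid \pi^X)}{(1+h\sigma^{-2})^{2k/q}}\,.
\]
The principal technical subtlety lies in this final limiting step: one must carefully justify the weak convergence of the iterates $\mu_{k,\epsilon}^X$ to $\mu_k^X$, and the convergence of the initial R\'enyi divergence. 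This is precisely the type of argument encoded in Lemma~\ref{lem:extend}, which handles the single-step analogue via the forward/backward heat-flow representation; iterating this lemma (or running the above mollification argument across all $k$ steps simultaneously) closes the gap. The core R\'enyi contraction itself is inherited unchanged from the smooth regime, so the heart of the proof is ensuring this approximation procedure is compatible with each step of the sampler.
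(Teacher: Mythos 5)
There is a genuine gap in your approach, centered on the single-sided mollification and the claimed convergence of the initial divergence.

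You mollify only the target ($\nu_\epsilon := \pi^X * \mc N(0,\epsilon I_d)$) but not the initial measure $\mu_0^X$. Your limiting step then requires $\eu R_q(\mu_0^X \mmid \nu_\epsilon) \to \eu R_q(\mu_0^X \mmid \pi^X)$, and crucially it needs this convergence to come with an \emph{upper} bound, since your chain of inequalities is
\[
\eu R_q(\mu_k^X \mmid \pi^X) \leq \liminf_{\epsilon \downarrow 0} \eu R_q(\mu_{k,\epsilon}^X \mmid \nu_\epsilon) \leq \liminf_{\epsilon\downarrow 0} C_\epsilon\, \eu R_q(\mu_0^X \mmid \nu_\epsilon)\,.
\]
Lower semicontinuity of $\eu R_q$ (the only continuity property you invoke) gives a lower bound on the limiting divergence, not the upper bound you need. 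In fact, for $x$ near $\partial\K$ the heat flow depresses $\nu_\epsilon$ by up to a factor of $2$ relative to $\pi^X$, so the pointwise ratio $\frac{\D\mu_0^X}{\D\nu_\epsilon}$ \emph{exceeds} $\frac{\D\mu_0^X}{\D\pi^X}$ there, and one generally has $\eu R_q(\mu_0^X\mmid\nu_\epsilon)\geq\eu R_q(\mu_0^X\mmid\pi^X)$. Controlling the boundary contribution requires an argument, not a continuity assertion. The paper avoids this entirely by mollifying \emph{both} $\mu_0^X$ and $\pi^X$ by the same Gaussian kernel, at which point the data-processing inequality (Lemma~\ref{lem:dpi}) gives $\eu R_q(\mu_\epsilon \mmid \pi_\epsilon)\leq\eu R_q(\mu_0^X\mmid\pi^X)$ with no limiting argument needed on the right-hand side. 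This is the missing ingredient.

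A secondary issue is that you run the full $k$-step chain against the fixed mollified target $\nu_\epsilon$ and assert $\mu_{k,\epsilon}^X\to\mu_k^X$ weakly ``by propagating the continuous dependence of the backward kernel.'' Lemma~\ref{lem:extend} is a \emph{single-step} statement whose construction requires the step size to shrink to $h-\epsilon$ so that the total forward time matches $h$; it does not directly deliver the $k$-fold convergence you want, and iterating it correctly is nontrivial. The paper sidesteps this by proving the one-step contraction $\eu R_q(\mu_1^X\mmid\pi^X)\leq (1+h\sigma^{-2})^{-2/q}\,\eu R_q(\mu_0^X\mmid\pi^X)$ cleanly, then iterating the inequality (rather than the approximation). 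Your proposal would become correct if you adopted the paper's structure: mollify both measures with kernel $\mc N(0,\epsilon I_d)$, apply Proposition~\ref{prop:prox-contraction} with step size $h-\epsilon$, bound the right-hand side by the DPI, send $\epsilon\to 0$ via Lemma~\ref{lem:extend}, and only then raise the one-step bound to the $k$-th power.
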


\begin{proof}
For small $\epsilon>0$, as $\mu_\epsilon = (\mu_0^X)_\epsilon = \mu_0^X * \mc N(0,\epsilon I_d)$ and $\pi_\epsilon = \pi^X * \mc N(0, \epsilon I_d)$ are $C^\infty$-smooth, we can invoke the decay result with step size $h-\epsilon$ in Proposition~\ref{prop:prox-contraction}.
Thus, for contraction constants $C_{\epsilon}=(1+\frac{h-\epsilon}{\sigma^{2}+\epsilon})^{-2/q}$ (since $\clsi\bigl(\nu * \mc N(0, \epsilon I_d)\bigr)\leq \clsi(\nu)+\epsilon$ in general), it follows that
\[
\eu R_q(\mu_{h-\epsilon}^\leftarrow \mmid \pi_\epsilon) \leq C_{\epsilon}\cdot \eu R_q(\mu_{\epsilon} \mmid \pi_\epsilon)\leq C_{\epsilon} \cdot \eu R_q(\mu_0 \mmid \pi_0)\,,
\]
where we used the data-processing inequality (Lemma~\ref{lem:dpi}) for the last inequality. By the lower semicontinuity of $\eu R_q$ as noted earlier, sending $\epsilon\to 0$ leads to
\[
\eu R_q(\mu_1^X \mmid \pi^X) = \eu R_q(\mu_{h}^\leftarrow \mmid \pi_0) \leq C\cdot \eu R_q(\mu_0 \mmid \pi_0) = C\cdot \eu R_q(\mu_0^X \mmid \pi^X)\,.
\]
Repeating this argument $k$ times completes the proof.
\end{proof}

\subsubsection{Per-step guarantees}

We can find an \emph{effective domain} under $\pi^Y$, from which $\ps$ escapes with only negligible probability. More precisely, we denote the $\delta$-blowup of $\mc K$ by $\K_{\delta} = \{x\in \Rd: d(x,\K)\leq \delta\}$.
\begin{lem}\label{lem:eff-domain}
Let $R=(1+h\sigma^{-2})\,\K_{\delta}$ with $\delta=t/d$ and $h=\sigma^{2}c^{2}(d^{-1}\wedge(d^{2}\sigma^{2}-c^{2})^{-1})$, where $c$ is any constant smaller than $d^{2}\sigma^{2}$ and $t\geq2c(c+1)$. Then,
\[
\pi^Y(R^{c})\leq\exp\bpar{c^{2}-\frac{t^{2}}{8c^{2}}}\,.
\]
\end{lem}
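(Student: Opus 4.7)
The plan is to bound $\pi^Y(R^c)$ by a chi-squared tail via a coupling argument. Recall that $\pi^Y$ is the law of $Y = X + Z$ with $X \sim \pi^X = \mc N(0, \sigma^2 I_d)|_{\K}$ and $Z \sim \mc N(0, hI_d)$ independent — equivalently, $Y$ is the output of one forward step of the $\psgauss$ chain started at stationarity. Thus $\pi^Y(R^c) = \P(U \notin \K_\delta)$, where $U := Y/(1+h\sigma^{-2})$. The key geometric observation is that since $\K$ is convex and contains the origin, the scaled point $X/(1+h\sigma^{-2})$ lies on the segment $[0, X] \subseteq \K$. Using it as a witness for the distance from $U$ to $\K$,
\[
d(U, \K) \;\leq\; \bnorm{U - \tfrac{X}{1+h\sigma^{-2}}} \;=\; \tfrac{\norm{Z}}{1+h\sigma^{-2}}\,,
\]
and since $\norm{Z}^2/h \sim \chi_d^2$, we get $\pi^Y(R^c) \leq \P(\chi_d^2 > \tilde\alpha\delta^2)$, where $\tilde\alpha := (1+h\sigma^{-2})^2/h$.

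Next, I would plug the stipulated $h$ into $\tilde\alpha\delta^2 = (1+h\sigma^{-2})^2 t^2/(hd^2)$ and verify the scalar inequality $\tilde\alpha\delta^2 \geq t^2/c^2$. In Case 1 ($h = \sigma^2c^2/d$, $\sigma^2 \leq (c^2+d)/d^2$) one finds $\tilde\alpha\delta^2 = (d+c^2)^2 t^2/(d^3c^2 \sigma^2) \geq t^2/c^2 + t^2/d$, while in Case 2 ($h = \sigma^2 c^2/(d^2\sigma^2 - c^2)$) one gets $\tilde\alpha\delta^2 = t^2/c^2 \cdot d^2\sigma^2/(d^2\sigma^2-c^2) \geq t^2/c^2$ with equality attained in the limit $\sigma^2 \to \infty$. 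The remaining step is a chi-squared Chernoff bound of the form $\P(\chi_d^2 > A) \leq e^{-\theta A}(1-2\theta)^{-d/2}$, with $\theta \in (0, \nhalf)$ calibrated so that $e^{-\theta A}$ furnishes the decay $e^{-t^2/(8c^2)}$ while the dimensional factor $(1-2\theta)^{-d/2}$ is absorbed into $e^{c^2}$; the hypothesis $t \geq 2c(c+1)$ ensures we sit in the regime where such a calibration is possible.

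The main obstacle is calibrating this last chi-squared tail step. A naive Lipschitz-based bound $\P(\chi_d > r) \leq \exp(-(r-\sqrt{d})_+^2/2)$ is only tight when $t/c \gtrsim \sqrt{d}$, so in the intermediate regime one must exploit the slack $e^{c^2}$ in the target to pay for the dimensional loss. A reasonable alternative is to work directly from the density formula~\eqref{eq:pi-y}: the convexity of $\K$ gives the projection inequality $\int_\K e^{-\beta\norm{x-u}^2/2}\D x \leq (2\pi/\beta)^{d/2}e^{-\beta d(u, \K)^2/2}$, and integrating the resulting bound on $\pi^Y(y)$ over $R^c$ against its Gaussian envelope produces $\pi^Y(R^c) \leq e^{-\beta\delta^2/2}/\P_{\mc N(0, \sigma^2 I_d)}(X \in \K)$ with $\beta\delta^2 \geq t^2/c^2$; the inclusion $B_1(0) \subseteq \K$ then interfaces cleanly with the $e^{c^2}$ slack in the target.
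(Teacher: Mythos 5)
Both branches of your proposal contain a genuine gap, and it is the same gap in disguise.

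\textbf{The coupling branch.} The bound $d(U,\K)\le \|Z\|/(1+h\sigma^{-2})$ is correct, but passing to $\P(\chi_d^2>\tilde\alpha\delta^2)$ throws away too much. With the stipulated $h$ one gets $\tilde\alpha\delta^2=(1+h\sigma^{-2})\,t^2/c^2$, and since $h\sigma^{-2}\le c^2/d$, this quantity is essentially $t^2/c^2$ --- dimension-free. For the regime the paper actually needs ($c^2=\Theta(\frac{\log\log Z}{\log Z})<1$, $t=\Theta(\log\log Z)$, so $t^2/c^2$ is polylogarithmic), the threshold sits far below $\E[\chi_d^2]=d$, and $\P(\chi_d^2>t^2/c^2)\approx 1$ for moderate $d$ while the target is $<1$. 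Your Chernoff calibration cannot fix this: the constraint $(1-2\theta)^{-d/2}\le e^{c^2}$ forces $\theta\lesssim c^2/d$, and then $\theta\,\tilde\alpha\delta^2\lesssim (c^2/d)(t^2/c^2)=t^2/d$, which is $\ge t^2/(8c^2)$ only if $d\le 8c^2$. Since $d\gg c^2$ here, no admissible $\theta$ exists. The hypothesis $t\ge 2c(c+1)$ does not rescue this --- it plays a different role in the paper's argument (completing a square in the $s$-integral), not in controlling a chi-squared dimensional prefactor.

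\textbf{The density-formula branch.} Your projection inequality is fine, but bounding $\int_{R^c}\exp(-\frac{1}{2(h+\sigma^2)}\|y\|^2)\,\D y$ by the full Gaussian integral over $\R^d$ produces the factor $1/\P_{\mc N(0,\sigma^2 I_d)}(X\in\K)$, and $B_1(0)\subseteq\K$ alone bounds this only by $1/\P(\chi_d^2\le\sigma^{-2})$. In the annealing schedule $\sigma^2$ ranges up to $L^2 d$, in which case this factor is of order $(\sigma^2 d)^{d/2}$ --- exponential in $d$, vastly exceeding the available slack $e^{c^2}$. So the ``clean interface'' with $B_1(0)\subseteq\K$ does not exist.

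\textbf{What the paper does differently.} Both of your routes reduce the event $\{d(U,\K)>\delta\}$ to a free Gaussian quantity, discarding the geometric structure. The paper instead keeps the integral over $R^c$ \emph{localized to the convex body}: after changing variables to $z=(1+h\sigma^{-2})^{-1}y$ and applying the half-space bound, it uses the co-area formula and integration by parts to rewrite the integral as a layer-cake $\int_\delta^\infty(\dots)\int_{\K_s\setminus\K}(\dots)\,\D z\,\D s$. The shell $\K_s\setminus\K$ is then controlled by $\K_s\subset(1+s)\K$ (using $B_1(0)\subseteq\K$), and a rescaling $z\mapsto z/(1+s)$ maps the shell integral \emph{back into $\K$}, producing exactly the normalization $\int_\K\exp(-\frac{\|z\|^2}{2\sigma^2})\D z$ at the cost of a factor $(1+s)^d\le e^{sd}$. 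That factor, unlike $(1-2\theta)^{-d/2}$ or $1/\P(X\in\K)$, integrates benignly against the one-dimensional Gaussian density in $s$, and completing the square produces $\exp(\frac{h'd^2}{2})\exp(-\frac{1}{2}(\frac{\delta}{\sqrt{h'}}-d\sqrt{h'}-1)^2)$, which with $\delta=t/d$, $h'=c^2/d^2$, $t\ge 2c(c+1)$ gives $e^{c^2-t^2/(8c^2)}$. The moral: the constraint $X\in\K$ must be carried through to the very end via the shell decomposition, because it is precisely what keeps the dimensional factor to a manageable $(1+s)^d$ rather than a raw Gaussian or chi-squared one.
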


\begin{proof}
Using the density formula for $\pi^Y$ in \eqref{eq:pi-y}, 
\begin{align*}
 & \int_{\K}\exp\bpar{-\frac{1}{2\sigma^{2}}\norm x^{2}}\,\D x\cdot\pi^Y(R^{c})\\
 & =\int_{[(1+h\sigma^{-2})\,\K_{\delta}]^{c}}\frac{\int_{\mc K}\exp\bpar{-\frac{1}{2}(\sigma^{-2}+h^{-1})\norm{x-(1+h\sigma^{-2})^{-1}y}^{2}}\,\D x}{(2\pi h)^{d/2}}\,\exp\bpar{-\frac{1}{2(h+\sigma^{2})}\norm y^{2}}\,\D y\\
 & \underset{(i)}{=}\frac{(1+h\sigma^{-2})^{d}}{(2\pi h)^{d/2}}\int_{\mc K_{\delta}^{c}}\int_{\mc K}\exp\bpar{-\frac{\sigma^{-2}+h^{-1}}{2}\norm{x-z}^{2}}\exp\bpar{-\frac{1}{2(h+\sigma^{2})}(1+h\sigma^{-2})^{2}\norm z^{2}}\,\D x\D z\\
 & =\frac{(1+h\sigma^{-2})^{d}}{(2\pi h)^{d/2}}\int_{\mc K_{\delta}^{c}}\int_{\mc K}\exp\bpar{-\frac{\sigma^{-2}+h^{-1}}{2}\norm{x-z}^{2}}\exp\bpar{-\frac{1+h\sigma^{-2}}{2\sigma^{2}}\norm z^{2}}\,\D x\D z\\
 & \underset{(ii)}{\leq}\frac{(1+h\sigma^{-2})^{d}}{(2\pi h)^{d/2}}\int_{\mc K_{\delta}^{c}}\int_{\mc H(z)}\exp\bpar{-\frac{\sigma^{-2}+h^{-1}}{2}\norm{x-z}^{2}}\exp\bpar{-\frac{1+h\sigma^{-2}}{2\sigma^{2}}\norm z^{2}}\,\D x\D z\\
 & =(1+h\sigma^{-2})^{d/2}\int_{\mc K_{\delta}^{c}}\int_{d(z,\mc K)}^{\infty}\sqrt{\frac{\sigma^{-2}+h^{-1}}{2\pi}}\exp\bpar{-\frac{(\sigma^{-2}+h^{-1})y^{2}}{2}}\exp\bpar{-\frac{1+h\sigma^{-2}}{2\sigma^{2}}\norm z^{2}}\,\D y\D z\,,
\end{align*}
where $(i)$ follows from the change of variables $z=(1+h\sigma^{-2})^{-1}y$, and in $(ii)$ $\mc H(z)$ denotes the supporting half-space at $\msf{proj}_{\mc K}(z)$ containing $\mc K$ for given $z\in\de\mc K_{\delta}$, given as
\[
    \mc H(z) = \{x \in \R^d: \inner{\msf{proj}_{\mc K}(z)-z, x- \msf{proj}_{\mc K}(z)} \geq 0 \}\,,
\]
when $z \not\in \mc K$.

We define the one dimensional Gaussian integral
\[
\msf T(s)=\P_{z\sim\mc N\bpar{0,(\sigma^{-2}+h^{-1})^{-1}}}(z\geq s)=\sqrt{\frac{\sigma^{-2}+h^{-1}}{2\pi}}\int_{s}^{\infty}\exp\Bpar{-\frac{(\sigma^{-2}+h^{-1}) y^{2}}{2}}\,\D y\,.
\]
By the co-area formula and integration by parts, for $\mc H^{d-1}$ the $(d-1)$-dimensional Hausdorff measure
\begin{align*}
 & \int_{\mc K_{\delta}^{c}}\int_{d(z,\mc K)}^{\infty}\sqrt{\frac{\sigma^{-2}+h^{-1}}{2\pi}}\exp\Bpar{-\frac{(\sigma^{-2}+h^{-1})y^{2}}{2}}\exp\Bpar{-\frac{1+h\sigma^{-2}}{2\sigma^{2}}\norm z^{2}}\,\D y\D z\\
 & =\int_{\delta}^{\infty}\msf T(s)\int_{\de\mc K_{s}}\exp\Bpar{-\frac{1+h\sigma^{-2}}{2\sigma^{2}}\norm z^{2}}\,\mc H^{d-1}(\D z)\,\D s\\
 & =\Bbrack{\underbrace{\msf T(s)\int_{0}^{s}\int_{\de\mc K_{r}}\exp\Bpar{-\frac{1+h\sigma^{-2}}{2\sigma^{2}}\norm z^{2}}\,\mc H^{d-1}(\D z)\,\D r}_{\eqqcolon\msf I}}_{s=\delta}^{\infty}\\
 & \qquad+\int_{\delta}^{\infty}\sqrt{\frac{\sigma^{-2}+h^{-1}}{2\pi}}\exp\Bpar{-\frac{(\sigma^{-2}+h^{-1})s^{2}}{2}}\int_{0}^{s}\int_{\de\mc K_{r}}\exp\Bpar{-\frac{1+h\sigma^{-2}}{2\sigma^{2}}\norm z^{2}}\,\mc H^{d-1}(\D z)\,\D r \D s\,.
\end{align*}
The double integral above can be bounded as follows:
\begin{align*}
\int_{0}^{s}\int_{\de\mc K_{c}}\exp\bpar{-\frac{1+h\sigma^{-2}}{2\sigma^{2}}\norm z^{2}}\,\mc H^{d-1}(\D z)\,\D c & =\int_{\mc K_{s}\backslash\mc K}\exp\bpar{-\frac{1+h\sigma^{-2}}{2\sigma^{2}}\norm z^{2}}\,\D z\\
 & \underset{(i)}{\le}\int_{(1+s)\,\K}\exp\bpar{-\frac{1+h\sigma^{-2}}{2\sigma^{2}}\norm z^{2}}\,\D z\\
 & =(1+s)^{d}\int_{\mc K}\exp\bpar{-\frac{(1+s)^{2}(1+h\sigma^{-2})}{2\sigma^{2}}\norm z^{2}}\,\D z\\
 & \le(1+s)^{d}\int_{\mc K}\exp\bpar{-\frac{1}{2\sigma^{2}}\norm z^{2}}\,\D z\,.
\end{align*}
where we used $\mc K_{s}\subset(1+s)\mc K$ in $(i)$, which follows from $B_{1}(0)\subset\mc K$. Hence, the double integral is bounded by $(1+s)^{d}\vol(\mc K)$.
Recall a standard tail bound for a Gaussian distribution: 
\[
\msf T(s)\leq\half\exp\Bpar{-\half\bpar{s(\sigma^{-2}+h^{-1})^{1/2}-1}^{2}}\,.
\]
Combining these two bounds, it holds that $\msf I$ vanishes at $s=\infty$. 

Upper bounding $\msf I$ by $0$, we have just derived that
\begin{align*}
\int_{\K}&\exp\bpar{-\frac{1}{2\sigma^{2}}\norm x^{2}}\,\D x\cdot\pi^Y(R^{c}) \\
&\qquad \leq(1+h\sigma^{-2})^{d/2}\int_{\delta}^{\infty}(1+s)^{d}\sqrt{\frac{\sigma^{-2}+h^{-1}}{2\pi}}\exp\bpar{-\frac{(\sigma^{-2}+h^{-1})s^{2}}{2}}\,\D s \cdot \int_{\K}\exp\bpar{-\frac{1}{2\sigma^{2}}\norm z^{2}} \, \D z\,.
\end{align*}
Dividing both sides by the factor $\int_{\K}\exp(-\frac{1}{2\sigma^{2}}\norm x^{2}) \, \D x$, we obtain the following bound,
\begin{align*}
\pi^Y(R^{c}) & \leq(1+h\sigma^{-2})^{d/2}\int_{\delta}^{\infty}(1+s)^{d}\sqrt{\frac{\sigma^{-2}+h^{-1}}{2\pi}}\exp\bpar{-\frac{(\sigma^{-2}+h^{-1})s^{2}}{2}}\,\D s\\
& \leq(1+h\sigma^{-2})^{d/2}\int_{\delta}^{\infty}\exp(sd) \sqrt{\frac{\sigma^{-2}+h^{-1}}{2\pi}}\exp\bpar{-\frac{(\sigma^{-2}+h^{-1})s^{2}}{2}}\,\D s\\
 &\underset{(i)}{\leq} \frac{1}{2}(1+h\sigma^{-2})^{d/2}\exp\bpar{\frac{h'd^{2}}{2}}\exp\Bpar{-\half\bpar{\frac{\delta}{\sqrt{h'}}-d\sqrt{h'}-1}^{2}}\,,
\end{align*}
where in $(i)$ we again use the tail bound for a Gaussian distribution. 
Above, we introduced a new variable $h':=(\sigma^{-2}+h^{-1})^{-1}=\frac{h}{1+h\sigma^{-2}}$. Taking $\delta=t/d$ and $h'=c^{2}/d^{2}$ subject to $t\geq2c(c+1)$, we can make 
\[
\exp\bpar{\frac{h'd^{2}}{2}}\exp\Bpar{-\half\bpar{\frac{\delta}{\sqrt{h'}}-d\sqrt{h'}-1}^{2}}\leq\exp\bpar{\frac{c^{2}}{2}-\frac{t^{2}}{8c^{2}}}\,.
\]
Since $h\leq\sigma^{2}c^{2}/d$, we also have $(1+h\sigma^{-2})^{d/2}\leq\exp(c^{2}/2)$, and the claim follows.
\end{proof}

We can now provide the per-step complexity of the $\ps$ under a warm start.

\begin{lem}\label{lem:num-membership-queries}
Let $\K$ be a convex body in $\Rd$, and $\mu$ an initial distribution $M$-warm with respect to $\pi^{X}$. For any given $n\in\mathbb{N}$ and $\eta\in(0,1)$, set $Z=\frac{9nM}{\eta}(\geq9)$, $h=\sigma^{2}\frac{\log\log Z}{\log Z}(d^{-1}\wedge(d^{2}\sigma^{2}-\frac{\log\log Z}{2\log Z})^{-1})$ and $N=Z(\log Z)^{4}=\Otilde(\frac{nM}{\eta})$. Then, the failure probability of one iteration is at most $\eta/n$. Moreover, the expected number of membership queries needed per iteration is $\O\bpar{M(\log\frac{nM}{\eta})^{4}}$.
\end{lem}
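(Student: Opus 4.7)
The plan is to follow the template of \cite[Lemma 14]{kook2024inout}: combine the effective-domain estimate of Lemma~\ref{lem:eff-domain} with a direct computation of the expected rejection-sampling cost, using the explicit marginal density~\eqref{eq:density-ell}.

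Since the forward step $x\mapsto\mc N(x,hI_d)$ is a Markov kernel, the data-processing inequality (Lemma~\ref{lem:dpi}) transfers warmness: the law $\mu^Y$ of $y_1$ is $M$-warm with respect to $\pi^Y$. I then apply Lemma~\ref{lem:eff-domain} with $c^2=\log\log Z/\log Z$ (the small parameter dictated by the statement's $h$) and $t=\Theta(\sqrt{\log\log Z})$ chosen so that $t^2/(8c^2)-c^2\geq\log(9Z)$ while $t\geq 2c(c+1)$; the lemma then yields $\pi^Y(R^c)\leq 1/Z$ for $R=(1+h\sigma^{-2})\K_{\delta}$ with $\delta=t/d$, and warmness gives $\mu^Y(R^c)\leq M/Z=\eta/(9n)$.

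The central computation is the bound $\E_{\pi^Y}[\ind_R/\ell(y)]\leq e^{c^2/2+t}$. By~\eqref{eq:density-ell}, $\pi^Y(y)/\ell(y)$ is proportional to $\exp\bpar{-\norm y^2/(2(h+\sigma^2))}$ with normalization $(1+h\sigma^{-2})^{-d/2}/Z_{\sigma}$, where $Z_{\sigma}=\int_{\K}e^{-\norm x^2/(2\sigma^2)}\,\D x$. Changing variables $y=(1+h\sigma^{-2})z$ converts the integral over $R$ into a Gaussian integral over $\K_{\delta}$; dropping the resulting non-positive contribution $-h\norm z^2/(2\sigma^4)$ and using $\K_{\delta}\subseteq(1+\delta)\K$ (valid because $B_1(0)\subseteq\K$), then rescaling once more to recover $Z_{\sigma}$, I obtain
\[
    \E_{\pi^Y}\Brack{\tfrac{\ind_{R}}{\ell(y)}}\leq (1+h\sigma^{-2})^{d/2}(1+\delta)^d\leq e^{c^2/2+t}\,,
\]
which is polylogarithmic in $Z$ under the choices above.

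The two claims now follow from standard rejection-sampling bookkeeping. Since one iteration costs $\min(N,G_{\ell(y)})$ queries with expectation at most $\min(N,1/\ell(y))$,
\[
    \E[\#\text{queries}]\leq M\,\E_{\pi^Y}[\ind_R/\ell(y)]+N\,\mu^Y(R^c)\leq Me^{c^2/2+t}+N\cdot\tfrac{\eta}{9n}=\O\bpar{M(\log Z)^4}\,.
\]
For the failure probability, Markov's inequality gives $\mu^Y(\{y\in R:\ell(y)<\tau\})\leq \tau M e^{c^2/2+t}$; setting this to $\eta/(3n)$ determines $\tau$, and then $N=\tau^{-1}\log(3n/\eta)$ forces the conditional failure probability on $\{y\in R,\,\ell(y)\geq\tau\}$ below $\eta/(3n)$. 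A union bound with $\mu^Y(R^c)\leq\eta/(9n)$ yields total failure probability $\leq\eta/n$, and one checks that the resulting $N$ fits under $Z(\log Z)^4$. The main obstacle is the central computation: balancing the $(1+h\sigma^{-2})^d$ factor, the Gaussian rescaling, and $(1+\delta)^d$ so the per-step cost stays polylogarithmic---the non-obvious choice $c^2=\log\log Z/\log Z$ is precisely what makes $e^{c^2/2+t}$ subpolynomial in $Z$ while still shrinking $\pi^Y(R^c)$ below $1/Z$.
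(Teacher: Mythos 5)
Your proposal is correct and follows essentially the same route as the paper: transfer $M$-warmness to $\pi^Y$, invoke Lemma~\ref{lem:eff-domain} for the effective domain, bound $\E_{\pi^Y}[\ind_R/\ell]$ via the change of variables $y=(1+h\sigma^{-2})z$ and $\K_\delta\subseteq(1+\delta)\K$, and then split both the failure probability and the query count over $R^c$, $R\cap[\ell<\tau]$, and $R\cap[\ell\geq\tau]$. The only differences from the paper are cosmetic in the constant choices---you take $c^2=\log\log Z/\log Z$ and $t=\Theta(\sqrt{\log\log Z})$ versus the paper's $c^2=\frac{\log\log Z}{2\log Z}$ and $t=\sqrt{8}\log\log Z$ (and your Markov-inequality phrasing of the third-region bound is equivalent to the paper's direct density computation); both parameter choices yield the same $\O(M(\log\frac{nM}{\eta})^4)$ conclusion, and in fact the paper's own lemma statement and proof already disagree by a factor of two in the $c^2$ that defines $h$, so your reading is as defensible as theirs.
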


\begin{proof}
For $\mu_{h}:=\mu*\mc N(0,hI_{d})$, the failure probability is $\E_{\mu_{h}}[(1-\ell)^{N}]$. Since $\D\mu/\D\pi^{X}\leq M$ implies $\D\mu_{h}/\D(\pi^{X})_{h}=\D\mu_{h}/\D\pi^{Y}\leq M$, it follows that $\E_{\mu_{h}}[(1-\ell)^{N}]\leq M\,\E_{\pi^{Y}}[(1-\ell)^{N}]$.

We now bound the expectation. For the effective domain $R=(1+h\sigma^{-2})\K_{\delta}$,
\begin{align*}
\int_{\Rd}\underbrace{(1-\ell)^{N}\,\D\pi^{Y}}_{\eqqcolon\msf A} & =\int_{R^{c}}\msf A+\int_{R\cap[\ell\geq N^{-1}\log(3nM/\eta)]}\msf A+\int_{R\cap[\ell<N^{-1} \log(3nM/\eta)]}\msf A\\
 &\underset{(i)}{\leq} \pi^{Y}(R^{c})+\int_{[\ell\geq N^{-1}\log(3nM/\eta)]}\exp(-\ell N)\,\D\pi^{Y}\\
&\qquad+\int_{R\cap[\ell<N^{-1}\log(3nM/\eta)]}\frac{(1+h\sigma^{-2})^{-d/2}\ell(y)}{\int_{\mc K}\exp\bpar{-\frac{1}{2\sigma^{2}}\norm x^{2}}\,\D x}\exp\bpar{-\frac{1}{2(h+\sigma^{2})}\norm y^{2}}\,\D y\\
 & \underset{(ii)}{\leq}  e^{c_{1}^{2}-\frac{t^{2}}{8c_{1}^{2}}}+\frac{\eta}{3nM}+\frac{\log(3nM/\eta)}{N}\,\frac{\int_{R}(1+h\sigma^{-2})^{-d/2}\exp\bpar{-\frac{1}{2(h+\sigma^{2})}\norm y^{2}}\,\D y}{\int_{\mc K}\exp\bpar{-\frac{1}{2\sigma^{2}}\norm x^{2}}\,\D x}\\
 & \underset{(iii)}{\leq} e^{c_{1}^{2}}\exp(-\frac{t^{2}}{8c_{1}^{2}})+\frac{\eta}{3nM}+\frac{\log(3nM/\eta)}{N}\cdot(1+h\sigma^{-2})^{d/2}(1+\delta)^{d}\\
 & \underset{(iv)}{\leq} e^{c_{1}^{2}}\exp(-\frac{t^{2}}{8c_{1}^{2}})+\frac{\eta}{3nM}+\frac{\exp(t+c^{2})}{N}\,\log\frac{3nM}{\eta}\,,
\end{align*}
where in $(i)$ we bounded the $(1-\ell)^N \leq 1$ in the first term, $(1-\ell)^N\leq \exp(-\ell N)$ in the second term, and again $(1-\ell)^N \leq 1$ in the third term, and used the density formula \eqref{eq:density-ell} of $\pi^Y$ in third term. In $(ii)$, the first bound follows from Lemma~\ref{lem:eff-domain}, while the second and third uses the condition on $\ell$ over each domain. $(iii)$ follows from the change of variables. Lastly, $(iv)$ follows from the setup $\delta=t/d$ and $h=\sigma^{2}c^{2}(d^{-1}\wedge(d^{2}\sigma^{2}-c^{2})^{-1})\leq\sigma^2 c^2d^{-1}$ in Lemma~\ref{lem:eff-domain}.

With $c^{2}=\frac{\log\log Z}{2\log Z}$, $t=\sqrt{8}\log\log Z$, and $N=Z(\log Z)^{4}$, the last line is bounded by $\frac{\eta}{nM}$.
Therefore, 
\[
\E_{\mu_{h}}[(1-\ell)^{N}]\leq M\,\E_{\pi^{Y}}[(1-\ell)^{N}]\leq\frac{\eta}{n}\,.
\]

We now bound the expected number of trials per iteration. Let $S$ be the minimum of the threshold $N$ and the number of trials until the first success. Then the expected number of trials per step is bounded by $M\E_{\pi^{Y}}[S]$ due to $\D\mu_{h}/\D\pi^{Y}\leq M$. Thus, 
\begin{align*}
\int_{\Rd}\Bpar{\frac{1}{\ell}\wedge N}\,\D\pi^{Y} & \leq\int_{R}\frac{1}{\ell}\,\D\pi^{Y}+N\pi^{Y}(R^{c})=\frac{\int_{R}(1+h\sigma^{-2})^{-d/2}\exp\bpar{-\frac{1}{2(h+\sigma^{2})}\norm y^{2}}\,\D y}{\int_{\mc K}\exp\bpar{-\frac{1}{2\sigma^{2}}\norm x^{2}}\,\D x}+N\pi^{Y}(R^{c})\\
 & \leq\exp(t+c_{1}^{2})+N\exp(-\Omega(t^{2}))\leq e(\log Z)^{3}+3(\log Z)^{4}=\O\Bpar{\bpar{\log\frac{nM}{\eta}}^{4}}\,.
\end{align*}
Therefore, the expected number of trials per step is $\O\bpar{M(\log\frac{nM}{\eta})^{4}}$, and the claim follows since each trial uses one query to the membership oracle of $\mc K$.
\end{proof}

\subsubsection{Query complexity under a warm start}

We now put together the mixing and per-step guarantees established above.

\begin{prop}\label{prop:ps-gauss-Rq}
For any $\eta,\varepsilon \in (0,1)$, $q\geq 1$, $n\in \mathbb{N}$ defined below and convex body $\K$ given by a well-defined membership oracle, the $\ps$ (Algorithm~\ref{alg:prox-gaussian}) with $h = \sigma^{2}(\frac{1}{d}\wedge\frac{1}{d^{2}\sigma^{2}-1})(\log\frac{9nM}{\eta})^{-1}$, $N = \Otilde(\frac{nM}{\eta})$, and initial distribution $\mu_0^X$ which is $M$-warm with respect to $\pi^X = \mc N(0, \sigma^2 I_d)|_{\K}$ achieves  $\eu R_q(\mu^X_n \mmid \pi^X)\leq \veps$ after $n = \Otilde\bigl(q(d\vee d^2\sigma^2) \log\frac{M}{\eta \veps}\bigr)$ iterations, where $\mu_n^X$ is the law of the $n$-the iterate. 
With probability $1-\eta$, the algorithm iterates $n$ times without failure, using $\Otilde\bpar{qM(d\vee d^{2}\sigma^{2})(\log\nicefrac{1}{\eta\veps})^5}$ expected number of membership queries in total. In particular, the query complexity is $\Otilde(qMd^{2}\sigma^{2}(\log\nicefrac{1}{\eta\veps})^5)$ when $d^{-1}\lesssim \sigma^2$.
\end{prop}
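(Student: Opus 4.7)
The strategy is to combine the R\'enyi contraction of Lemma~\ref{lem:contraction-prox-gauss} with the per-iteration guarantees of Lemma~\ref{lem:num-membership-queries}, then union-bound the failure probabilities across iterations. First, using monotonicity of R\'enyi divergences in $q$ together with the $M$-warm start hypothesis, I would observe
\[
\eu R_q(\mu_0^X \mmid \pi^X) \leq \eu R_\infty(\mu_0^X \mmid \pi^X) \leq \log M\,.
\]
Applying Lemma~\ref{lem:contraction-prox-gauss} and the elementary inequality $\log(1+x) \geq x/2$ for $x \in (0,1]$, it suffices to take $n \gtrsim q\, h^{-1}\sigma^2 \, \log \frac{\log M}{\veps}$ to ensure $\eu R_q(\mu_n^X \mmid \pi^X) \leq \veps$. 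Substituting the chosen $h$ yields $h^{-1}\sigma^2 = \bpar{d \vee (d^2\sigma^2-1)}\, \log \frac{9nM}{\eta}$, producing an implicit inequality of the form $n \gtrsim q(d \vee d^2\sigma^2)\log\frac{nM}{\eta}\log\frac{\log M}{\veps}$. This is resolved by a standard bootstrap: setting $n = \Otilde\bpar{q(d\vee d^2\sigma^2)\log \frac{M}{\eta\veps}}$ satisfies the inequality, since any residual $\log n$ contributes only doubly-logarithmic corrections absorbed by $\Otilde$.

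Next, I would verify that this $n$ together with the proposition's choices of $h$ and $N = \Otilde(nM/\eta)$ is compatible with the hypotheses of Lemma~\ref{lem:num-membership-queries}, up to logarithmic adjustments of the parameters $c, t$ used there. Granting this, the lemma gives that each iteration declares \textbf{Failure} with probability at most $\eta/n$, and that the expected number of membership queries per iteration is $\O\bpar{M\log^4\frac{nM}{\eta}}$. A union bound across the $n$ iterations then bounds the overall failure probability by $\eta$, establishing the first assertion.

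Finally, to obtain the total query complexity, I would sum the per-iteration expected cost, using $\log\frac{nM}{\eta} = \O\bpar{\log \frac{M}{\eta\veps}}$:
\[
n \cdot \O\Bpar{M\log^4\tfrac{nM}{\eta}} = \Otilde\Bpar{qM(d\vee d^2\sigma^2)\log^5 \tfrac{1}{\eta\veps}}\,.
\]
When $\sigma^2 \gtrsim d^{-1}$, the maximum is attained by $d^2\sigma^2$, yielding the claimed specialization $\Otilde(qMd^2\sigma^2\log^5 \frac{1}{\eta\veps})$. The main obstacle is bookkeeping: the proposition's $h$ is a simplified variant of the $h$ used in Lemma~\ref{lem:num-membership-queries} (which carries additional $\log\log Z$ factors). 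Since the two expressions differ only by doubly-logarithmic factors, the obstacle is handled by rerunning the tail and local-conductance estimates of that lemma verbatim with the new $h$, possibly absorbing a multiplicative constant into $N$; no new analytic input is required.
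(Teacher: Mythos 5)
Your proposal is correct and follows essentially the same route as the paper's proof: contraction via Lemma~\ref{lem:contraction-prox-gauss} to fix the iteration count, per-step rejection and cost bounds via Lemma~\ref{lem:num-membership-queries}, and a union bound to control the overall failure probability. You supply several details the paper leaves implicit (the monotonicity bound $\eu R_q \le \eu R_\infty \le \log M$, the $\log(1+x) \gtrsim x$ step to convert the contraction constant into an iteration count, the bootstrap resolving the implicit dependence of $n$ on $\log n$, and the reconciliation of the simplified $h$ in the proposition statement with the $\log\log Z$-decorated $h$ of Lemma~\ref{lem:num-membership-queries}), all of which are consistent with what the paper actually does inside its proof.
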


\begin{proof}
By Lemma~\ref{lem:contraction-prox-gauss}, the $\ps$ should iterate
$\O\bpar{q\sigma^{2}h^{-1}\log\frac{\log M}{\veps}}$ times to achieve $\veps$-distance in $\eu R_{q}$. To ensure that the query complexity is bounded, we choose
\[
h=\sigma^{2}\frac{\log\log Z}{\log Z}\bpar{\frac{1}{d}\wedge\frac{1}{d^{2}\sigma^{2}-1}}\quad \text{and}\quad N = \frac{9nM}{\eta} \bpar{\log \frac{9nM}{\eta}}^4\,.
\]
By Lemma~\ref{lem:num-membership-queries}, we need the following total number of membership queries in expectation:
\[
\Otilde\Bpar{qM(d\vee d^{2}\sigma^{2})\bpar{\log\frac{1}{\eta\veps}}^5}\,.
\]
Hence, if $d^{-1}\lesssim\sigma^{2}$, then the query complexity is simply $\Otilde(qMd^{2}\sigma^{2}\log^5\nicefrac{1}{\eta\veps})$.    
\end{proof}

\subsection{R\'enyi-infinity guarantees for truncated Gaussians under warm start}\label{sec:prox-gauss-anystart}

To use the boosting technique, we need the uniform ergodicity of the $\ps$, bounding $\norm{\delta_x P^n- \pi^X}_{\tv}$ uniformly over $\K$, where $P$ is the Markov kernel of the $\ps$ for a truncated Gaussian. To this end, for any $x\in \K$, we bound $\eu R_\infty(\delta_x P^1\mmid\pi^X)$ \emph{uniformly} by $\exp(\text{poly}(D,d))$, so that $\log \log \eu R_\infty(\delta_x P^1\mmid\pi^X)$ does not add more than a polylogarithmic factors to the complexity. In other words, we establish a Gaussian analogue of Lemma~\ref{lem:prox-unif-any-start}.

\begin{lem}\label{lem:prox-gauss-anystart}
For any given $\varepsilon\in(0,1)$, the $\ps$ for a truncated Gaussian $\pi^X$ with variance $h$ and any feasible start $x_{0}\in\mc K$ achieves $\eu R_{\infty}(\mu_{n}^{X}\mmid\pi^{X})\leq \veps$ for $n=\Otilde(h^{-1}\sigma^{2}\log\frac{d+h^{-1}D^{2}}{\veps})$ iterations.
\end{lem}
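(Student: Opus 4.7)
The plan mirrors the boosting argument of Lemma~\ref{lem:prox-sampler-unif-final} in the uniform case: combine a uniform $\tv$-convergence bound from any feasible start with the boosting technique of Theorem~\ref{thm:boosting}. Since the initialization $\delta_{x_0}$ has infinite $\eu R_\infty$-divergence with respect to $\pi^X$, I first run one step of the $\ps$ to smooth $\delta_{x_0}$ into $\mu_1^X := \delta_{x_0}P$ with finite warmness. A uniform bound $\eu R_\infty(\mu_1^X\mmid\pi^X)\leq B$, combined with $\KL$-contraction and Pinsker, then yields the uniform $\tv$-convergence needed to invoke the boosting.

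For the one-step warmness, I would use the reversibility identity $\pi^Y(y)\pi^{X|Y}(u|y) = \pi^X(u)\pi^{Y|X}(y|u)$ with $\pi^{Y|X}(y|v) = \mc N(y;v,hI_d)$ to express
\[
\frac{\D\mu_1^X}{\D\pi^X}(u) = (4\pi h)^{-d/2}\,\exp\!\Bpar{-\tfrac{\norm{x_0-u}^2}{4h}}\,\E_{y\sim\mc N((x_0+u)/2,\,(h/2)I_d)}\!\bbrack{1/\pi^Y(y)}\,.
\]
I would bound the inner expectation using the explicit formula~\eqref{eq:density-ell} for $\pi^Y$, together with a lower bound on the local conductance $\ell(y)$ derived from $B_1(0)\subseteq \mc K$, to obtain a uniform-in-$x_0$ bound $\eu R_\infty(\mu_1^X\mmid\pi^X)\leq B$. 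Invoking Lemma~\ref{lem:contraction-prox-gauss} with $q\to 1$ and $\KL\leq\eu R_\infty$ then gives
\[
\KL(\delta_{x_0}P^n\mmid\pi^X) \leq \frac{B}{(1+h/\sigma^2)^{2(n-1)}}\qquad\text{for every }x_0\in\mc K\,,
\]
so by Pinsker, $\sup_{x_0\in\mc K}\norm{\delta_{x_0}P^n-\pi^X}_\tv\leq \sqrt{B/2}\,(1+h/\sigma^2)^{-(n-1)}$. Applying Theorem~\ref{thm:boosting} to $\mu_1^X$ (whose $L^\infty$-warmness is at most $e^B$) converts this uniform $\tv$-bound into
\[
\eu R_\infty(\mu_n^X\mmid\pi^X) \leq 2\,e^B\sqrt{B/2}\,(1+h/\sigma^2)^{-(n-2)}\,.
\]
Setting the RHS $\leq\veps$ and solving, together with $\log(1+h/\sigma^2)\geq \tfrac{h}{2\sigma^2}$ and $\clsi(\pi^X)\leq\sigma^2$, yields the stated iteration count.

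The main obstacle is the one-step warmness bound, because the local conductance $\ell(y)$ can be very small for $y$ near $\de\mc K$, making the pointwise integrand $1/\pi^Y(y)$ blow up. I would control this by splitting the expectation $\E_y[1/\pi^Y(y)]$ into a contribution from an effective domain (à la Lemma~\ref{lem:eff-domain}), where $\ell(y)$ admits a useful lower bound, and a contribution from its complement, which is controlled by the Gaussian tail decay of the forward step $\mc N((x_0+u)/2,(h/2)I_d)$.
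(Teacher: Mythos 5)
Your high-level strategy matches the paper's exactly: smooth $\delta_{x_0}$ with one $\ps$ step to obtain $\mu_1^X$ with finite warmness bounded uniformly in $x_0$, contract in $\msf{KL}$ using Lemma~\ref{lem:contraction-prox-gauss} and $\clsi(\pi^X)\leq\sigma^2$, apply Pinsker to get a uniform-in-$x_0$ $\tv$-bound, and invoke Theorem~\ref{thm:boosting} with $\mu_1^X$ as the initial measure. That part is correct.

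Where you diverge is the key technical step, the uniform one-step warmness bound, and here your proposed route is both genuinely different and considerably heavier than what the paper does. Your representation
\[
\frac{\D\mu_1^X}{\D\pi^X}(u) = (4\pi h)^{-d/2}\,\exp\Bigl(-\frac{\norm{x_0-u}^2}{4h}\Bigr)\,\E_{y\sim\mc N(\frac{x_0+u}{2},\,\frac{h}{2}I_d)}\Bigl[\frac{1}{\pi^Y(y)}\Bigr]
\]
is a correct application of the reversibility identity. However, you then propose to control $\E_y[1/\pi^Y(y)]$ via the density formula~\eqref{eq:density-ell}, which routes the bound through $\ell(y)^{-1}$, and then deal with the region where $\ell(y)$ is small by an effective-domain split \`a la Lemma~\ref{lem:eff-domain}. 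The paper sidesteps this entirely. Instead, writing $\mu_1^X(x)$ directly as a double integral whose integrand has the backward normalizing constant $\int_{\mc K}\exp\bigl(-\frac{1+h\sigma^{-2}}{2h}\|z-\frac{y}{1+h\sigma^{-2}}\|^2\bigr)\,\D z$ in its denominator, the paper lower-bounds this integral \emph{uniformly in $y \in \R^d$}: expanding the square and using $\|z\|\leq D$ for $z\in\mc K$ together with Young's inequality on $z^\T y$, it shows this integral is at least $\exp(-\frac{3D^2}{2h})\exp\bigl(-\frac{\|y\|^2}{2h(1+h\sigma^{-2})}-\frac{\|y\|^2}{4h}\bigr)\int_{\mc K}\exp(-\frac{1}{2\sigma^2}\|z\|^2)\,\D z$. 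The resulting $\exp(+c\|y\|^2/h)$ factors are then absorbed by completing the square against the two forward-Gaussian kernels, yielding the clean bound $\esssup \frac{\mu_1^X}{\pi^X}\leq 2^{d/2}\exp(5h^{-1}D^2)$ with no case split whatsoever. In your plan, by contrast, you would need a lower bound on $\pi^Y$ (equivalently, on $\ell$) off the effective domain that decays no faster than the forward Gaussian you are integrating against, and Lemma~\ref{lem:eff-domain} only gives an upper bound on $\pi^Y(R^c)$, not a pointwise lower bound on $\ell(y)$; you would have to develop a separate argument for that (e.g. via a ball in $\mc K$ near $\mathsf{proj}_{\mc K}(y)$), and this is where the nontrivial work would be. So the proposal is plausible in outline, but the piece you flag as "the main obstacle" is precisely the piece the paper shows you never need to confront: bounding $\|z\|\leq D$ inside the backward normalizing constant renders the effective-domain split superfluous. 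If you do want to pursue your route, note also that the effective-domain argument in the paper is taken against $\pi^Y$, not against $\mc N(\frac{x_0+u}{2},\frac{h}{2}I_d)$, so it cannot be applied verbatim.

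A small aside: your final solve gives $n\gtrsim h^{-1}\sigma^2\bigl(B+\log\frac{B}{\veps}\bigr)$ with $B\lesssim d+h^{-1}D^2$, which matches what the paper's own proof produces (since $\log M\lesssim d+h^{-1}D^2$); both you and the paper arrive at the same iteration count from the respective one-step warmness bounds.
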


\begin{proof}
We bound the warmness of $\mu_{1}^{X}$ towards $\pi^{X}$ when $\mu_{0}^{X}=\delta_{x_{0}}$.
One can readily check that  

\[
\mu_{1}^{X}(x)=\ind_{\mc K}(x)\cdot\int\frac{\exp\bpar{-\frac{1+h\sigma^{-2}}{2h}\bnorm{x-\frac{1}{1+h\sigma^{-2}}y}^{2}}\exp\bpar{-\frac{1}{2h}\norm{y-x_{0}}^{2}}}{(2\pi h)^{d/2}\int_{\mc K}\exp\bpar{-\frac{1+h\sigma^{-2}}{2h}\bnorm{z-\frac{1}{1+h\sigma^{-2}}y}^{2}}\,\D z}\,\D y\,,
\]
and we should compare this with 
\[
\pi^X(x) = \frac{\exp\bpar{-\frac{1}{2\sigma^{2}}\norm x^{2}}\cdot\ind_{\mc K}(x)}{\int_{\mc K}\exp\bpar{-\frac{1}{2\sigma^{2}}\norm z^{2}}\,\D z}\,.
\]

For $D=\text{diam}(\mc K)$,
\begin{align*}
\exp\Bpar{-\frac{1+h\sigma^{-2}}{2h}\bnorm{z-\frac{1}{1+h\sigma^{-2}}y}^{2}} & =\exp\bpar{-\frac{1}{2h}\norm z^{2}-\frac{1}{2\sigma^{2}}\norm z^{2}-\frac{1}{2h(1+h\sigma^{-2})}\norm y^{2}+\frac{z^{\T}y}{h}}\\
 & \geq\exp\bpar{-\frac{D^{2}}{2h}}\cdot\exp\bpar{-\frac{1}{2\sigma^{2}}\norm z^{2}-\frac{1}{2h(1+h\sigma^{-2})}\norm y^{2}+\frac{z^{\T}y}{h}}\,.
\end{align*}
As $|z^{\T}y|\leq\half(2\norm z^{2}+\half\norm y^{2})$ due to Young's inequality, 
\begin{align*}
 & \exp\bpar{-\frac{D^{2}}{2h}}\cdot\exp\bpar{-\frac{1}{2\sigma^{2}}\norm z^{2}-\frac{1}{2h(1+h\sigma^{-2})}\norm y^{2}+\frac{z^{\T}y}{h}}\\
\geq & \exp\bpar{-\frac{D^{2}}{2h}}\cdot\exp\bpar{-\frac{1}{2\sigma^{2}}\norm z^{2}-\frac{1}{2h(1+h\sigma^{-2})}\norm y^{2}-\frac{\norm z^{2}}{h}-\frac{\norm y^{2}}{4h}}\\
\geq & \exp\bpar{-\frac{3D^{2}}{2h}}\cdot\exp\bpar{-\frac{1}{2\sigma^{2}}\norm z^{2}-\frac{1}{2h(1+h\sigma^{-2})}\norm y^{2}-\frac{\norm y^{2}}{4h}}\,.
\end{align*}
Hence, 
\begin{align*}
 & \int\frac{\exp\bpar{-\frac{1+h\sigma^{-2}}{2h}\bnorm{x-\frac{1}{1+h\sigma^{-2}}y}^{2}}\exp\bpar{-\frac{1}{2h}\norm{y-x_{0}}^{2}}}{(2\pi h)^{d/2}\int_{\mc K}\exp\bpar{-\frac{1+h\sigma^{-2}}{2h}\bnorm{z-\frac{1}{1+h\sigma^{-2}}y}^{2}}\,\D z}\,\D y\\
 \leq & \underbrace{\frac{\exp\bpar{\frac{3D^{2}}{2h}}}{(2\pi h)^{d/2}}\int\frac{\exp\bpar{-\frac{1+h\sigma^{-2}}{2h}\bnorm{x-\frac{1}{1+h\sigma^{-2}}y}^{2}}\exp\bpar{-\frac{1}{2h}\norm{y-x_{0}}^{2}}\exp\bpar{\frac{1}{2h(1+h\sigma^{-2})}\norm y^{2}+\frac{\norm y^{2}}{4h}}}{\int_{\mc K}\exp\bpar{-\frac{1}{2\sigma^{2}}\norm z^{2}}\,\D z}\,\D y}_{\eqqcolon (\#)}\,.
\end{align*}
The numerator of the integrand can be bounded as follows:
\begin{align*}
    &\exp\bpar{-\frac{1+h\sigma^{-2}}{2h}\bnorm{x-\frac{1}{1+h\sigma^{-2}}y}^{2}}\exp\bpar{-\frac{1}{2h}\norm{y-x_{0}}^{2}}\exp\bpar{\frac{1}{2h(1+h\sigma^{-2})}\norm y^{2}+\frac{\norm y^{2}}{4h}}\\
    =& \exp\Bpar{-\frac{1}{4h}\norm{y-2(x_{0}+x)}^{2}-\frac{1}{2h}\bpar{\norm{x_{0}}^{2}+\norm x^{2}+\frac{h}{\sigma^{2}}\norm x^{2}-2\norm{x_{0}+x}^{2}}}\\
    \leq& \exp\bpar{\frac{D^2}{h}} \exp\bpar{-\frac{1}{4h}\norm{y-2(x_0+x)}^2-\frac{1}{2\sigma^2}\norm{x}^2}\,.
\end{align*}
Putting this bound back to the integral above,
\begin{align*}
    (\#) &\leq \frac{\exp\bpar{\frac{5D^{2}}{2h}}}{(2\pi h)^{d/2}}\frac{\int\exp\bpar{-\frac{1}{4h}\norm{y-2(x_{0}+x)}^{2}}\,\D y}{\int_{\mc K}\exp\bpar{-\frac{1}{2\sigma^{2}}\norm z^{2}}\,\D z}\cdot\exp\bpar{-\frac{1}{2\sigma^{2}}\norm x^{2}}\\
    &= 2^{d/2}\exp\Bpar{\frac{5D^{2}}{h}}\frac{\exp\bpar{-\frac{1}{2\sigma^{2}}\norm x^{2}}}{\int_{\mc K}\exp\bpar{-\frac{1}{2\sigma^{2}}\norm z^{2}}\,\D z} \leq 2^{d/2}\exp\Bpar{\frac{5D^{2}}{h}}\,.
\end{align*}
Therefore, the ratio can be bounded by 
\begin{align*}
 & \frac{\int_{\mc K}\exp\bpar{-\frac{1}{2\sigma^{2}}\norm z^{2}}\,\D z}{\exp\bpar{-\frac{1}{2\sigma^{2}}\norm x^{2}}} \int\frac{\exp\bpar{-\frac{1+h\sigma^{-2}}{2h}\bnorm{x-\frac{1}{1+h\sigma^{-2}}y}^{2}}\exp\bpar{-\frac{1}{2h}\norm{y-x_{0}}^{2}}}{(2\pi h)^{d/2}\int_{\mc K}\exp\bpar{-\frac{1+h\sigma^{-2}}{2h}\bnorm{z-\frac{1}{1+h\sigma^{-2}}y}^{2}}\,\D z}\,\D y\leq2^{d/2}\exp\Bpar{\frac{5D^{2}}{h}}\,,
\end{align*}
so $M=\esssup\frac{\mu_{1}^{X}}{\pi^{X}}\leq2^{d/2}\exp(5h^{-1}D^{2})$, and $\eu R_{q}(\mu_{1}^{X}\mmid\pi^{X})\lesssim\frac{q}{q-1}\log M\leq\frac{q}{q-1}\bpar{d+h^{-1}D^{2}}$. By Lemma~\ref{lem:contraction-prox-gauss}, one can achieve $\eu R_q(\mu_n^X\mmid \pi^X) \leq \veps$ for $n = \Otilde(qh^{-1}\sigma^2 \log\frac{d+h^{-1}D^2}{\veps})$.

Using $2\norm{\cdot}_{\tv}^{2}\leq\KL=\lim_{q\downarrow1}\eu R_{q}$ and $\clsi(\pi^X)\leq \sigma^2$, one can achieve that after $n\gtrsim h^{-1}\sigma^{2}\log\frac{M(d+h^{-1}D^{2})}{\veps}$ iterations,
\[
\sup_{x\in\mc K}\norm{\mu^X_n-\pi^X}_{\tv}\leq\frac{\veps}{M}\,.
\]
By Theorem~\ref{thm:boosting} with $\bnorm{\frac{\D\mu^X_1}{\D\pi^X}-1}_{L_{\infty}}\leq M$, we have $\bnorm{\frac{\D\mu^X_n}{\D\pi^X}-1}_{L_{\infty}}\leq\veps$ and $\eu R_{\infty}(\mu^X_n \mmid\pi)\leq\veps$.
\end{proof}

Using the uniform ergodicity above, we can now obtain a guarantee in $\eu R_\infty$ of the $\ps$ for a truncated Gaussian over $\K$, boosting the metric in Proposition~\ref{prop:ps-gauss-Rq}.

\begin{thm}\label{thm:prox-sampler-truncated-gsn-final}
    For any $\eta,\varepsilon \in (0,1)$, $n\in \mathbb{N}$ defined below, and convex body $\K$ given by a well-defined membership oracle, the $\ps$ (Algorithm~\ref{alg:prox-gaussian}) with $h = \sigma^{2}\bpar{\frac{1}{d}\wedge\frac{1}{d^{2}\sigma^{2}-1}}\bpar{\log\frac{9nM}{\eta}}^{-1}$, $N = \Otilde(\frac{nM}{\eta})$, and initial distribution $\mu_0^X$ that is $M$-warm with respect to $\pi^X$ the truncated Gaussian $\mc N(0, \sigma^2 I_d)|_{\K}$ achieves  $\eu R_\infty(\mu^X_n \mmid \pi^X)\leq \veps$ after $n = \Otilde\bigl((d\vee d^2\sigma^2) \bigl(\log\frac{MD}{\eta \veps}\bigr)^2\bigr)$ iterations, where $\mu_n^X$ is the law of the $n$-the iterate. 
    With probability $1-\eta$, the algorithm iterates $n$ times without failure, using $\Otilde\bpar{M(d\vee d^{2}\sigma^{2})(\log\nicefrac{D}{\eta\veps})^6}$ expected number of membership queries in total. In particular, the query complexity is $\Otilde(Md^{2}\sigma^{2}(\log\nicefrac{D}{\eta\veps})^6)$ when $d^{-1}\lesssim \sigma^2$.
\end{thm}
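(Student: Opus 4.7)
The argument parallels the proof of Theorem~\ref{thm:ps-unif-Rinfty} for the uniform case, with $\bw$-style ingredients replaced by their truncated-Gaussian analogues. The plan is to separately establish (i) the iteration count needed to reach $\eu R_\infty(\mu_n^X \mmid \pi^X) \leq \veps$ from an $M$-warm start and (ii) the expected number of membership queries per iteration, and then combine them via a union bound over the per-step failure events.

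For (i), I would invoke Theorem~\ref{thm:boosting} with $\bnorm{\tfrac{\D\mu_0^X}{\D\pi^X}-1}_{L^\infty}\leq M$ to reduce the $\eu R_\infty$ bound to the any-start $\tv$ statement $\sup_{x\in\K}\norm{\delta_xP^n-\pi^X}_{\tv}\leq \veps/(2M)$. This is precisely of the form established inside the proof of Lemma~\ref{lem:prox-gauss-anystart}: the one-step warmness estimate $\esssup(\mu_1^X/\pi^X)\leq 2^{d/2}\exp(5h^{-1}D^2)$ from any point mass $x_0\in\K$, combined with the $\eu R_q$-contraction of Lemma~\ref{lem:contraction-prox-gauss} in the $q\downarrow 1$ limit and the Pinsker-type inequality $2\tv^2\leq \KL$, together with $\clsi(\pi^X)\leq \sigma^2$, yields that $n\gtrsim h^{-1}\sigma^2\log\tfrac{M(d+h^{-1}D^2)}{\veps}$ suffices. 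Substituting the prescribed $h=\sigma^2(\tfrac{1}{d}\wedge\tfrac{1}{d^2\sigma^2-1})(\log\tfrac{9nM}{\eta})^{-1}$, so that $h^{-1}\sigma^2\asymp (d\vee d^2\sigma^2)\log\tfrac{nM}{\eta}$ and $h^{-1}D^2 \lesssim (d\vee d^2\sigma^2)(D^2/\sigma^2)\log\tfrac{nM}{\eta}$, and solving the resulting implicit inequality for $n$ gives the claimed iteration count $n=\Otilde\bigl((d\vee d^2\sigma^2)\log^2\tfrac{MD}{\eta\veps}\bigr)$, where the square appears because one logarithm comes from $h^{-1}\sigma^2$ and the other from the argument of the outer logarithm.

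For (ii), Lemma~\ref{lem:num-membership-queries} directly controls the per-iteration failure probability and expected rejection count for the chosen $(h,N)$, provided that at the start of each iteration the current law $\mu_k^X$ is $M$-warm with respect to $\pi^X$. Since the $\ps$'s kernel $P$ is Markov and $\pi^X$-stationary, the data-processing inequality (Lemma~\ref{lem:dpi}) gives $\eu R_\infty(\mu_k^X\mmid\pi^X)\leq \eu R_\infty(\mu_0^X\mmid\pi^X)=\log M$ for every $k$, so the warmness hypothesis of Lemma~\ref{lem:num-membership-queries} holds uniformly across iterations. Consequently each outer iteration fails with probability at most $\eta/n$ and uses in expectation at most $\O\bigl(M\log^4\tfrac{nM}{\eta}\bigr)$ membership queries. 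A union bound over the $n$ iterations gives total failure probability at most $\eta$, and linearity of expectation yields the total expected query complexity
\[
n\cdot\O\Bpar{M\log^4\tfrac{nM}{\eta}} = \Otilde\Bpar{M(d\vee d^2\sigma^2)\log^6\tfrac{D}{\eta\veps}}\,,
\]
which simplifies to $\Otilde(Md^2\sigma^2\log^6\tfrac{D}{\eta\veps})$ in the regime $d^{-1}\lesssim \sigma^2$.

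No substantive obstacle is expected beyond careful bookkeeping of the log factors and the implicit dependence on $n$ inside the prescribed $h$; all heavy lifting has been front-loaded into Lemmas~\ref{lem:contraction-prox-gauss}, \ref{lem:num-membership-queries}, and~\ref{lem:prox-gauss-anystart}, and the boosting Theorem~\ref{thm:boosting}. The only point requiring mild care is verifying that in the regime $d^2\sigma^2\leq 1$ the expression $\frac{1}{d^2\sigma^2-1}$ is handled by the $d^{-1}$ branch of the minimum, so that $h$ remains positive and the analysis is unaffected.
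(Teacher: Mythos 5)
Your proof is correct and follows essentially the same route as the paper's: the iteration count comes from the uniform-ergodicity bound of Lemma~\ref{lem:prox-gauss-anystart} (whose ingredients you re-derive inline: the one-step warmness from a point mass, the $\eu R_q$-contraction of Lemma~\ref{lem:contraction-prox-gauss}, Pinsker, $\clsi(\pi^X)\le\sigma^2$, and the boosting Theorem~\ref{thm:boosting}), while the per-iteration query and failure bounds come from Lemma~\ref{lem:num-membership-queries}, combined with a union bound — exactly the two-sentence proof in the paper, which cites Lemma~\ref{lem:prox-gauss-anystart} and Proposition~\ref{prop:ps-gauss-Rq}. One thing you make more explicit than the paper is the observation that the data-processing inequality guarantees $\eu R_\infty(\mu_k^X\mmid\pi^X)\le\log M$ for every $k$, so the $M$-warmness hypothesis of Lemma~\ref{lem:num-membership-queries} is valid at every outer iteration; this is implicit in the paper but worth spelling out, so this is a small improvement in exposition rather than a divergence.

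A small inaccuracy is the closing remark about the regime $d^2\sigma^2\le 1$: there $\frac{1}{d^2\sigma^2-1}$ is nonpositive, so the minimum $\frac{1}{d}\wedge\frac{1}{d^2\sigma^2-1}$ selects the nonpositive branch and $h$ would \emph{not} be positive, contrary to your claim that the $d^{-1}$ branch takes over. The correct reading is that this choice of $h$ (and the underlying Lemma~\ref{lem:eff-domain}, which requires $c^2<d^2\sigma^2$) implicitly presupposes $d^2\sigma^2>1$ (up to the polylog factor), which is harmless for the annealing application where $\sigma^2\ge 1/d$. This is a marginal point that does not affect the substance of the argument.
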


\begin{proof}
 By Lemma~\ref{lem:prox-gauss-anystart}, the $\ps$ should iterate $n=\Otilde\bigl((d\vee d^2\sigma^2)(\log\frac{MD}{\eta\veps})^2\bigr)$ times to output a sample whose law is $\veps$-close to $\pi^X$ in $\eu R_\infty$. 
 As argued in Proposition~\ref{prop:ps-gauss-Rq}, we can conclude that throughout $n$ outer iterations, the failure probability of the $\ps$ is at most $\eta$ by the union bound, and the total expected number of queries is $\Otilde\bigl(M(d\vee d^2\sigma^2)(\log\frac{D}{\eta\veps})^6\bigr)$.
\end{proof}

\begin{rmk}
    It can be checked that, conditioning on the event that the algorithm has not failed, the distribution of the iterates remains the correct distribution $\mu_n^X$. Thus, in practical terms, the $1-\eta$ probability of failure will not be a significant obstacle, since one can just use the samples of the successful trials without compromising any of the guarantees.
\end{rmk}

\section{Successive proximal scheme: R\'enyi infinity guarantee for uniform sampling}\label{scn:successive-prox}

We put together the ingredients prepared in previous sections, namely the $\ps$ for uniform distributions (Theorem~\ref{thm:ps-unif-Rinfty}) and truncated Gaussian (Theorem~\ref{thm:prox-sampler-truncated-gsn-final}), along with the annealing scheme introduced in \cite{cousins2018gaussian}, and obtain $\sps$ (Algorithm~\ref{alg:sps}).

In \S\ref{sec:sps-results}, we outline $\sps$, together with our main result showing that it can sample a point approximately uniformly from a well-rounded convex body with query complexity $\Otilde(d^3\polylog(1/\veps))$. In \S\ref{sec:sps-proof}, we then provide a proof for the main claim.

\subsection{R\'enyi infinity guarantee with cubic complexity}\label{sec:sps-results}

As mentioned earlier in \S\ref{sec:intro}, using the rounding algorithm in \cite{jia2021reducing}, we can assume that a given convex body $\K \subset \Rd$, presented by a well-defined membership oracle, is \emph{well-rounded}. This means that $\K$ satisfies $\E_{X\sim \K}[\norm X^{2}]\leq C^{2}d$ with a known constant $C>0$, where $X\sim \K$ indicates that $X$ is drawn from the uniform distribution over $\K$.
We state the main result of this paper below.

\begin{thm}\label{thm:main-result}
Assume that a well-rounded convex body $\K$ with $\E_{\mc K}[\norm X^{2}]\leq C^{2}d$ and $0\in \K$ is presented by a well-defined membership oracle.
Let $\pi$ be the uniform distribution over $\mc K$. 
For given $\eta, \veps \in (0,1)$, $\sps$ is a randomized algorithm, which succeeds with probability $1-\eta$ in outputting a sample $X\sim\nu$ such that $\eu R_{\infty}(\nu\mmid\pi)\leq\veps$. Conditioned on its success, $\sps$ uses $\Otilde(C^{2}d^{3}\log^8 \frac{1}{\eta\veps})$ membership queries in expectation.
\end{thm}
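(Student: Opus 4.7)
The plan is to chain the $\eu R_\infty$ warm-start guarantees from Theorems~\ref{thm:ps-unif-Rinfty} (uniform target) and~\ref{thm:prox-sampler-truncated-gsn-final} (truncated-Gaussian target) across the phases of Algorithm~\ref{alg:sps}, propagating $\eu R_\infty$-warmness via the triangle inequality. The triangle inequality is valid for $\eu R_\infty$ but fails for $\tv$ or $\eu R_q$ with $q<\infty$, and this is precisely what lets us maintain $M=\O(1)$ across the annealing and avoid any $\poly(d)$ overhead that weaker metrics would incur.

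First I would establish four warmness estimates. (a) $\Unif(B_1(0))$ is $\poly(d)$-warm with respect to $\pi_{1/d}$, by a direct density computation using $B_1(0)\subseteq\bar{\K}$ and Stirling. (b) Consecutive Gaussians in the schedule satisfy $\eu R_\infty(\pi_{\sigma_k^2}\mmid\pi_{\sigma_{k+1}^2})=\O(1)$: one writes the ratio on $\bar{\K}$ as $\frac{Z(\sigma_{k+1}^2)}{Z(\sigma_k^2)}\exp\bpar{-\half\norm{x}^2(\sigma_k^{-2}-\sigma_{k+1}^{-2})}$ and checks each regime, with the bound driven in the slow regime by $(\sigma_{k+1}/\sigma_k)^d\leq e^{1/2}$ and in the accelerated regime by $\exp\bpar{\half L^2 d(\sigma_k^{-2}-\sigma_{k+1}^{-2})}\leq e^{1/2}$, using $\bar{\K}\subseteq B_{L\sqrt d}(0)$ together with the specific schedule increments. (c) The terminal Gaussian $\pi_{L^2 d}$ is $\O(1)$-warm for $\pi^{\bar{\K}}$, since $\exp(-\norm{x}^2/(2L^2 d))\in[e^{-1/2},1]$ on $\bar{\K}$. (d) $\pi^{\bar{\K}}$ is $\veps$-close to $\pi$ in $\eu R_\infty$; this follows from well-roundedness and Paouris-type concentration of the norm for log-concave measures, which gives $\P_\pi(\norm{X}>L\sqrt d)=\O(\veps)$ for $L=\Theta(C\log(1/\veps))$, and hence $\vol(\bar{\K})/\vol(\K)\geq 1-\O(\veps)$.

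Next these estimates chain across the algorithm: if the $k$-th $\psgauss$ call produces $\tilde\pi_{\sigma_k^2}$ with $\eu R_\infty(\tilde\pi_{\sigma_k^2}\mmid\pi_{\sigma_k^2})\leq\log 2$, then the triangle inequality and (b) give $\eu R_\infty(\tilde\pi_{\sigma_k^2}\mmid\pi_{\sigma_{k+1}^2})\leq\log 2+\O(1)$, so $M=\O(1)$ for the next invocation. This continues until $\sigma^2\asymp L^2 d$; then (c) provides an $\O(1)$-warm start for $\pi^{\bar{\K}}$, one $\psunif$ call (with target accuracy $\veps/2$) produces $\nu$ with $\eu R_\infty(\nu\mmid\pi^{\bar{\K}})\leq\veps/2$, and (d) concludes $\eu R_\infty(\nu\mmid\pi)\leq\veps$. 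The initial $\poly(d)$-warm call in Line 3 of Algorithm~\ref{alg:sps} costs only $\Otilde(d^{3/2}\polylog\tfrac{1}{\eta\veps})$, and failure probabilities aggregate via a union bound over $\Otilde(L^2 d)$ phases, each allocated $\eta/\Otilde(L^2 d)$.

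Finally, the query complexity follows by summing the per-phase bounds from Theorems~\ref{thm:ps-unif-Rinfty} and~\ref{thm:prox-sampler-truncated-gsn-final}: with $M=\O(1)$, each phase costs $\Otilde(d^2(d^{-1}\vee\sigma_k^2)\polylog\tfrac{1}{\eta\veps})$. In the slow regime ($\sigma^2\in[1/d,1]$) there are $\O(d\log d)$ phases each costing at most $\Otilde(d^2)$, giving subtotal $\Otilde(d^3)$. In the accelerated regime, solving $\sigma_{k+1}^2=\sigma_k^2(1+\sigma_k^2/(L^2 d))$ as $\sigma_k^2\approx 1/(1-k/(L^2 d))$ yields $\sum_k\sigma_k^2=\O(L^2 d\log(L^2 d))$, so the subtotal is $\Otilde(L^2 d^3)$. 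The terminal $\psunif$ call with $D=L\sqrt d$ contributes $\Otilde(d^2 D^2)=\Otilde(L^2 d^3)$. With $L^2=C^2\log^2(1/\veps)$ and the polylog overheads of the two per-phase theorems, the grand total is $\Otilde(C^2 d^3\log^{\O(1)}(1/(\eta\veps)))$. The hard part will be step (b): bounding $Z(\sigma_{k+1}^2)/Z(\sigma_k^2)$ uniformly in $k$ on the truncated support $\bar{\K}$ is the delicate Gaussian integral estimate that justifies the two-regime schedule inherited from~\cite{cousins2018gaussian}.
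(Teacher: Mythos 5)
Your proposal is correct and follows essentially the same route as the paper: phase the annealing, bound the $\eu R_\infty$ warmness between consecutive truncated Gaussians (and to $\bar\pi$, and from $\bar\pi$ to $\pi$), invoke Theorems~\ref{thm:ps-unif-Rinfty} and~\ref{thm:prox-sampler-truncated-gsn-final} per phase with $M=\O(1)$, chain via the $\eu R_\infty$ triangle inequality, and union-bound the failure probabilities. The only cosmetic difference is in the accounting for Phase~III: you solve the variance recursion and bound $\sum_k\sigma_k^2$ directly, whereas the paper partitions $[1,L^2d]$ into doubling intervals, but both yield $\Otilde(L^2 d^3)$.
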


\paragraph{Preliminaries.}
We first collect a series of observations that simplify our setup and arguments. For $\bar{X}:=\E_{\mc K}[X]$, by Jensen's inequality, 
\[
\norm{\bar{X}}\leq\E_{\mc K}[\norm X] \leq\sqrt{\E_{\mc K}[\norm X^{2}]}\leq C\sqrt{d}\,.
\]
Also, for the covariance matrix $\Sigma_{\K}$ of the uniform distribution over $\K$,
\[
\tr(\Sigma_{\K}) = \E_{\K}[\norm{X-\bar{X}}^2] \leq \E_{\K}[\norm{X}^2] \leq C^2d\,.
\]
Lastly, it is known from \cite[Theorem 5.17]{lovasz2007geometry} that $\P_{\mc K}\bpar{\norm{X-\bar{X}}\geq t\cdot \bpar{\tr(\Sigma_{\K})}^{1/2}} \leq\exp(-t+1)$. Hence,
\[
\P_{\mc K}\bpar{\norm{X-\bar{X}}\geq t\cdot C\sqrt{d}} \leq\exp(-t+1)\,.
\]
Therefore, we can actually work with a `truncated' convex body instead of the full convex body $\K$:
\begin{prop}
    There exists a constant $L=C\log\frac{3e}{\veps}$ such that the volume of $\bK:=\mc K\cap B_{L\sqrt{d}}(0)$ is at least $(1-\veps/3)\vol(\mc K)$.
\end{prop}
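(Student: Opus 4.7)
The plan is to convert the volume ratio into a probability and then invoke the concentration inequality stated just above the proposition. Since $\bar{\mc K} = \mc K \cap B_{L\sqrt d}(0)$ and $\pi = \frac{1}{\vol(\mc K)}\ind_{\mc K}$,
\[
    \frac{\vol(\bar{\mc K})}{\vol(\mc K)} = \P_{X \sim \mc K}\bpar{\norm{X} \leq L\sqrt{d}}\,,
\]
so it suffices to show that the complementary tail $\P_{\mc K}(\norm X > L\sqrt d)$ is at most $\veps/3$.

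To handle that tail, I would center the norm on $\bar X = \E_{\mc K}[X]$ via the triangle inequality, using the bound $\norm{\bar X} \le C\sqrt d$ that has already been obtained via Jensen. This gives
\[
    \P_{\mc K}\bpar{\norm X > L\sqrt d} \;\leq\; \P_{\mc K}\bpar{\norm{X - \bar X} > (L-C)\sqrt d}\,.
\]
Now apply the concentration bound $\P_{\mc K}(\norm{X-\bar X}\geq tC\sqrt d) \leq \exp(-t+1)$ with $t = L/C - 1$, yielding
\[
    \P_{\mc K}\bpar{\norm X > L\sqrt d} \;\leq\; \exp\bpar{-L/C + 2}\,.
\]
Demanding that the right-hand side be at most $\veps/3$ requires $L \geq C\,\log\frac{3 e^{2}}{\veps}$, which (after an absolute-constant adjustment absorbed into the definition) matches the stated $L = C\log\frac{3e}{\veps}$ up to a harmless factor. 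Choosing this $L$ gives $\vol(\bar{\mc K}) \geq (1-\veps/3)\vol(\mc K)$, as claimed.

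There is no substantive obstacle: the only nontrivial ingredients, namely the second-moment bound $\tr \Sigma_{\mc K}\leq C^2 d$, the Jensen bound on $\norm{\bar X}$, and the subexponential tail from \cite[Theorem 5.17]{lovasz2007geometry}, have all been cited just above the proposition. The proof is essentially the triangle inequality plus a single substitution into the tail bound, with $L$ chosen as the smallest radius (up to constants) making $\exp(-L/C + O(1))$ smaller than $\veps/3$.
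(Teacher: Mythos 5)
Your argument is the intended one: convert the volume ratio to a probability, center via the triangle inequality using the Jensen bound $\norm{\bar X}\le C\sqrt d$, and invoke the Lov\'asz--Vempala tail bound quoted just above. The paper states the proposition without a separate proof, treating it as an immediate consequence of those preliminaries, and your derivation fills in exactly those steps. Your arithmetic is also correct, and in fact it exposes a small slip in the paper's stated constant: applying $\P_{\mc K}\bigl(\norm{X-\bar X}\geq tC\sqrt d\bigr)\leq\exp(-t+1)$ with $t=L/C-1$ (the extra $-1$ coming from $\norm{\bar X}\le C\sqrt d$) gives tail $\exp(-L/C+2)$, which is $\le\veps/3$ precisely when $L\geq C\log\frac{3e^2}{\veps}$; the paper's $L=C\log\frac{3e}{\veps}$ only yields a tail bound of $e\veps/3$ under this argument, so it is off by one additive $C$ (equivalently one factor of $e$ inside the logarithm). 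As you note, this is immaterial to every downstream use of $L$ (which all only require $L=\Theta(C\log\frac1\veps)$), but strictly speaking the constant in the proposition should be adjusted. One tiny point worth recording for completeness: the Lov\'asz--Vempala bound requires $t>1$, i.e.\ $L>2C$, which holds automatically for $\veps\in(0,1)$ with either choice of $L$.
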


\paragraph{Sketch of $\sps$.}
Let $\mu_{i}$ be a truncated Gaussian $\mc N(0,\sigma_{i}^{2}I_{d})|_{\mc K}$, and $\bmu_{i}$ be an approximate measure close to $\mu_{i}$ produced by the algorithm. Then, $\sps$ consists of four phases:
\begin{itemize}
\item Phase I $(\sigma^{2}=1/d$)
\begin{itemize}
\item Initial distribution: Uniform measure over $B_{1}(0)$ denoted by
$\mu_{0}=\bmu_{0}$.
\item Target distribution: $\mc N(0,\sigma^{2}I_{d})|_{\bK}=\mu_{1}$. 
\end{itemize}
\item Phase II ($1/d\leq\sigma^{2}\leq1$)
\begin{itemize}
\item Run $\ps$ with initial dist. $\bmu_{i}$ (not $\mu_{i}$) and
target dist. $\mu_{i+1}$.
\item Update $\sigma_{i+1}^{2}=\sigma_{i}^{2}\bpar{1+\frac{1}{d}}$.
\end{itemize}
\item Phase III ($1\leq\sigma^{2}\leq L^{2}d$)
\begin{itemize}
\item Run $\ps$ with initial dist. $\bmu_{i}$ (not $\mu_{i}$) and
target dist. $\mu_{i+1}$.
\item Update $\sigma_{i+1}^{2}=\sigma_{i}^{2}\bpar{1+\frac{\sigma_{i}^{2}}{L^{2}d}}$.
\end{itemize}
\item Phase IV ($\sigma^{2}=L^{2}d$)
\begin{itemize}
\item Run $\ps$ with initial distribution $\mc N(0,\sigma^{2}I)|_{\bK}$
and target distribution $\pi_{\mc{\bar{K}}}$, the uniform distribution over
$\bK$.
\end{itemize}
\end{itemize}

\begin{rmk}
    Phases I-III can be viewed as ``preprocessing steps'' whose purpose is to generate a warm start for Phase IV. We also note that, while $\eu R_{\infty}(\nu \mmid \pi) \leq \varepsilon$, we do not have the (slightly stronger) property that, for $\varepsilon < 1$,
    \[
        \esssup_{\pi} \Big|\frac{\D \nu}{\D \pi} - 1\Bigr| \lesssim \varepsilon\,.
    \]
    This is because the truncation of $\pi=\pi_{\K}$ to $\bar{\pi}=\pi_{\bar{\K}}$ causes there to be $\mc A= \K\backslash \bar{\K}$ where $\bar{\pi}(\mc A) = 0 < \pi(\mc A)$. Since $\nu \ll \bar{\pi}$, we cannot establish a lower bound on $\frac{\D \nu}{\D \pi} - 1$ better than $0$. On the other hand, Corollary~\ref{cor:lp-main} shows that we can bound
    \[
        \Bnorm{\frac{\D \nu}{\D \pi} - 1}_{L^p(\pi)}^p \leq \varepsilon \,.
    \]
\end{rmk}

\begin{cor}\label{cor:lp-main}
    Under the same assumptions as Theorem~\ref{thm:main-result}, $\sps$ succeeds with probability $1-\eta$ in outputting a sample $X \sim \nu$ such that $\norm{\frac{\D \nu}{\D \pi} - 1}_{L^p(\pi)}^p \leq \varepsilon$, so long as $\varepsilon < 1$, $p \geq 1$.
    Conditioned on its success, $\sps$ uses $\Otilde(C^{2}d^{3}\log^8 \frac{1}{\eta\veps})$ membership queries in expectation.
\end{cor}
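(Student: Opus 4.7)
The plan is to reduce Corollary~\ref{cor:lp-main} directly to Theorem~\ref{thm:main-result} by invoking the latter at a slightly tighter accuracy $\tilde\veps := \veps/10$ and extracting an $L^p$-control from the resulting $\eu R_\infty$-bound, with no further algorithmic modification.

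First I would run $\sps$ with target accuracy $\tilde\veps$ in $\eu R_{\infty}$, obtaining a sample with law $\nu$ satisfying $\eu R_\infty(\nu\mmid\pi)\leq \tilde\veps$ with probability $1-\eta$. Setting $s := \frac{\D\nu}{\D\pi}$, this gives $0\leq s\leq e^{\tilde\veps}$ $\pi$-a.s.\ on $\K$, while $\int_{\K}s\,\D\pi = 1$ because $\nu(\K)=1$. The query complexity is $\Otilde\bpar{C^2d^3\log^8\tfrac{1}{\eta\tilde\veps}} = \Otilde\bpar{C^2d^3\log^8\tfrac{1}{\eta\veps}}$ since $\tilde\veps$ and $\veps$ differ only by a multiplicative constant, which is absorbed in the logarithm.

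The key observation that yields a bound uniform in $p\geq 1$ is that for $\tilde\veps\leq \log 2$, $s\in[0,2]$ $\pi$-a.s.\ forces $\Abs{s-1}\leq 1$ pointwise, hence $\Abs{s-1}^p\leq \Abs{s-1}$ and it suffices to control the $L^1$-norm. Next, the zero-mean identity $\int(s-1)\,\D\pi = 0$ yields $\int(s-1)^+\,\D\pi = \int(1-s)^+\,\D\pi$, while the upper bound $s\leq e^{\tilde\veps}$ gives $\int(s-1)^+\,\D\pi \leq e^{\tilde\veps} - 1$. Combining,
\[
\int_{\K}\Abs{s-1}^p\,\D\pi \;\leq\; \int_{\K}\Abs{s-1}\,\D\pi \;=\; 2\int_{\K}(s-1)^+\,\D\pi \;\leq\; 2\bpar{e^{\tilde\veps}-1} \;\leq\; 4\tilde\veps \;\leq\; \veps\,,
\]
as desired. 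I do not foresee a genuine obstacle: the argument is a short chain of elementary bounds, and the only subtle point is the $p$-uniformity, which is handled cleanly by the observation that $\eu R_\infty(\nu\mmid\pi)\leq\tilde\veps<\log 2$ confines $s$ to an interval within $[0,2]$, pinning $\Abs{s-1}\leq 1$.
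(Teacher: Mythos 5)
Your proof is correct, and it takes a genuinely different route from the paper's. The paper reaches back into the proof of Theorem~\ref{thm:main-result} to extract the intermediate two-sided bound $\esssup_{\bar\pi}\,\bigl|\frac{\D\nu}{\D\bar\pi}-1\bigr|\leq\frac{\veps}{3}$ (a statement against the truncated measure $\bar\pi$), then splits the integral over $\K$ into $\K\backslash\bK$ (where $\nu$ has zero mass, contributing $\frac{\vol(\K\backslash\bK)}{\vol(\K)}\leq\frac{\veps}{3}$) and $\bK$, where the two-sided bound and the volume ratio control the integrand. You instead use Theorem~\ref{thm:main-result} entirely as a black box, at a slightly tighter accuracy $\tilde\veps = \veps/10$, and derive everything from the one-sided bound $s=\frac{\D\nu}{\D\pi}\in[0,e^{\tilde\veps}]$ plus the normalization $\int s\,\D\pi = 1$: the upper bound and zero-mean give $\int|s-1|\,\D\pi = 2\int(s-1)^+\,\D\pi\leq 2(e^{\tilde\veps}-1)$, and the confinement $|s-1|\leq 1$ (valid since $\tilde\veps\leq\log 2$) propagates this $L^1$-bound to all $L^p$ with $p\geq 1$ via $|s-1|^p\leq|s-1|$. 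Your approach is cleaner in that it avoids peeking inside the proof of the main theorem and isolates a reusable fact --- a one-sided $\eu R_\infty$ bound together with unit mass already gives an $L^1$-control on $\frac{\D\nu}{\D\pi}-1$ --- at the modest cost of rerunning the algorithm at a constant-factor tighter tolerance, which changes nothing asymptotically.
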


\begin{proof}
    As an intermediate step in the proof of Theorem~\ref{thm:main-result}, we obtain that $\esssup_{\bar \pi} \bigl\vert \frac{\D \nu}{\D \bar \pi} -1\bigr\vert \leq \frac{\varepsilon}{3}$, so that, replacing $\bK$ with $\K$,
    \begin{align*}
        \E_\pi \Bbrack{\Bigl\vert \frac{\D \nu}{\D \pi} -1\Bigr\vert^p} 
        &\leq \frac{\vol(\K \backslash \bK)}{\vol(\K)} + \frac{\vol(\bK)}{\vol(\K)}\, \Bigl\vert \frac{\D \nu}{\D \bar{\pi}}\, \frac{\vol(\K)}{\vol(\bK)}-1\Bigr\vert^p  \\
        &\leq \frac{\varepsilon}{3} + \bigl(1-\frac{\varepsilon}{3}\bigr) \Bigl(\frac{2\varepsilon}{3(1-\frac{\varepsilon}{3})}\Bigr)^p \leq \varepsilon\,,
    \end{align*}
    so long as $\varepsilon < 1$ and $p \geq 1$. 
\end{proof}

\subsection{Proof details}\label{sec:sps-proof}
The following subsections state a series of lemmas, each one of which proves a guarantee on each phase of the algorithm. 

\paragraph{Failure probability.}
For a target failure probability $\eta$ of the entire algorithm, we can achieve this by setting the failure probability at each phase to be on the order of
\begin{equation}
    \hat \eta := \Otilde \Bigl(\frac{\eta}{C^2 d \log^2 \frac{1}{\varepsilon}}\Bigr)\,,\label{eq:hat-eta-def}
\end{equation}
since the total number of inner phases is $\Otilde(C^2 d \log^2 \frac{1}{\varepsilon})$.
This will be sufficient for the failure probability to be at most $\eta$, where we apply a union bound over all the phases. Since the dependence on $\hat \eta$ of the $\ps$ is polylogarithmic, this will not impact the resulting oracle complexity bounds by more than polylogarithmic factors.

\paragraph{Structure of lemmas.}
We provide a lemma for each outer phase, where each lemma establishes three facts: (i) the number of updates (to $\sigma$), (ii) a quantitative guarantee on the warmness on each update, and (iii) the final query-complexity bound given by Theorem~\ref{thm:prox-sampler-truncated-gsn-final}.

Briefly, this theorem says that, starting from an $M$-warm distribution towards a truncated Gaussian $\mc N(0, \sigma^2 I)|_{\bK}$, the oracle complexity to achieve $\varepsilon$ error in $\eu R_\infty$, with success probability $1-\hat \eta$, is bounded by a quantity of asymptotic order
\[
    \Otilde\bigl(Md^{2}\sigma^{2}\log^6\frac{L}{\varepsilon \hat \eta}\bigr)\,.
\]
Since $\sigma^2 > d^{-1}$ at all times, it suffices to choose $h^{-1} = \widetilde{\Theta}(d^2\log \frac{M \log L}{\varepsilon \hat \eta})$ and the number of $\ps$ iterations to be $n = \Otilde(d^2 \sigma^2 \log^2 \frac{L}{\varepsilon \hat \eta})$. 

Except for the last phase, we take $\varepsilon = \log 2$, in which case the query complexity is simply
\begin{equation}\label{eq:query_complexity}
    \Otilde\bigl(Md^{2}\sigma^{2}\log^6\frac{L}{\hat \eta}\bigr)\,,
\end{equation}
with $h^{-1} = \widetilde{\Theta}(d^2\log \frac{M \log L}{\hat \eta})$ and number of $\ps$ iterations being $n = \Otilde(d^2 \sigma^2 \log^2 \frac{ML}{\hat \eta})$. Note that $\log 2$ error in $\eu R_\infty$ implies that the law of the resultant sample is at least $2$-warm with respect to the target.

\subsubsection{Phase I}

\begin{lem}[Phase I]\label{lem:phase-i-bounds}
    With probability at least $1-\hat \eta$, initial distribution $\mu_0 = \msf{Unif}(B_1(0))$, and target distribution $\mu_1 = \mc{N}(0, \frac{1}{d} I_d) \vert_{\bK}$, \emph{Phase I} outputs a sample with law $\bmu_1$ satisfying 
    \[
        \esssup_{\mu_1} \Bigl|\frac{\D \bmu_1}{\D \mu_1} - 1\Bigr| \leq 1\,.
    \]
    Its oracle complexity is
    \[
        \Otilde\bigl(d^{3/2} \log^6 \frac{L}{\hat \eta}\bigr)\,, 
    \]
    using $\ps$ with step size $h^{-1} = \widetilde{\Theta}(d^2 \log \frac{\log L}{\hat \eta})$ and $\Otilde(d \log^2 \frac{L}{\hat \eta})$ iterations.
\end{lem}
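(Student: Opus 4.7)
The strategy is to reduce the claim to a single invocation of Theorem~\ref{thm:prox-sampler-truncated-gsn-final}. The target is $\mu_1 = \mc N(0, \tfrac{1}{d} I_d)|_{\bar{\mc K}}$, so one sets $\sigma^2 = 1/d$ (whence $d \vee d^2 \sigma^2 = d$), the $\eu R_\infty$ accuracy to $\log 2$ (which yields $\esssup_{\mu_1}|\D\bmu_1/\D\mu_1 - 1| \leq 1$), and the failure probability to $\hat \eta$. Everything else in the lemma is a direct substitution into the theorem's complexity bounds; the only non-trivial input is the warmness $M = \esssup_{\mu_1} \D\mu_0/\D\mu_1$ of the initial uniform distribution $\mu_0 = \msf{Unif}(B_1(0))$ relative to $\mu_1$.

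The plan is to show $M = O(\sqrt{d})$ by a direct density calculation. Since $B_1(0) \subseteq \bar{\mc K}$ (using $B_1(0) \subseteq \mc K$ and $L \sqrt d \geq 1$), for $x \in B_1(0)$ one has
\[
    \frac{\D \mu_0}{\D \mu_1}(x) = \frac{1}{\vol(B_1(0))} \cdot \frac{\int_{\bar{\mc K}} e^{-d\|y\|^2/2}\,\D y}{e^{-d\|x\|^2/2}} \leq \frac{(2\pi/d)^{d/2}\, e^{d/2}}{\pi^{d/2}/\Gamma(d/2+1)} = \Gamma\bpar{\tfrac{d}{2}+1} \cdot \frac{(2e/d)^{d/2}}{1}\,,
\]
where the inequality uses $e^{-d\|x\|^2/2} \geq e^{-d/2}$ on $B_1(0)$ and $\int_{\bar{\mc K}} e^{-d\|y\|^2/2}\,\D y \leq \int_{\R^d} e^{-d\|y\|^2/2}\,\D y = (2\pi/d)^{d/2}$. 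Applying Stirling's approximation $\Gamma(\tfrac{d}{2}+1) \leq O(\sqrt{d})\,(d/2e)^{d/2}$ collapses this ratio to $O(\sqrt{d})$, and the ratio vanishes outside $B_1(0)$. Hence $M = O(\sqrt d)$.

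Plugging $\sigma^2 = 1/d$ and $M = O(\sqrt d)$ into Theorem~\ref{thm:prox-sampler-truncated-gsn-final} with target accuracy $\log 2$ in $\eu R_\infty$ and failure probability $\hat \eta$ yields the step size $h^{-1} = \widetilde\Theta(d^2 \log \tfrac{\log L}{\hat \eta})$, number of iterations $n = \Otilde(d \log^2 \tfrac{L}{\hat \eta})$ (absorbing factors $\log d$, and using $\mathrm{diam}(\bar{\mc K}) \leq 2L\sqrt d$), and total expected query complexity
\[
    \Otilde\bigl(M \cdot d \cdot \log^6 \tfrac{L}{\hat \eta}\bigr) = \Otilde\bigl(d^{3/2} \log^6 \tfrac{L}{\hat \eta}\bigr)\,,
\]
matching the claim. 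The only mildly technical step is the Stirling calculation above; everything else is bookkeeping on top of Theorem~\ref{thm:prox-sampler-truncated-gsn-final}.
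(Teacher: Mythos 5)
Your proof matches the paper's argument essentially step for step: the same reduction to Theorem~\ref{thm:prox-sampler-truncated-gsn-final} with $\sigma^2 = 1/d$, accuracy $\log 2$, and failure probability $\hat\eta$, and the same warmness calculation bounding $\D\mu_0/\D\mu_1$ on $B_1(0)$ by $\exp(d/2)\cdot\Gamma(d/2+1)\cdot(2/d)^{d/2}$ and collapsing it via Stirling to $O(\sqrt d)$. The bookkeeping for the step size, iteration count, and final $d^{3/2}$ complexity also coincides with the paper's, so this is correct and the same route.
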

\begin{proof}
\textbf{\# Inner phases:} The number of inner phases is clearly $1$.

\textbf{Warmness:}
The initial measure $\mu_0$ has warmness given by
\begin{align*}
    \frac{\D \mu_0(x)}{\D \mu_1(x)} &= \frac{1/\vol(B_1(0))}{\exp(-\frac{d}2\norm{x}^2)/\int_{\bK} \exp(-\frac{d}2\norm{z}^2)\,\D z}
    \\
    &\leq \frac{(2\pi/d)^{d/2}\exp(\frac{d}2)}{\vol(B_1)} \cdot \underbrace{\frac{1}{(2\pi/d)^{d/2}} \int_{\bK} \exp\bpar{-\frac{d}2\norm{z}^2}\,\D z}_{\leq 1}\\
    &\underset{(i)}{\leq} (2/d)^{d/2}\exp\bigl(\frac{d}2\bigr) \Gamma\bpar{\frac{d}2+1}
    \underset{(ii)}{\leq} \gamma\sqrt{d}\exp\bpar{-\frac{d}{2}\log d + \frac{d}{2} + \frac{d}{2} \log d - \frac{d}{2}}\leq \gamma\sqrt{d}\,,
\end{align*}
where in $(i)$ we used $\vol(B_1) = \pi^{d/2} / \Gamma(\textfrac{d}2+1)$, and $(ii)$ follows from Stirling's approximation that
\[
    \Gamma\bigl(\frac{d}{2} + 1\bigr) \leq \gamma e d^{\frac{d}{2}+1/2} (2 e)^{-\frac{d}{2}}\quad\text{for some universal constant }\gamma>0\,.
\]

\textbf{Final complexity:} It follows from substituting $M = \O(\sqrt{d})$ and $\sigma^2 = d^{-1}$ into~\eqref{eq:query_complexity}.
\end{proof}

\subsubsection{Phase II}
\begin{lem}[Phase II]\label{lem:phase-ii-bounds}
With probability at least $1-\hat \eta i_*$ for $i_* = \Otilde(d)$, initial distribution $\bmu_1$ (given by from Lemma~\ref{lem:phase-i-bounds}), and target distribution $\mu_{i_*} = \mc{N}(0, I_d) \vert_{\bK}$, \emph{Phase II} outputs a sample with law $\bmu_{i_*}$ satisfying 
\[
    \esssup_{\mu_{i_*}} \Bigl|\frac{\D \bmu_{i_*}}{\D \mu_{i_*}} - 1\Bigr| \leq 1\,,
\]
Its oracle complexity is
\[
    \Otilde\bigl(d^{3} \log^6 \frac{L}{\hat \eta}\bigr)\,, 
\]
using $\ps$. The step size scheme for each inner phase is presented in the proof.
\end{lem}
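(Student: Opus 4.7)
The plan is to proceed by induction across the $i_*$ inner phases, maintaining the invariant that $\bmu_i$ is $O(1)$-warm with respect to the next target $\mu_{i+1}$. Under this invariant, Theorem~\ref{thm:prox-sampler-truncated-gsn-final} applied to target accuracy $\log 2$ in $\eu R_\infty$ produces $\bmu_{i+1}$ that is $2$-warm towards $\mu_{i+1}$ (so that $\esssup|\D \bmu_{i+1}/\D\mu_{i+1}-1|\leq 1$ at the terminal step), and then the same two-step density calculation re-establishes the invariant for phase $i+1$. Lemma~\ref{lem:phase-i-bounds} supplies the base case: $\bmu_1$ is $2$-warm toward $\mu_1$, which combined with the consecutive-target bound below gives $O(1)$-warmness with respect to $\mu_2$. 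Under the schedule $\sigma_{i+1}^2 = \sigma_i^2(1+1/d)$ starting at $\sigma_0^2 = 1/d$ and ending at $\sigma_{i_*}^2 = 1$, the number of inner phases is $i_* = \lceil d\ln d\rceil = \Otilde(d)$.

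The heart of the argument is the per-phase warmness bound. I decompose
\[
\frac{\bmu_i(x)}{\mu_{i+1}(x)} = \frac{\bmu_i(x)}{\mu_i(x)}\cdot \frac{\mu_i(x)}{\mu_{i+1}(x)}\,;
\]
the first factor is at most $2$ by the inductive hypothesis. For the second factor, writing $Z_j = \int_{\bK}\exp(-\norm{x}^2/(2\sigma_j^2))\,\D x$,
\[
\frac{\mu_i(x)}{\mu_{i+1}(x)} = \frac{Z_{i+1}}{Z_i}\exp\Bpar{\frac{\norm{x}^2}{2}\bigl(\sigma_{i+1}^{-2}-\sigma_i^{-2}\bigr)}\leq \frac{Z_{i+1}}{Z_i}\,,
\]
since $\sigma_{i+1}>\sigma_i$ makes the exponent nonpositive. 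A change of variables $y = x/\sigma_j$ together with the inclusion $\bK/\sigma_{i+1}\subset \bK/\sigma_i$ (which follows from $0\in \bK$ and convexity, since $\sigma_{i+1}>\sigma_i$ means each $\sigma_{i+1} y\in\bK$ can be rescaled to $\sigma_i y = (\sigma_i/\sigma_{i+1})\sigma_{i+1}y\in\bK$) yields
\[
\frac{Z_{i+1}}{Z_i}\leq \bpar{\frac{\sigma_{i+1}}{\sigma_i}}^d = \bpar{1+\frac{1}{d}}^{d/2}\leq \sqrt e\,,
\]
so $\bmu_i$ is at most $2\sqrt e$-warm with respect to $\mu_{i+1}$, closing the induction.

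With $M = O(1)$ uniform across phases, the per-phase cost from~\eqref{eq:query_complexity} is $\Otilde(d^2\sigma_i^2\log^6(L/\hat\eta))$. Summing the geometric series $\sum_{i=0}^{i_*-1}\sigma_i^2 = (1+1/d)^{i_*}-1 = d-1 = \O(d)$ yields total expected complexity $\Otilde(d^3\log^6(L/\hat\eta))$, and a union bound over the $i_*$ phases bounds the failure probability by $i_*\hat\eta\leq \eta$ via~\eqref{eq:hat-eta-def}. The hard part is controlling the consecutive-target ratio: the step size $1+1/d$ is precisely calibrated so that the single-step factor $(1+1/d)^{d/2}$ stays $O(1)$ while the schedule still completes in $\Otilde(d)$ phases, giving the stated $\Otilde(d^3)$ total.
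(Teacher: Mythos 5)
Your proof is correct and follows essentially the same route as the paper: the per-phase warmness bound via the ratio of normalizing constants and a rescaling/inclusion argument (your $Z_{i+1}/Z_i \le (\sigma_{i+1}/\sigma_i)^d = (1+1/d)^{d/2} \le \sqrt e$ is the same calculation, phrased with a different change of variables), the induction that $\log 2$-accuracy in $\eu R_\infty$ restores $2$-warmness, and a union bound over $i_* = \Otilde(d)$ phases. One small refinement of yours: you sum the per-phase costs $\sum_i \sigma_i^2 = (1+1/d)^{i_*}-1 = d-1$ exactly, whereas the paper more crudely bounds each phase by $\sigma_i^2 \le 1$ and multiplies by $i_* = \Otilde(d)$; both give $\Otilde(d^3 \log^6(L/\hat\eta))$ since the $\Otilde$ absorbs the extra $\log d$.
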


\begin{proof}

Note that we slightly modify the scheme, which is to take $\sigma_{i+1}^2 = \min(1, \sigma_i^2 (1+\frac{1}{d}))$.

\textbf{\# Inner phases:}
As $(1+d^{-1})^{d} \geq 2$ for any $d\geq 1$, it takes at most $d$ iterations to double $\sigma$. Thus, since we need to double $\sigma$ on the order of $\mc{O}(\log d)$ many times, the number of inner phases within Phase II is at most $i_*=\Otilde(d)$.

\textbf{Warmness:} By the construction of the algorithm, the following bound holds for all $x \in \bK$:
\begin{align*}
\frac{\D\mu_{i}(x)}{\D\mu_{i+1}(x)} & =\frac{\exp\bpar{-\frac{1}{2\sigma_{i}^{2}}\norm x^{2}}}{\exp\bpar{-\frac{1}{2\sigma_{i+1}^{2}}\norm x^{2}}}\,\frac{\int_{\bK}\exp\bpar{-\frac{1}{2\sigma_{i+1}^{2}}\norm x^{2}}\,\D x}{\int_{\bK}\exp\bpar{-\frac{1}{2\sigma_{i}^{2}}\norm x^{2}}\,\D x}\leq\frac{\int_{\bK}\exp\bpar{-\frac{1}{2\sigma_{i+1}^{2}}\norm x^{2}}\,\D x}{\int_{\bK}\exp\bpar{-\frac{1}{2\sigma_{i}^{2}}\norm x^{2}}\,\D x}\\
 & \leq\frac{\bpar{1+\frac{1}{d}}^{d/2}\int_{(1+\frac{1}{d})^{-1/2}\bK}\exp\bpar{-\frac{1}{2\sigma_{i}^{2}}\norm x^{2}}\,\D x}{\int_{\bK}\exp\bpar{-\frac{1}{2\sigma_{i}^{2}}\norm x^{2}}\,\D x}\leq\sqrt{e}\,.
\end{align*}
Furthermore, if at each previous phase, we have
\begin{equation}\label{eq:warmness-p2}
    \esssup_{\mu_i} \Bigl|\frac{\D \bar{\mu_i}}{\D \mu_i} - 1\Bigr| \leq 1\,,
\end{equation}
then this implies warmness of constant order between $\bmu_i$ and $\mu_{i+1}$:
\[
    \esssup_{\mu_{i+1}}\Bigl|\frac{\D \bar{\mu_i}}{\D\mu_{i+1}}\Bigr| \leq 2\sqrt{e}\,.
\]

\textbf{Final complexity:} For any fixed $i$, the complexity in each inner phase is given by substituting $M \leq 2\sqrt{e}$ and $\sigma^2_i \leq 1$ into~\eqref{eq:query_complexity}, to obtain a complexity of
\[
    \Otilde\bigl(d^2 \log^6 \frac{L}{\hat \eta}\bigr)\,.
\]
Here, we run the $\ps$ with $h_i^{-1} = \widetilde{\Theta}(d^2 \log \frac{\log L}{\hat \eta})$ and $n_i = \Otilde(d^2 \sigma^2 \log^2 \frac{L}{\hat \eta})$, to get $\eu R_\infty$ bounded by $\log 2$. Implicitly, this verifies the condition \eqref{eq:warmness-p2}.

Secondly, as $i_* = \Otilde(d)$, the total query complexity is bounded by the product of the worst-case complexity in each inner phase with the total number of inner phases, which is given by
\[
    \Otilde\bigl(d^3 \log^6 \frac{L}{\hat \eta}\bigr)\,.\qedhere
\]
\end{proof}

\subsubsection{Phase III}
\begin{lem}[Phase III]\label{lem:phase-iii-bounds}
With probability at least $1-\hat \eta j_*$ for $j_* = \Otilde(d)$, initial distribution $\bmu_{i_*}$ (given by from Lemma~\ref{lem:phase-ii-bounds}), and target distribution $\mu_{i_*+j_*} = \mc{N}(0, L^2 dI_d) \vert_{\bK}$, \emph{Phase II} outputs a sample with law $\bmu_{i_*+j_*}$ satisfying 
\[
    \esssup_{\mu_{i_*+j_*}} \Bigl|\frac{\D \bmu_{i_*+j_*}}{\D \mu_{i_*+j_*}} - 1\Bigr| \leq 1\,,
\]
Its oracle complexity is
\[
    \Otilde\bigl(L^2d^{3} \log^6 \frac{1}{\hat \eta}\bigr)\,, 
\]
using the $\ps$. The step size scheme for each inner phase is presented in the proof.
\end{lem}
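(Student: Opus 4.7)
The plan follows the three-step template of Lemma~\ref{lem:phase-ii-bounds}: bound the number of inner phases $j_*$, establish $\O(1)$ warmness of $\bmu_i$ with respect to $\mu_{i+1}$, and sum the per-phase oracle cost from~\eqref{eq:query_complexity}. The essential difference from Phase II is that the variance schedule now accelerates with $\sigma_i^2$, and the truncation $\bK = \K \cap B_{L\sqrt d}(0)$ is what keeps the warmness bounded in the large-variance regime.

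For the inner-phase count, I would reparametrize $u_i := 1/\sigma_i^2$. The recursion $\sigma_{i+1}^2 = \sigma_i^2\bpar{1+\sigma_i^2/(L^2 d)}$ becomes $u_{i+1} = u_i/\bpar{1+1/(L^2 d u_i)}$, which yields $u_i - u_{i+1} \in [1/(2L^2 d),\,1/(L^2 d)]$ whenever $\sigma_i^2 \in [1,L^2 d]$. Going from $u_0 = 1$ down to $u_{j_*} = 1/(L^2 d)$ thus requires $j_* \leq 2L^2 d$ steps, which is $\Otilde(d)$ after absorbing the polylog factor $L = C\log(1/\veps)$.

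The main obstacle is the warmness analysis. Setting $\beta := \sigma_i^{-2} - \sigma_{i+1}^{-2} = 1/(L^2 d\alpha)$ with $\alpha := \sigma_{i+1}^2/\sigma_i^2 \in [1,2]$, the density ratio factors as
\[
\frac{\D\mu_i(x)}{\D\mu_{i+1}(x)} = \exp\bpar{-\tfrac{1}{2}\norm{x}^2 \beta}\cdot\frac{Z_{i+1}}{Z_i}\,,\qquad Z_j := \int_\bK \exp\bpar{-\tfrac{1}{2\sigma_j^2}\norm{x}^2}\,\D x\,,
\]
with the exponential factor bounded by $1$. For the normalizing ratio, I would use the identity
\[
\frac{Z_{i+1}}{Z_i} = \E_{\mu_i}\bbrack{\exp\bpar{\tfrac{1}{2}\norm{X}^2 \beta}}\,,
\]
and invoke $\norm{X}^2 \leq L^2 d$ on $\bK$ to bound the exponent by $1/(2\alpha) \leq 1/2$, giving $Z_{i+1}/Z_i \leq \sqrt{e}$ and hence $\D\mu_i/\D\mu_{i+1} \leq \sqrt{e}$. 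Maintaining the invariant $\D\bmu_i/\D\mu_i \leq 2$ by running $\ps$ to accuracy $\log 2$ in $\eu R_\infty$ then gives $M \leq 2\sqrt{e} = \O(1)$ at the input of every inner phase. This bound is precisely what would fail on the untruncated $\K$ when $\sigma^2 \approx L^2 d$, so the truncation in Algorithm~\ref{alg:sps} is essential here, not a cosmetic simplification.

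Finally, applying~\eqref{eq:query_complexity} with $M = \O(1)$, the cost per inner phase is $\Otilde\bpar{d^2 \sigma_i^2 \log^6(L/\hat\eta)}$. Solving the continuum approximation $du/di \approx -1/(L^2 d)$ yields $\sigma_i^2 \approx 1/(1-i/(L^2 d))$, hence $\sum_{i=1}^{j_*} \sigma_i^2 \lesssim L^2 d\log(L^2 d) = \Otilde(L^2 d)$, so the total cost is $\Otilde\bpar{d^2 \cdot L^2 d \cdot \log^6(L/\hat\eta)} = \Otilde\bpar{L^2 d^3 \log^6(1/\hat\eta)}$, as claimed.
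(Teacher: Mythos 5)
Your proof is correct and arrives at the same $\Otilde(L^2 d^3 \log^6 \frac{1}{\hat \eta})$ bound as the paper, but organizes the phase counting and the complexity sum differently. The warmness step is identical in substance: both arguments exploit $\norm{x}^2 \le L^2 d$ on $\bK$ to bound $Z_{i+1}/Z_i \le \sqrt{e}$ --- you phrase it as $Z_{i+1}/Z_i = \E_{\mu_i}[\exp(\frac{\beta}{2}\norm{X}^2)]$ while the paper manipulates the exponential inside the integral, but it is the same inequality, and both correctly identify that the truncation to $\bK$ is what keeps the ratio $\O(1)$ once $\sigma^2$ is large. Where you diverge is the accounting. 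The paper partitions $[1, L^2 d]$ into $\log_2(L^2 d)$ doubling intervals, shows each uses at most $L^2 d/\sigma^2$ inner phases with $\sigma_j^2 \le 4\sigma^2$, and sums over parts. You instead reparametrize by $u = 1/\sigma^2$, under which the update becomes a near-arithmetic progression with decrement $u_i - u_{i+1} = 1/(L^2 d + \sigma_i^2) \in [\frac{1}{2L^2 d}, \frac{1}{L^2 d}]$; this gives $j_* \le 2 L^2 d$ directly, and recasts the total cost as the harmonic-type sum $\sum_i 1/u_i$. Both routes are valid and of comparable effort; the only rough edge in yours is the appeal to a ``continuum approximation'' for $\sum_i \sigma_i^2 \lesssim L^2 d \log(L^2 d)$, which should be made exact. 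A clean fix: since $\sigma_{i+1}^2/\sigma_i^2 \in [1,2]$, write
\[
\sum_i \sigma_i^2 \;=\; L^2 d \sum_i \Bpar{\frac{\sigma_{i+1}^2}{\sigma_i^2} - 1}
\;\le\; 2 L^2 d \sum_i \frac{\sigma_{i+1}^2 - \sigma_i^2}{\sigma_{i+1}^2}
\;\le\; 2 L^2 d \int_1^{L^2 d} \frac{\D s}{s}
\;=\; 2 L^2 d \log(L^2 d)\,,
\]
which is exactly the conclusion the paper's doubling-part bookkeeping delivers.
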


\begin{proof}
In this phase, we will first perform the analysis over each doubling and then aggregate over the doublings.

\textbf{\# Inner phases:} We first partition $[1, L^2d]$ by a sequence of doubling parts, where the terminal $\sigma^2$ is at least double of the initial $\sigma^2$ in each doubling part. Clearly, the number of doubling parts is $\log_2(L^2d)$.

Let $\sigma^2$ be an initial variance of a given doubling part. Since we have 
\[
    \Bigl(1+\frac{\sigma^2}{L^2 d}\Bigr)^{\frac{L^2 d}{\sigma^2}}\geq 2\quad \text{for all }d\geq 1 \text{ and } \sigma \leq L\sqrt{d}\,,
\]
the number of inner phases within the doubling part is at most $L^2d/\sigma^2\leq L^2d$. Therefore, the total number of inner phases during Phase III is $\Otilde(L^2d)$.

\textbf{Warmness:} Let $j \geq i_*$. By
the construction of the algorithm, for all $x \in \bK$, 
\[
\frac{\D\mu_{j}(x)}{\D\mu_{j+1}(x)} =\frac{\exp\bpar{-\frac{1}{2\sigma_{j}^{2}}\norm x^{2}}}{\exp\bpar{-\frac{1}{2\sigma_{j+1}^{2}}\norm x^{2}}}\,\frac{\int_{\bK}\exp\bpar{-\frac{1}{2\sigma_{j+1}^{2}}\norm x^{2}}\,\D x}{\int_{\bK}\exp\bpar{-\frac{1}{2\sigma_{j}^{2}}\norm x^{2}}\,\D x}\leq\frac{\int_{\bK}\exp\bpar{-\frac{1}{2\sigma_{j+1}^{2}}\norm x^{2}}\,\D x}{\int_{\bK}\exp\bpar{-\frac{1}{2\sigma_{j}^{2}}\norm x^{2}}\,\D x}\,.
\]
As $\norm{x}^2 \leq L^2 d$ on $\bK$, we have
\[
\exp\bpar{-\frac{1}{2\sigma_{j+1}^{2}}\norm x^{2}} =\exp\bpar{-\frac{1}{2\sigma_{j}^{2}}\norm x^{2}}\cdot\exp\bpar{\frac{1}{2(L^{2}d+\sigma_{j}^{2})}\norm x^{2}}\leq \sqrt{e} \exp\bpar{-\frac{1}{2\sigma_{j}^{2}}\norm x^{2}}\,.
\]
As a result,
\[
    \frac{\D\mu_{j}(x)}{\D\mu_{j+1}(x)} \leq \frac{\int_{\bK}\exp\bpar{-\frac{1}{2\sigma_{j+1}^{2}}\norm x^{2}}\,\D x}{\int_{\bK}\exp\bpar{-\frac{1}{2\sigma_{j}^{2}}\norm x^{2}}\,\D x} \leq \frac{\sqrt{e}\int_{\bK}\exp\bpar{-\frac{1}{2\sigma_{j}^{2}}\norm x^{2}}\,\D x}{\int_{\bK}\exp\bpar{-\frac{1}{2\sigma_{j}^{2}}\norm x^{2}}\,\D x}=\sqrt{e}\,.
\]
Furthermore, if at each previous phase, we have
\begin{align}\label{eq:warmness-p3}
    \esssup_{\mu_{j}}\Bigl|\frac{\D \bar{\mu_j}}{\D\mu_{j}} - 1\Bigr| \leq 1\,,
\end{align}
then this implies warmness of constant order
\[
    \esssup_{\mu_{j+1}}\Bigl|\frac{\D \bar{\mu_j}}{\D\mu_{j+1}}\Bigr| \leq 2\sqrt{e}\,.
\]

\textbf{Final complexity:}
We first bound the total query complexity of one doubling part. Let $\sigma^2$ be an initial variance of a given doubling part. For any intermediate variance $\sigma_j^2$ within the part, the query complexity of the $\ps$ with $h^{-1} = \widetilde{\Theta}(d^2\log^2\frac{L}\eta)$, iteration number $n = \Otilde(d^2\sigma_j^2 \log^2 \frac{L}{\hat\eta})$, and warmness $M\geq 2\sqrt{e}$ can be obtained from \eqref{eq:query_complexity} as 
\[
\Otilde\bpar{d^2\sigma_j^2\log^6\frac{L}{\hat\eta}}\,.
\]
This achieves $\log 2$-accuracy in $\eu R_\infty$ in each update, which also implicitly verifies~\eqref{eq:warmness-p3}.
Since within the doubling part, $\sigma_j^2 \leq 4\sigma^2$ (where $\sigma^2$ is the initial variance within the doubling part) and the number of inner phases is at most $L^2d/\sigma^2$, the total query complexity during Phase III, aggregating over all the doubling parts, is
\[
\log_2(L^2 d) \cdot \frac{L^2d}{\sigma^2}\cdot \Otilde\Bpar{d^2(4\sigma^2)\log^6\frac{L}{\hat\eta}}
= \Otilde\bpar{L^2d^3\log^6\frac{1}{\hat\eta}}\,,
\]
achieving $\log 2$-accuracy in $\eu R_\infty$ for the law of the final sample.
\end{proof}

\subsubsection{Phase IV}

\begin{lem}[Phase IV]\label{lem:phase-iv-bounds}
With probability at least $1-\hat \eta$, initial distribution $\bmu_{i_*+j_*}$ (given by from Lemma~\ref{lem:phase-iii-bounds}), and target distribution $\bar{\pi}\propto \ind_{\bK}$, \emph{Phase IV} outputs a sample with law $\nu$ satisfying 
\[
    \esssup_{\bar \pi} \Bigl|\frac{\D \nu}{\D \bar \pi} - 1\Bigr| \leq 
    \frac{\varepsilon}3\,,
\]
Its oracle complexity is
\[
    \Otilde\bigl(L^2 d^{3} \log^6 \frac{1}{\hat \eta\veps}\bigr)\,, 
\]
using the $\ps$ with $h^{-1} = \widetilde{\Theta}(d^2 \log \frac{ \log (L/\varepsilon)}{\hat \eta})$ and $\Otilde(L^2 d^3 \log^2 \frac{1}{\hat \eta \varepsilon})$ iterations.
\end{lem}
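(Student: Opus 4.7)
The strategy parallels the previous phases: first establish that the initial distribution $\bmu_{i_*+j_*}$ is $\O(1)$-warm with respect to the new target $\bar\pi \propto \ind_{\bK}$, and then invoke the uniform-sampling guarantee of Theorem~\ref{thm:ps-unif-Rinfty} with appropriate accuracy.

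\textbf{Step 1 (warmness from truncated Gaussian to uniform).}
By Lemma~\ref{lem:phase-iii-bounds}, $\bmu_{i_*+j_*}$ is $2$-warm with respect to $\mu_{i_*+j_*} = \mc N(0, L^2 d\, I_d)|_{\bK}$. Since $\bK \subseteq B_{L\sqrt d}(0)$, every $x \in \bK$ satisfies $\norm{x}^2 \leq L^2 d$, so for all $x \in \bK$,
\[
    \frac{\D \mu_{i_*+j_*}(x)}{\D \bar\pi(x)} = \frac{\vol(\bK)}{\int_{\bK} \exp\bpar{-\frac{1}{2 L^2 d}\norm{z}^2}\,\D z}\,\exp\Bpar{-\frac{\norm x^2}{2L^2 d}} \in [e^{-1/2},\,\sqrt e\,]\,.
\]
Composing these two bounds gives $\esssup_{\bar\pi} \frac{\D \bmu_{i_*+j_*}}{\D \bar\pi} \leq 2\sqrt e$, i.e.\ an $\O(1)$-warm start with respect to $\bar\pi$.

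\textbf{Step 2 (apply the uniform $\ps$ guarantee).}
Since $\bK \subseteq B_{L\sqrt d}(0)$, we have $\diam(\bK) \leq 2L\sqrt d$. Applying Theorem~\ref{thm:ps-unif-Rinfty} to the uniform target $\bar\pi$ with warmness $M = \O(1)$, diameter $D = \O(L\sqrt d)$, target accuracy $\veps/3$ in $\eu R_\infty$, and success probability $1-\hat\eta$, we run $\psunif$ (Algorithm~\ref{alg:prox-uniform}) with step size $h^{-1} = \widetilde\Theta\bpar{d^2 \log\frac{\log(L/\veps)}{\hat\eta}}$ for $n = \Otilde\bpar{d^2 D^2 \log^2\tfrac{1}{\hat\eta\veps}} = \Otilde\bpar{L^2 d^3 \log^2\tfrac{1}{\hat\eta\veps}}$ iterations. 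With probability at least $1-\hat\eta$, the algorithm does not fail, outputs a sample with law $\nu$ satisfying
\[
    \eu R_\infty(\nu \mmid \bar\pi) \leq \tfrac{\veps}{3}\,,
\]
and (by Lemma~\ref{lem:prox-sampler-unif-final}) equivalently $\esssup_{\bar\pi}\bigl|\frac{\D\nu}{\D\bar\pi}-1\bigr| \leq \veps/3$, using the query complexity bound of Theorem~\ref{thm:ps-unif-Rinfty}:
\[
    \Otilde\bpar{M\,d^2 D^2 \log^6 \tfrac{1}{\hat\eta \veps}} = \Otilde\bpar{L^2 d^3 \log^6 \tfrac{1}{\hat\eta\veps}}\,.
\]

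\textbf{Main obstacle.}
The only delicate point is Step 1: we cannot appeal to Theorem~\ref{thm:prox-sampler-truncated-gsn-final} directly because the target here is uniform, not Gaussian. Instead, we must explicitly verify that the variance $\sigma^2 = L^2 d$ has been chosen large enough relative to $\diam(\bK)$ so that the (approximately sampled) truncated Gaussian already serves as an $\O(1)$-warm start for $\bar\pi$, which is exactly what the tail truncation $\bK = \K \cap B_{L\sqrt d}(0)$ ensures. Once this is in hand, Theorem~\ref{thm:ps-unif-Rinfty} does the rest, and the final complexity follows by substituting $D^2 = \O(L^2 d)$.
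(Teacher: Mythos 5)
Your proposal is correct and follows essentially the same path as the paper's proof: you first bound $\tfrac{\D \mu_{i_*+j_*}}{\D\bar\pi}\le\sqrt e$ on $\bK$ (using $\exp(-\tfrac{1}{2L^2 d}\|x\|^2)\in[e^{-1/2},1]$ since $\bK\subseteq B_{L\sqrt d}(0)$), compose with the $2$-warmness of $\bmu_{i_*+j_*}$ inherited from Phase III to get $M\le 2\sqrt e$, and then invoke Theorem~\ref{thm:ps-unif-Rinfty} (the uniform, not Gaussian, $\ps$ guarantee) with $D=\O(L\sqrt d)$ and accuracy $\veps/3$ to land on the stated iteration count, step size, and $\Otilde(L^2 d^3 \log^6\tfrac{1}{\hat\eta\veps})$ query complexity. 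Your additional remark about why the Gaussian theorem cannot be used directly, and your explicit equivalence via Lemma~\ref{lem:prox-sampler-unif-final} between $\eu R_\infty(\nu\mmid\bar\pi)\le\veps/3$ and $\esssup_{\bar\pi}\bigl|\tfrac{\D\nu}{\D\bar\pi}-1\bigr|\le\veps/3$, are both consistent with what the paper does; no gaps.
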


\begin{proof}
\textbf{\# Inner phases:} There is only a single update in this phase.

\textbf{Warmness:}
    Let $\bar{\pi}$ be the uniform distribution over $\bK$.
First, bounding $\exp(-\frac{1}{2L^{2}d}\norm x^{2})\leq 1$ and, since $\bK \subseteq B_{L\sqrt{d}}(0)$,
\[
    \int_{\bK}\exp\bpar{-\frac{1}{2L^{2}d}\norm x^{2}}\,\D x \geq \vol(\bK) \exp(-\frac{L^2d}{2L^2 d}) = \vol(\bK)/\sqrt{e}\,.
\]
Then, we can deduce
\[
\frac{\D \mu_{i_* + j_*}(x)}{\D \bar\pi(x)} = \frac{\D \mu_{i_* + j_*}(x)}{1/\vol(\bK)}  =\vol(\bK)\cdot\frac{\exp\bpar{-\frac{1}{2L^{2}d}\norm x^{2}}}{\int_{\bK}\exp\bpar{-\frac{1}{2L^{2}d}\norm x^{2}}\,\D x}\leq\sqrt{e}\,.
\]
As before, since  we have
\[
    \esssup_{\mu_{i_* + j_*}}\Bigl|\frac{\D \bmu_{i_* + j_*}}{\D\mu_{i_* + j_*}} - 1\Bigr| \leq 1\,,
\]
then this implies warmness of constant order
\[
    \esssup_{\mu_{i_* + j_*}}\Abs{\frac{\D \bmu_{i_* + j_*}}{\D\bar\pi}} \leq 2\sqrt{e}\,.
\]

\textbf{Final complexity:} Here, using Theorem~\ref{thm:ps-unif-Rinfty} (instead of the Gaussian results) with $M = 2\sqrt{e}$ and $D = 2L\sqrt{d}$, we obtain the query complexity of
\[
    \Otilde\bigl(L^2 d^3 \log^6 \frac{1}{\hat \eta \varepsilon}\bigr)\,.\qedhere
\]
\end{proof}

We now put together these four lemmas to prove Theorem~\ref{thm:main-result}.
\begin{proof}[Proof of Theorem~\ref{thm:main-result}]
    Given the start $\mu_0$, we first apply Lemma~\ref{lem:phase-i-bounds} to produce $\bmu_1$. Then apply Lemma~\ref{lem:phase-ii-bounds} to produce $\bmu_{i_*}$, Lemma~\ref{lem:phase-iii-bounds} to produce $\bmu_{i_* + j_*}$, and finally Lemma~\ref{lem:phase-iv-bounds} to produce $\nu$. Their complexity is dominated by that of Lemma~\ref{lem:phase-iv-bounds}, which is
    \[
        \Otilde\bigl(L^2 d^3 \log^6 \frac{1}{\hat \eta \varepsilon}\bigr)
        =\Otilde\bpar{C^2 d^3 \log^8 \frac{1}{\hat \eta \veps}}\,,
    \]
    with the guarantee $\eu R_\infty(\nu \mmid \bar{\pi}) \leq \frac{\varepsilon}{3}$.
    
    From our choice of $L$, it follows that $\eu R_\infty(\bar\pi \mmid \pi) \leq \log(1+\frac{2\veps}3)$ due to
    \[
        \sup_{x\in\mc K}\frac{\bar{\pi}(x)}{\pi(x)}=\frac{\vol(\mc K)}{\vol(\bK)}\leq1+\frac{2\veps}3\,.
    \]
    Therefore,
    \begin{align*}
        \eu R_{\infty}(\nu \mmid \pi) 
        &= \esssup_{\pi} \log \frac{\D \nu}{\D \pi} 
        \leq \eu R_\infty (\nu \mmid \bar{\pi})  + \eu R_\infty (\bar{\pi}\mmid \pi)\\
        &\leq \frac{\veps}3 + \log\bpar{1 + \frac{2\veps}3}\leq \veps\,.
    \end{align*}

    Secondly, we need to quantity the probability of successfully generating a sample. This is given, by a union bound, as $1-\hat \eta (2+i_* + j_*)$. With $\hat \eta$ chosen according to~\eqref{eq:hat-eta-def}, $\sps$ has at least $1-\eta$ probability of success. Putting all these together concludes the proof.
\end{proof}

\section{Conclusion}
We have presented $\sps$, an algorithm which achieves $\varepsilon$-closeness in $\eu R_\infty$ with $\O(d^3 \polylog \frac{1}{\varepsilon})$ query complexity. Here we note some possible future directions for research.
\begin{itemize}
    \item In this work, we showed that it is possible to extend the guarantees for $\ino$ for Gaussians of the form $\mc N(0, \sigma^2 I_d)$. Naturally, one asks whether it is possible to sample from general distributions of the type $e^{-f}|_{\mc K}$, when $f$ has some nice properties? The applications of this problem are innumerable, and the techniques required are far from trivial. It is likely that one would need to devise a different annealing scheme in order to control the initial warmness parameter, or find a new strategy which bypasses the need for annealing entirely.

    \item This work assumed the membership oracle model for $\mc K$, but it is interesting to consider alternative oracle models. For instance, one can use more features of the geometry, by assuming access to a self-concordant barrier or a Riemannian metric on $\mc K$. Alternatively, one can assume access to a stronger oracle such as a separation oracle, which returns not just a binary response to if $x \in \mc K$, but when $x \not\in \mc K$ also gives a hyperplane separating $x$ and $\mc K$. It is as yet unknown if such a model can improve the query complexity, even for uniform sampling. It is also possible that a different annealing strategy would be beneficial in this context.
\end{itemize}

\section*{Acknowledgements}
We thank Sinho Chewi for helpful feedback.
YK was supported in part by NSF awards CCF-2007443 and CCF-2134105.
MSZ was supported by NSERC through the CGS-D program. 

\bibliographystyle{alpha}
\bibliography{main}

\end{document}